\newtheorem{theorem}{Theorem}
\newtheorem{corollary}[theorem]{Corollary}
\newtheorem{fact}[theorem]{Fact}
\newtheorem{lemma}[theorem]{Lemma}
\newcommand{\comment}[1]{}
\begin{document}
\title{Metric $1$-median selection: Query complexity vs.\ approximation
ratio}

\author{
Ching-Lueh Chang \footnote{Department of Computer Science and
Engineering,
Yuan Ze University, Taoyuan, Taiwan. Email:
clchang@saturn.yzu.edu.tw}
\footnote{Innovation Center for Big Data and Digital Convergence,
Yuan Ze University, Taoyuan, Taiwan.}
}


\maketitle

\begin{abstract}
Consider the
problem
of finding
a point in a metric space $(\{1,2,\ldots,n\},d)$
with the minimum average distance to other points.
We show that this problem has no deterministic $o(n^{1+1/(h-1)})$-query
$(2h-\Omega(1))$-approximation algorithms for any constant
$h\in\mathbb{Z}^+\setminus\{1\}$.
\end{abstract}


\section{Introduction}

The {\sc metric $1$-median} problem asks for a point in
an $n$-point
metric space
with the minimum average distance to other points.
It
has a Monte-Carlo $O(n/\epsilon^2)$-time $(1+\epsilon)$-approximation
algorithm for all $\epsilon>0$~\cite{Ind99, Ind00}.
In
$\mathbb{R}^D$,
Kumar et al.~\cite{KSS10}
give
a Monte-Carlo $O(2^{\text{poly}(1/\epsilon)}D)$-time
$(1+\epsilon)$-approximation algorithm
for $1$-median selection and another algorithm for $k$-median
selection, where $D\ge 1$ and $\epsilon>0$.
Guha et al.~\cite{GMMMO03}
give streaming approximation algorithms for
$k$-median
selection
in metric spaces.

Chang~\cite{Cha13}, Wu~\cite{Wu14}
and Chang~\cite{Cha15median}
show that {\sc metric $1$-median} has a deterministic nonadaptive
$O(n^{1+1/h})$-time
$(2h)$-approximation algorithm for all constants $h\in\mathbb{Z}^+\setminus\{1\}$.
Furthermore,
Chang~\cite{Cha14arXiv}
shows the nonexistence of deterministic $o(n^2)$-time
$(4-\Omega(1))$-approximation
algorithms for {\sc metric $1$-median}.
This paper generalizes his result to show that {\sc metric $1$-median}
has no deterministic $o(n^{1+1/(h-1)})$-query $(2h-\Omega(1))$-approximation
algorithms for any constant $h\in\mathbb{Z}^+\setminus\{1\}$.
Combining
our result with
an existing
upper bound~\cite{Wu14, Cha15median},
{\small 
\begin{eqnarray}
&&
\min\left\{c\ge1\mid \text{{\sc metric $1$-median} has a deterministic
$O(n^{1+\epsilon})$-query $c$-approx.\ alg.}\right\}\nonumber\\
&=&\min\left\{c\ge1\mid \text{{\sc metric $1$-median} has a deterministic
$O(n^{1+\epsilon})$-time $c$-approx. alg.}\right\}\nonumber\\
&=&2\left\lceil\frac{1}{\epsilon}\right\rceil\nonumber
\end{eqnarray}
}
for all constants $\epsilon\in(0,1)$.
That is, we determine the best approximation ratio of
deterministic $O(n^{1+\epsilon})$-query (resp., $O(n^{1+\epsilon})$-time)
algorithms for {\em all} $\epsilon\in(0,1)$.

As in the previous lower bounds
for deterministic algorithms~\cite{Cha14arXiv, Cha12},
we use
an
adversarial method.
Roughly speaking, our proof proceeds as follows:
\begin{enumerate}[(i)]
\item Design an adversary {\sf Adv} for answering
the distance queries of any deterministic
algorithm $A$ with query complexity $q(n)=o(n^{1+1/(h-1)})$.
\item\label{itemlargedistanceforalgorithmoutput} Show that $A$'s output has a large average distance to other points,
according to
{\sf Adv}'s answers to $A$.
\item\label{qualityoftheoptimalpointitem}
Construct a distance function with respect to which a certain point
$\hat{\alpha}$
has a small average distance to other points.
\item\label{itemconstructionofthefinalmetric}
Construct the final distance function $d(\cdot,\cdot)$
similar to that in item~(\ref{qualityoftheoptimalpointitem}).
\item Show that $d$ is a metric.
\item\label{itemconsistencyofdistancemetricwiththeanswersoftheadversary}
Show the consistency of $d(\cdot,\cdot)$
with
{\sf Adv}'s
answers.
\item\label{itemputtingthingstogether}
Compare
$\hat{\alpha}$
in item~(\ref{qualityoftheoptimalpointitem})
with $A$'s output
to establish our lower bound
on $A$'s approximation ratio.
\end{enumerate}

Central to our constructions
are two
graph sequences, $\{H^{(i)}\}_{i=0}^{q(n)}$
and $\{G^{(i)}\}_{i=0}^{q(n)}$ in Sec.~\ref{mainresultsection},
that
are unseen in previous lower bounds~\cite{MP04, Cha12, Cha14arXiv}.
Like
in
\cite{Cha14arXiv},
we need a small set $S$ of points whose distances to other points
are answered as large values during $A$'s execution, and yet we
assign a small value to the distances from a certain point $\hat{\alpha}\in S$
to many other points
in item~(\ref{qualityoftheoptimalpointitem}).

This paper is organized as follows.
Sec.~\ref{definitionssection}
introduces the terminologies.
Sec.~\ref{mainresultsection} proves our main theorem that {\sc metric $1$-median}
has no deterministic $o(n^{1+1/(h-1)})$-query $(2h-\Omega(1))$-approximation
algorithms for any constant $h\in\mathbb{Z}^+\setminus\{1\}$.
In particular,
Secs.~\ref{outputofthealgorithmisbadsubsection},~\ref{agoodpointsubsection},~\ref{consistencywithametricsubsection}~and~\ref{puttingthingstogethersubsection}
correspond to
items~(\ref{itemlargedistanceforalgorithmoutput}),~(\ref{qualityoftheoptimalpointitem}),~(\ref{itemconstructionofthefinalmetric})--(\ref{itemconsistencyofdistancemetricwiththeanswersoftheadversary})~and~(\ref{itemputtingthingstogether})
above,
respectively.

\comment{ 
it first presents an adversary for answering
the
distance queries of any deterministic
$o(n^{1+1/(h-1)})$-query algorithm $A$.
Then Sec.~\ref{outputofthealgorithmisbadsubsection}
shows that $A$'s output has a large average distance to other points,
where all distances are as answered by the adversary.
Sec.~\ref{agoodpointsubsection} constructs a distance function with
respect to which a certain point
$\hat{\alpha}$
has a small average distance to other points.
Roughly speaking,
Sec.~\ref{consistencywithametricsubsection}
constructs the final metric $d$ with respect to which
the conclusions of
Secs.~\ref{outputofthealgorithmisbadsubsection}--\ref{agoodpointsubsection}
remain true.
Finally, Sec.~\ref{puttingthingstogethersubsection}
puts everything together for our results.
}

\section{Definitions}\label{definitionssection}

A
finite
metric space $(M,d)$ is a
finite
set $M$ endowed with a function
$d\colon M^2\to[0,\infty)$
such that
\begin{itemize}
\item $d(x,x)=0$,
\item $d(x,y)>0$ if $x\neq y$,
\item $d(x,y)=d(y,x)$, and
\item $d(x,y)+d(y,z)\ge d(x,z)$
\end{itemize}
for all $x$, $y$, $z\in M$~\cite{Rud76}.
For
all
$c\ge1$,
a point $z\in M$ is
said to be
a $c$-approximate $1$-median of $(M,d)$
if
$$\sum_{x\in M}\,d\left(z,x\right)\le c\cdot\sum_{x\in M}\,d\left(y,x\right)$$
for all $y\in M$.
For convenience,
$[n]\stackrel{\text{def.}}{=}\{1,2,\ldots,n\}$.

For
deterministic
algorithms
$A$ and
${\cal O}\colon\{1,2,\ldots,n\}^2\to\mathbb{R}$,
denote by
$A^{\cal O}(1^n)$
the execution of $A$ with oracle
access to
$\cal O$
and with input $1^n$, where $n\in\mathbb{N}$.
As the input to $A$ will be $1^n$ throughout this paper,
abbreviate
$A^{\cal O}(1^n)$ as $A^{\cal O}$.
If
$A^d$
outputs a $c$-approximate $1$-median
of $([n],d)$
for each finite metric space $([n],d)$,
then
$A$ is said to be $c$-approximate for {\sc metric $1$-median},
where $c\ge1$.

\begin{fact}[\cite{Cha13, Cha15median, Wu14}]
\label{existingupperbound}
For each constant $h\in\mathbb{Z}^+\setminus\{1\}$,
{\sc metric $1$-median}
has a deterministic nonadaptive $O(n^{1+1/h})$-time
$(2h)$-approximation algorithm.
\end{fact}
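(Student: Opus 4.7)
The plan is to exhibit an explicit deterministic nonadaptive scheme that makes $O(n^{1+1/h})$ queries and returns a $(2h)$-approximate $1$-median. I would fix a structured query graph $H$ on vertex set $[n]$: for each $x\in[n]$, a predetermined ``pivot set'' $N(x)\subseteq[n]$ of size $O(n^{1/h})$ is queried, contributing $O(n^{1+1/h})$ distance queries in total. The pivot sets are arranged hierarchically (e.g., as $h$ nested layers whose sizes form a geometric progression with ratio $n^{1/h}$, where each layer of a point's ``reach'' sits inside the next) so that any two vertices of $[n]$ are connected by a path in $H$ of length at most $h$ built out of queried edges. The algorithm then computes a proxy score for every candidate $y\in[n]$ from the queried distances alone and outputs a score minimizer $\hat y$.

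For the approximation analysis, let $z^\ast\in[n]$ be the true $1$-median and $\mathrm{OPT}=\sum_{x\in[n]} d(z^\ast,x)$. The key device is a cascading application of the triangle inequality along the $H$-path from $\hat y$ to each $x$: each of the at most $h$ hops can be re-routed through a queried pivot, and the optimality of $\hat y$ under the score replaces the ``true'' distance at each hop by a value bounded via $z^\ast$. Each such rerouting costs a factor of $2$ (the standard ``up-and-down'' step in the triangle inequality), and $h$ hops are used, yielding $\sum_{x\in[n]} d(\hat y,x)\le 2h\cdot\mathrm{OPT}$.

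The main obstacle is choosing the hierarchy of pivot sets so that three constraints are simultaneously met: (i) the query budget stays at $O(n^{1+1/h})$, which forces the layer sizes to telescope tightly; (ii) the scoring rule can be evaluated entirely from the queried distances, so that the algorithm is genuinely nonadaptive; and (iii) for every candidate there is a short enough pivot path to the optimum, so that the $h$-fold triangle-inequality loss is not inflated. Without randomness one cannot merely sample a pivot close to $z^\ast$, so the guarantee must come from the combinatorial structure of $H$ — this is the step that demands the layered design of \cite{Cha13,Wu14,Cha15median} rather than a single-level pivot set.
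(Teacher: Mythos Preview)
The paper does not prove this statement at all: it is recorded as a Fact with citations to \cite{Cha13,Wu14,Cha15median} and is invoked only as a black box in the proof of Theorem~\ref{maintheorem}. So there is no ``paper's own proof'' to compare against; the relevant comparison is to the cited constructions.

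Your high-level plan is indeed in the spirit of those references: one identifies $[n]$ with $[n^{1/h}]^h$, queries only pairs that differ in a single coordinate (giving $O(n\cdot h\cdot n^{1/h})=O(n^{1+1/h})$ nonadaptive queries), defines a score from those queried distances, and outputs the score minimizer. Where your sketch is genuinely incomplete is the accounting that turns ``$h$ hops, each costing a factor of $2$'' into the bound $2h\cdot\mathrm{OPT}$. That step is not automatic from the mere existence of length-$h$ paths in $H$. After writing $d(p_i,p_{i+1})\le d(z^\ast,p_i)+d(z^\ast,p_{i+1})$ and summing over all $x\in[n]$, you need that, for each level $i$, the multiset $\{p_i(x):x\in[n]\}$ covers $[n]$ uniformly (each point the same number of times), so that $\sum_x d(z^\ast,p_i(x))$ is exactly a multiple of $\mathrm{OPT}$ rather than something larger. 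This balancing is precisely what the coordinate-wise construction guarantees and what an arbitrary ``layered'' pivot scheme need not; your final paragraph correctly flags this as the main obstacle but does not supply the construction or the counting argument that resolves it. Until that is filled in, the claimed inequality $\sum_x d(\hat y,x)\le 2h\cdot\mathrm{OPT}$ is asserted rather than proved.
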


\comment{ 
The following facts are well-known.

\begin{fact}
For
any
positively-weighted
undirected graph $G=(V,E,w)$ and $c>0$,
the function $d\colon V\times V\to [\,0,\infty\,)$
defined by
$(x,y)\mapsto \min\{d_G(x,y),c\}$
is a metric.
\end{fact}
}

A weighted undirected graph $G=(V,E,w)$
has a finite vertex set $V$, an edge set $E$ and a weight function
$w\colon E\to(0,\infty)$, where each edge is an unordered pair
of distinct vertices in $V$.
If $w\colon Y\to(0,\infty)$ for a superset $Y$ of $E$,
interpret
$(V,E,w)$
simply
as $(V,E,w|_E)$,
where
$w|_E$ denotes the restriction of $w$ on $E$.
For all $v\in V$,
let
$$N_G(v)\stackrel{\text{def.}}{=}\left\{u\in V\mid (u,v)\in E\right\}$$
and $\text{\rm deg}_G(v)\stackrel{\text{def.}}{=}|N_G(v)|$.
For all $S\subseteq V$,
$N_G(S)\stackrel{\text{def.}}{=}\bigcup_{v\in S}\,N_G(v)$.
For all $s$, $t\in V$,
an $s$-$t$ path $P$ in $G$
is a sequence
$\{v_i\in V\}_{i=0}^k$ satisfying
$k\in\mathbb{N}$,
$v_0=s$,
$v_k=t$
and
$(v_i,v_{i+1})\in E$ for all $i\in\{0,1,\ldots,k-1\}$.
Its weight (or length)
is $w(P)\stackrel{\text{def.}}{=}\sum_{i=0}^{k-1}\,w(v_i,v_{i+1})$.\footnote{$w(P)$
is
a common and convenient
abuse
of notation.}
The shortest $s$-$t$ distance in $G$
is
$$d_{G}(s,t)=\inf\left\{w(P)\mid
\text{$P$ is an $s$-$t$ path in $G$}\right\},$$
where $s$, $t\in V$.
So $d_G(s,t)=\infty$ if $G$ has no $s$-$t$ paths.
Note that we allow only positive weights, i.e.,
$\mathop{\mathrm{Im}}(w)\subseteq (0,\infty)$.
So a shortest $s$-$t$ path must be simple,
i.e., it does not repeat vertices.
If $w\equiv 1$,
abbreviate $(V,E,w)$ as $(V,E)$ and
call it an unweighted graph.

The following fact is well-known.

\begin{fact}\label{basicgraphtheoryfact}
For each undirected graph $G=(V,E)$,
$$\sum_{v\in V}\,\text{\rm deg}_G(v)=2\cdot|E|.$$
\end{fact}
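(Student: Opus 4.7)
The plan is to prove this well-known identity by a standard double-counting argument on vertex-edge incidences. I would define the incidence set $I = \{(v,e) \in V \times E : v \in e\}$ and compute $|I|$ in two different ways, then equate the results.

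First, I would count $|I|$ by summing over vertices. For each $v \in V$, the number of pairs $(v,e) \in I$ equals the number of edges incident to $v$, which is exactly $\text{\rm deg}_G(v) = |N_G(v)|$ by the definitions given earlier in the excerpt. Summing over $v \in V$ yields $|I| = \sum_{v \in V} \text{\rm deg}_G(v)$. Second, I would count $|I|$ by summing over edges. By the paper's stipulation that each edge is an unordered pair of two \emph{distinct} vertices in $V$, every $e \in E$ contributes exactly two pairs $(v,e) \in I$ (one for each of its endpoints), so $|I| = 2|E|$. Equating the two expressions gives the claim.

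There is essentially no obstacle here; the only point deserving mention is that the definition of an undirected graph in the excerpt disallows self-loops (edges are \emph{distinct} unordered pairs), which is precisely what guarantees each edge contributes $2$ rather than $1$ to the incidence count. An equally simple alternative plan is induction on $|E|$: the base case $|E| = 0$ is immediate since all degrees and $|E|$ vanish, and removing an edge $\{u,v\}$ decreases $\text{\rm deg}_G(u)$ and $\text{\rm deg}_G(v)$ each by $1$ and $|E|$ by $1$, preserving the identity under the inductive hypothesis.
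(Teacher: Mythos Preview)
Your proposal is correct; the double-counting of incidences is the standard textbook proof, and your remark that the paper's definition of edges as unordered pairs of \emph{distinct} vertices rules out self-loops is exactly the point that makes each edge contribute $2$. The paper itself does not supply a proof of this fact at all---it simply states it as ``well-known''---so there is nothing to compare against, and your argument would serve perfectly well if a proof were required.
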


For a predicate $P$, let $\chi[P]=1$ if $P$ is true and
$\chi[P]=0$ otherwise.
The following fact
about geometric series
is not hard to see.

\begin{fact}\label{geometricseriesbound}
For all $r\ge 2$ and $m\in\mathbb{N}$,
$$\sum_{k=0}^m\,r^k\le 2r^m.$$
\end{fact}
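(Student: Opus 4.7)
The plan is to use the standard closed form for a finite geometric series, or equivalently to reindex and compare with the infinite sum. I would first handle the trivial case $m=0$, where both sides equal $1$ and $2$ respectively. For $m\ge 1$, since $r\ge 2>1$, the closed form gives
$$\sum_{k=0}^m r^k \;=\; \frac{r^{m+1}-1}{r-1},$$
and it suffices to verify
$$\frac{r^{m+1}-1}{r-1}\;\le\; 2r^m.$$
Clearing the positive denominator $r-1$, this is equivalent to $r^{m+1}-1\le 2r^{m+1}-2r^m$, i.e.\ $2r^m\le r^{m+1}+1$. The hypothesis $r\ge 2$ gives $r^{m+1}=r\cdot r^m\ge 2r^m$, so $r^{m+1}+1>2r^m$, which closes the argument.

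An alternative I would consider, and perhaps prefer for brevity, is to factor out the largest term: $\sum_{k=0}^m r^k = r^m\sum_{j=0}^m r^{-j}$. Bounding this by the geometric series gives $\sum_{j=0}^m r^{-j}\le \sum_{j=0}^{\infty} r^{-j}=r/(r-1)$, and $r\ge 2$ yields $r/(r-1)\le 2$. Multiplying through by $r^m$ delivers the claim in one line, without a case split on $m$.

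There is no real obstacle here; the only subtlety is making sure the hypothesis $r\ge 2$ (rather than merely $r>1$) is invoked at the point where one bounds $r/(r-1)$ by $2$, since this is exactly where the constant $2$ in the conclusion comes from.
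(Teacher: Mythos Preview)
Your argument is correct; both the closed-form manipulation and the factoring-out approach are valid, and you correctly identify where the hypothesis $r\ge 2$ enters. The paper itself does not supply a proof of this fact at all---it is stated as ``not hard to see'' and left to the reader---so there is nothing to compare your approach against.
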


\section{Query complexity vs.\ approximation ratio}\label{mainresultsection}

Throughout this section,
\begin{itemize}
\item
$n\in\mathbb{Z}^+$,
\item $\delta\in(0,1)$ and $h\in\mathbb{Z}^+\setminus\{1\}$ are constants (i.e.,
they are independent of $n$),
\item $A$ is a deterministic
$o(n^{1+1/(h-1)})$-query
algorithm for {\sc metric $1$-median},
and
\item
$S=[\lfloor\delta n\rfloor]\subseteq [n]$.
\end{itemize}

All
pairs in $[n]^2$
are assumed
to be
unordered in this section.
So,
e.g.,
$(1,2)\in\{2\}\times[n]$.
By padding
at most $n-1$
dummy
queries, assume
without loss of generality
that $A$
will have queried
for the
distances between its output and all other points
when
halting.
Denote
$A$'s query complexity by
$$q(n)=o\left(n^{1+1/(h-1)}\right).$$
Without loss of generality,
forbid
making the same query twice or querying
for the distance
from a point to itself,
where the queries for $d(x,y)$ and $d(y,x)$
are considered to be the same for $x$, $y\in [n]$.
Furthermore,
let $n$ be sufficiently large
to satisfy
\begin{eqnarray}
q(n)&\le& \delta n^{1+1/(h-1)},\label{tediouscondition1}\\
\delta n^{1/(h-1)}&>&3,\label{tediouscondition2}\\
\frac{2 q(n)}{|S|-1}&\le&\delta n^{1/(h-1)}.\label{tediouscondition3}
\end{eqnarray}


\comment{ 
\begin{eqnarray}
w\left(u,v\right)
\stackrel{\text{def.}}{=}
\left\{
\begin{array}{ll}
2, & \text{if $u\notin S$ and $v\notin S$,}\\
2h-1, & \text{otherwise}.
\end{array}
\right.
\label{weightfunction}
\end{eqnarray}
}
Define two unweighted
undirected graphs $G^{(0)}$
and $H^{(0)}$ by
\begin{eqnarray}
E_G^{(0)}
&\stackrel{\text{def.}}{=}&
\left\{\left(u,v\right)
\mid \left(u,v\in [n]\setminus S\right)
\land\left(u\neq v\right)\right\},
\label{completegraphedgeset}\\
G^{(0)}
&\stackrel{\text{def.}}{=}&\left([n],E_G^{(0)}\right),
\label{initiallargegraph}\\
E_H^{(0)}
&\stackrel{\text{def.}}{=}&
\emptyset,
\label{initiallymarkededgeset}\\
H^{(0)}
&\stackrel{\text{def.}}{=}&
\left([n],E_H^{(0)}\right).\label{initiallymarkedgraph}
\end{eqnarray}

\comment{ 
\begin{lemma}
For all $(u_1,u_2,\ldots,u_h)$, $(v_1,v_2,\ldots,v_h)\in T^h$,
$$
d_{H^{(0)}}\left(
\left(u_1,u_2,\ldots,u_h\right), \left(v_1,v_2,\ldots,v_h\right)
\right)\le 2h.
$$
\end{lemma}
\begin{proof}
By equations~(\ref{initiallymarkededgeset})--(\ref{initiallymarkedgraph}),
$H^{(0)}$ has the path
$P$ whose $i$th vertex is
$$\left(u_1,u_2,\ldots,u_{h+1-i},v_{h-i+2},v_{h-i+3},\ldots,v_h\right)$$
for $i\in[h+1]$.
Clearly, $P$ has exactly $h$ edges, each of weight $2$ w.r.t.\ $w$.
\end{proof}
}

\begin{figure}
\begin{algorithmic}[1]
\STATE Let $E_G^{(0)}$, $G^{(0)}$, $E_H^{(0)}$ and $H^{(0)}$
be as in equations~(\ref{completegraphedgeset})--(\ref{initiallymarkedgraph});
\FOR{$i=1$, $2$, $\ldots$, $q(n)$}
  \STATE Receive the $i$th query of $A$, denoted $(a_i,b_i)$;
  \IF{$d_{G^{(i-1)}}(a_i,b_i)\le h$}
    \STATE Find a shortest
    $a_i$-$b_i$ path $P_i$ in $G^{(i-1)}$;
    \STATE $E_H^{(i)}\leftarrow E_H^{(i-1)}\cup \{e\mid \text{$e$ is an edge on $P_i$}\}$;
    \STATE $H^{(i)}\leftarrow ([n],E_H^{(i)})$;
    \STATE $E_G^{(i)}\leftarrow E_G^{(i-1)}
\setminus\{(u,v)\in E_G^{(i-1)}\setminus E_H^{(i)}\mid
(\text{deg}_{H^{(i)}}(u)\ge \delta n^{1/(h-1)}-2)
\lor(\text{deg}_{H^{(i)}}(v)\ge \delta n^{1/(h-1)}-2)
\}
$;
    \STATE $G^{(i)}\leftarrow ([n],E_G^{(i)})$;
  \ELSE
    \STATE $E_H^{(i)}\leftarrow E_H^{(i-1)}$;
    \STATE $H^{(i)}\leftarrow ([n],E_H^{(i)})$;
    \STATE $E_G^{(i)}\leftarrow E_G^{(i-1)}$;
    \STATE $G^{(i)}\leftarrow ([n],E_G^{(i)})$;
\ENDIF
  \STATE $Q^{(i)}\leftarrow ([n],\{(a_j,b_j)\mid
j\in[i]\})$;
  \STATE Output
$\min\{d_{H^{(i)}}(a_i,b_i),
h-(1/2)\cdot\chi[\exists v\in\{a_i,b_i\},\,
(v\in S)\land (\text{deg}_{Q^{(i)}}(v)\le
\delta n^{1/(h-1)})]\}$
as the answer to the $i$th query
of $A$;
\ENDFOR
\end{algorithmic}
\caption{Algorithm {\sf Adv} for answering $A$'s queries}
\label{answeringalgorithm}
\end{figure}

Algorithm
{\sf Adv} in Fig.~\ref{answeringalgorithm}
answers
$A$'s
queries.
In particular,
for
all
$i\in[q(n)]$,
the
$i$th iteration of the loop of {\sf Adv}
answers the $i$th query of $A$, denoted $(a_i,b_i)\in [n]^2$.
It
constructs three unweighted undirected graphs,
$G^{(i)}=([n],E_G^{(i)})$,
$H^{(i)}=([n],E_H^{(i)})$ and $Q^{(i)}$.
As $G^{(i-1)}$
is unweighted for all $i\in[q(n)]$,
$P_i$ in line~5 of {\sf Adv} is an $a_i$-$b_i$ path
in $G^{(i-1)}$ with the minimum number of edges.
By line~16 of {\sf Adv}, the edges of $Q^{(i)}$
are precisely the first $i$ queries of $A$.

\begin{lemma}\label{chainofgraphs}
$$
E_H^{(0)}\subseteq E_H^{(1)}\subseteq \ldots\subseteq E_H^{(q(n))}
\subseteq E_G^{(q(n))}\subseteq E_G^{(q(n)-1)}\subseteq
\ldots\subseteq
E_G^{(0)}.
$$
\comment{
$$
H^{(0)}\subseteq H^{(1)}\subseteq \ldots\subseteq H^{(q(n))}
\subseteq G^{q(n)}\subseteq G^{q(n)-1}\subseteq G^{(0)}.
$$
}
\end{lemma}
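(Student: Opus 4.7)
The plan is to prove all three parts of the chain by a single induction on $i\in\{0,1,\ldots,q(n)\}$, establishing simultaneously the three invariants $E_H^{(i-1)}\subseteq E_H^{(i)}$, $E_G^{(i)}\subseteq E_G^{(i-1)}$, and the crucial cross-containment $E_H^{(i)}\subseteq E_G^{(i)}$. The two outer chains are essentially immediate from inspecting the algorithm, while the middle inclusion $E_H^{(q(n))}\subseteq E_G^{(q(n))}$ is the one that carries real content and has to be carried along in the induction.

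For the base case, note that $E_H^{(0)}=\emptyset$ trivially lies inside $E_G^{(0)}$. For the inductive step at iteration $i$, I would split into the two branches of the \textsf{if} statement in \textsf{Adv}. In the \textsf{else} branch (lines~11--14), nothing is added to $E_H$ and nothing is removed from $E_G$, so $E_H^{(i)}=E_H^{(i-1)}$ and $E_G^{(i)}=E_G^{(i-1)}$, and all three invariants follow directly from the inductive hypothesis. In the \textsf{if} branch (lines~5--9), the inclusion $E_H^{(i-1)}\subseteq E_H^{(i)}$ is immediate from line~6, and the inclusion $E_G^{(i)}\subseteq E_G^{(i-1)}$ is immediate from line~8, since that line only removes edges from $E_G^{(i-1)}$.

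The main obstacle is verifying $E_H^{(i)}\subseteq E_G^{(i)}$ in the \textsf{if} branch. The key observation is that $P_i$ is a shortest $a_i$-$b_i$ path in $G^{(i-1)}$, so every edge on $P_i$ lies in $E_G^{(i-1)}$; combined with the inductive hypothesis $E_H^{(i-1)}\subseteq E_G^{(i-1)}$, this yields $E_H^{(i)}\subseteq E_G^{(i-1)}$. Now look at the set removed in line~8: it is a subset of $E_G^{(i-1)}\setminus E_H^{(i)}$, so it is disjoint from $E_H^{(i)}$. Therefore every edge of $E_H^{(i)}$ survives in $E_G^{(i)}$, giving $E_H^{(i)}\subseteq E_G^{(i)}$ as required.

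Concatenating the per-iteration inclusions then yields $E_H^{(0)}\subseteq E_H^{(1)}\subseteq\cdots\subseteq E_H^{(q(n))}$ and $E_G^{(q(n))}\subseteq\cdots\subseteq E_G^{(0)}$, and splicing in the cross-containment $E_H^{(q(n))}\subseteq E_G^{(q(n))}$ at the midpoint completes the chain stated in the lemma.
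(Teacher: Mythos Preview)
Your proposal is correct and follows essentially the same approach as the paper: both argue the two outer chains directly from lines~6,~8,~11,~13, and both prove the middle inclusion $E_H^{(i)}\subseteq E_G^{(i)}$ by induction, using that the edges of $P_i$ lie in $E_G^{(i-1)}$ to get $E_H^{(i)}\subseteq E_G^{(i-1)}$ and then that line~8 removes only edges outside $E_H^{(i)}$. The only difference is organizational---you bundle all three invariants into one induction while the paper separates the outer chains from the inductive part---but the substance is the same.
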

\begin{proof}
By
lines~6~and~11
of {\sf Adv} in Fig.~\ref{answeringalgorithm},
$E_H^{(i-1)}\subseteq E_H^{(i)}$
for
all
$i\in[q(n)]$.
By
lines~8~and~13,
$E_G^{(i)}\subseteq E_G^{(i-1)}$
for all $i\in[q(n)]$.

To show that
$E_H^{(q(n))}\subseteq E_G^{(q(n))}$,
we shall prove
the stronger statement that
$E_H^{(i)}\subseteq E_G^{(i)}$
for all
$i\in\{0,1,\ldots,q(n)\}$
by mathematical induction.
By
equation~(\ref{initiallymarkededgeset}),
$E_H^{(0)}\subseteq E_G^{(0)}$.
Assume
as the induction hypothesis that
$E_H^{(i-1)}\subseteq E_G^{(i-1)}$.
The following
shows that $E_H^{(i)}\subseteq E_G^{(i-1)}$
by examining each $e\in E_H^{(i)}$:
\begin{enumerate}[{Case}~1:]
\item $e\in E_H^{(i-1)}$.
By the induction hypothesis,
$e\in E_G^{(i-1)}$.
\item $e\notin E_H^{(i-1)}$.
As $e\in E_H^{(i)}\setminus E_H^{(i-1)}$,
lines~6~and~11 show that
$e$ is on $P_i$
(and that the $i$th iteration
of the loop of {\sf Adv} runs
line~6 rather than line~11).
By line~5, each edge on $P_i$ is in $E_G^{(i-1)}$.
In particular,
$e\in E_G^{(i-1)}$.
\end{enumerate}
Having shown
that
$E_H^{(i)}\subseteq E_G^{(i-1)}$,
lines~8~and~13
will
both
result in
$E_H^{(i)}\subseteq E_G^{(i)}$,
completing the induction step.
\end{proof}

\comment{ 
The following lemma
shows that line~17 of {\sf Adv}
answers $A$'s queries consistently with
$\min\{d_{G^{(q(n))}}(\cdot,\cdot),2h\}$ and
$\min\{d_{H^{(q(n))}}(\cdot,\cdot),2h\}$.
}

\begin{lemma}\label{consistentanswers}
For all
$i\in[q(n)]$ with $d_{G^{(i-1)}}(a_i,b_i)\le h$,
$$
d_{H^{(i)}}\left(a_i,b_i\right)
=d_{H^{(q(n))}}\left(a_i,b_i\right)
=d_{G^{(q(n))}}\left(a_i,b_i\right)
=d_{G^{(i-1)}}\left(a_i,b_i\right).$$
\end{lemma}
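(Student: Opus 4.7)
The plan is to prove the chain of equalities by sandwiching: first show that $d_{H^{(i)}}(a_i,b_i)$ is at most $d_{G^{(i-1)}}(a_i,b_i)$, then show that the four quantities form a monotone chain in the opposite direction via Lemma~\ref{chainofgraphs}, and finally combine these to force equality throughout.

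For the upper bound, I would unpack what the hypothesis $d_{G^{(i-1)}}(a_i,b_i)\le h$ guarantees: the test in line~4 of {\sf Adv} succeeds, so the $i$th iteration executes lines~5--9 rather than lines~11--14. In particular, line~5 produces a shortest $a_i$-$b_i$ path $P_i$ in $G^{(i-1)}$ (whose length equals $d_{G^{(i-1)}}(a_i,b_i)$), and line~6 inserts every edge of $P_i$ into $E_H^{(i)}$. Thus $P_i$ is an $a_i$-$b_i$ path in $H^{(i)}$ as well, giving
\[
d_{H^{(i)}}(a_i,b_i)\le w(P_i)=d_{G^{(i-1)}}(a_i,b_i).
\]

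For the opposite direction, I would invoke Lemma~\ref{chainofgraphs}, which yields the containment
\[
E_H^{(i)}\subseteq E_H^{(q(n))}\subseteq E_G^{(q(n))}\subseteq E_G^{(i-1)}
\]
of edge sets on the common vertex set $[n]$ (all graphs are unweighted). Adding edges to a graph can only decrease shortest-path distances, so
\[
d_{H^{(i)}}(a_i,b_i)\ge d_{H^{(q(n))}}(a_i,b_i)\ge d_{G^{(q(n))}}(a_i,b_i)\ge d_{G^{(i-1)}}(a_i,b_i).
\]
Concatenating this with the previous upper bound collapses everything to equality, which is exactly the claim.

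There is no real obstacle here; the proof is essentially a bookkeeping step that records how the algorithm in Fig.~\ref{answeringalgorithm} copies a single shortest path from $G^{(i-1)}$ into $H^{(i)}$ on the iteration where the query is \emph{close} in $G^{(i-1)}$, combined with monotonicity of graph distance under edge addition. The only point to state carefully is that the hypothesis forces the \textbf{if}-branch (not the \textbf{else}-branch) to execute, so that $P_i$ is actually produced and its edges are actually inserted into $E_H^{(i)}$.
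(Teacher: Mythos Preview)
Your proposal is correct and follows essentially the same approach as the paper: establish $d_{H^{(i)}}(a_i,b_i)\le d_{G^{(i-1)}}(a_i,b_i)$ from the fact that lines~5--7 insert a shortest $a_i$-$b_i$ path of $G^{(i-1)}$ into $H^{(i)}$, then invoke Lemma~\ref{chainofgraphs} for the reverse chain of inequalities. The paper's proof is just a terser version of exactly this argument.
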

\begin{proof}
By line~4 of {\sf Adv}, the $i$th iteration of the loop runs
lines~5--9.
Lines~5--7
put (the edges of)
a shortest $a_i$-$b_i$ path in $G^{(i-1)}$
into $H^{(i)}$; hence
$$d_{H^{(i)}}\left(a_i,b_i\right)\le d_{G^{(i-1)}}\left(a_i,b_i\right).$$
This and Lemma~\ref{chainofgraphs} complete the proof.
\end{proof}

Below is an easy consequence
of Lemma~\ref{chainofgraphs}.

\begin{lemma}\label{ifthesmallestdistancerunsoutthenothersdoso}
For all $i\in[q(n)]$ with $d_{G^{(i-1)}}(a_i,b_i)>h$,
$$d_{G^{(q(n))}}(a_i,b_i)>h.$$
\end{lemma}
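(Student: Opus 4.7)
The plan is simply to use the monotonicity of the edge-set sequence established in Lemma~\ref{chainofgraphs}. Removing edges from an (unweighted) graph can only increase shortest-path distances, so once the $G^{(i)}$-distance between two vertices has become large, it remains large for all $j\ge i$.

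Concretely, I would first invoke Lemma~\ref{chainofgraphs} to obtain $E_G^{(q(n))}\subseteq E_G^{(i-1)}$ for every $i\in[q(n)]$. Since both $G^{(q(n))}$ and $G^{(i-1)}$ are unweighted graphs on the same vertex set $[n]$, any $a_i$-$b_i$ path in $G^{(q(n))}$ is automatically an $a_i$-$b_i$ path of the same length in $G^{(i-1)}$. Taking the infimum over all such paths yields
\[
d_{G^{(q(n))}}(a_i,b_i)\;\ge\;d_{G^{(i-1)}}(a_i,b_i).
\]
Combining this with the hypothesis $d_{G^{(i-1)}}(a_i,b_i)>h$ gives $d_{G^{(q(n))}}(a_i,b_i)>h$, which is exactly the desired conclusion.

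There is no real obstacle here; the statement is a one-line consequence of the chain established in Lemma~\ref{chainofgraphs} together with the elementary fact that shortest-path distance is monotone nonincreasing in the edge set. I would keep the write-up to two or three sentences and not belabor it.
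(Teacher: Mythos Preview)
Your proposal is correct and matches the paper's own treatment exactly: the paper simply remarks that the lemma is an easy consequence of Lemma~\ref{chainofgraphs}, which is precisely the monotonicity argument you spell out. Your two-to-three-sentence write-up is appropriate and there is nothing to add.
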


\subsection{The average distance from $A$'s output to other points}
\label{outputofthealgorithmisbadsubsection}

This subsection shows that the output of
$A^{\sf Adv}$
has a large average distance
to other points,
according to the answers of
{\sf Adv}.

\begin{lemma}\label{increaseinnumberofincidentmarkededges}
For all
$i\in[q(n)]$
and $v\in [n]$,
$$\text{\rm deg}_{H^{(i)}}(v)
\le\text{\rm deg}_{H^{(i-1)}}(v)+2.$$
\end{lemma}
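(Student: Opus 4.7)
The plan is to argue by cases on whether the $i$th iteration of {\sf Adv}'s loop takes the \textbf{if} branch (lines 5--9) or the \textbf{else} branch (lines 11--14), since the only place $E_H$ can change between iterations $i-1$ and $i$ is inside this iteration.

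First I would dispose of the easy case. If line 4's condition fails, then line 11 sets $E_H^{(i)} \leftarrow E_H^{(i-1)}$, so $H^{(i)} = H^{(i-1)}$ and $\deg_{H^{(i)}}(v) = \deg_{H^{(i-1)}}(v)$ for every $v \in [n]$, giving the bound trivially.

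Next I would handle the interesting case, where line 4's condition holds and line 6 sets $E_H^{(i)} \leftarrow E_H^{(i-1)} \cup \{e : e \text{ is an edge on } P_i\}$. The key observation is that $P_i$ is a shortest $a_i$-$b_i$ path in the unweighted graph $G^{(i-1)}$, and shortest paths in unweighted graphs are simple, so $P_i$ does not repeat vertices. Consequently, for any fixed $v \in [n]$, the vertex $v$ appears at most once on $P_i$: if it does not appear, then no edge on $P_i$ is incident to $v$; if it appears as an interior vertex, exactly two edges of $P_i$ are incident to $v$; if it appears as an endpoint ($v = a_i$ or $v = b_i$), exactly one edge of $P_i$ is incident to $v$. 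In every case, the set of edges on $P_i$ that are incident to $v$ has size at most $2$, so
\[
\deg_{H^{(i)}}(v) - \deg_{H^{(i-1)}}(v) \le \bigl|\{e \in E_H^{(i)} \setminus E_H^{(i-1)} : e \text{ is incident to } v\}\bigr| \le 2.
\]

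There is no real obstacle here; the only thing to be careful about is ensuring that $P_i$ is simple, which follows from its being a shortest path in an unweighted graph (a point already emphasized in the paper's discussion of shortest paths having positive weights). Everything else is a direct count of edges incident to $v$ on a simple path.
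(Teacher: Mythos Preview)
Your proof is correct and follows essentially the same approach as the paper: a case split on which branch of the \textbf{if} is taken, with the \textbf{else} branch trivial and the \textbf{if} branch handled by noting that $P_i$, being shortest, is simple and so contributes at most two edges incident to any given vertex.
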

\begin{proof}
If the $i$th iteration of the loop of {\sf Adv} runs lines~11--14
but not 5--9,
then $H^{(i)}=H^{(i-1)}$, proving the lemma.
So assume otherwise.
Being shortest,
$P_i$
in line~5
does not repeat vertices.
Therefore, $v$ is incident to at most two edges on
$P_i$, which together with lines~6--7
complete the proof.
\end{proof}

\begin{lemma}\label{boundonnumberofincidentedges}
For all $v\in [n]$,
$$\text{\rm deg}_{H^{(q(n))}}(v)< \delta n^{1/(h-1)}.$$
\end{lemma}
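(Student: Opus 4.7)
The plan is to exploit the edge-deletion in line~8 of {\sf Adv}, which ``saturates'' a vertex once its degree in $H$ crosses the threshold $\delta n^{1/(h-1)}-2$. Fix $v\in[n]$. Let $i^*$ be the smallest index $i\in\{0,1,\ldots,q(n)\}$ satisfying $\text{\rm deg}_{H^{(i)}}(v)\ge \delta n^{1/(h-1)}-2$, or leave $i^*$ undefined if no such $i$ exists. If $i^*$ is undefined, then in particular $\text{\rm deg}_{H^{(q(n))}}(v)<\delta n^{1/(h-1)}-2<\delta n^{1/(h-1)}$ and we are done. Note that $i^*\ge 1$ since $E_H^{(0)}=\emptyset$ and inequality~(\ref{tediouscondition2}) forbids $\delta n^{1/(h-1)}-2\le 0$.

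Assume $i^*\in[q(n)]$. By the minimality of $i^*$, we have $\text{\rm deg}_{H^{(i^*-1)}}(v)<\delta n^{1/(h-1)}-2$, and Lemma~\ref{increaseinnumberofincidentmarkededges} gives
$$\text{\rm deg}_{H^{(i^*)}}(v)\le \text{\rm deg}_{H^{(i^*-1)}}(v)+2<\delta n^{1/(h-1)}.$$
It therefore suffices to prove that $\text{\rm deg}_{H^{(j)}}(v)=\text{\rm deg}_{H^{(i^*)}}(v)$ for every $j\in\{i^*,i^*+1,\ldots,q(n)\}$, i.e., no further edge incident to $v$ is added to $H$ after iteration $i^*$.

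To establish this, observe that at iteration $i^*$ the hypothesis $\text{\rm deg}_{H^{(i^*)}}(v)\ge \delta n^{1/(h-1)}-2$ triggers line~8 to delete from $E_G^{(i^*)}$ every edge of $E_G^{(i^*-1)}\setminus E_H^{(i^*)}$ incident to $v$. Consequently, every edge of $E_G^{(i^*)}$ incident to $v$ already lies in $E_H^{(i^*)}$. Now argue by induction on $j\ge i^*$: any edge $(u,v)$ freshly added to $E_H^{(j)}$ at iteration $j$ must come from the shortest path $P_j$ in $G^{(j-1)}$ and hence belong to $E_G^{(j-1)}\subseteq E_G^{(i^*)}$ by Lemma~\ref{chainofgraphs}; but then $(u,v)\in E_H^{(i^*)}\subseteq E_H^{(j-1)}$, so $(u,v)$ is not actually new. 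Thus $v$'s neighborhood in $H$ is frozen from iteration $i^*$ onward, and $\text{\rm deg}_{H^{(q(n))}}(v)=\text{\rm deg}_{H^{(i^*)}}(v)<\delta n^{1/(h-1)}$.

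The only subtlety is the interplay between line~8's deletion rule and the fact that a single iteration can raise a degree by $2$ (from Lemma~\ref{increaseinnumberofincidentmarkededges}); this is precisely why the threshold is set at $\delta n^{1/(h-1)}-2$ rather than $\delta n^{1/(h-1)}$, giving the exact slack needed to keep the final degree strictly below $\delta n^{1/(h-1)}$.
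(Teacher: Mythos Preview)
Your proof is correct and follows essentially the same route as the paper: locate the first iteration $i^*$ where $v$'s $H$-degree crosses $\delta n^{1/(h-1)}-2$, use Lemma~\ref{increaseinnumberofincidentmarkededges} to cap $\text{deg}_{H^{(i^*)}}(v)$ strictly below $\delta n^{1/(h-1)}$, and then argue that line~8 pares $v$'s $G^{(i^*)}$-neighborhood down to $N_{H^{(i^*)}}(v)$ so no further $H$-edges at $v$ can appear. The paper finishes slightly more directly---rather than your induction on $j\ge i^*$, it simply invokes $E_H^{(q(n))}\subseteq E_G^{(i^*)}$ from Lemma~\ref{chainofgraphs} to get $\text{deg}_{H^{(q(n))}}(v)\le \text{deg}_{G^{(i^*)}}(v)$---and it explicitly notes that $H^{(i^*-1)}\neq H^{(i^*)}$ forces lines~5--9 (hence line~8) to run, a point you leave implicit.
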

\begin{proof}
Assume
\begin{eqnarray}
\text{\rm deg}_{H^{(q(n))}}(v)\ge \delta n^{1/(h-1)}-2
\label{justatrivialconditionfortheretobesomthingtoprove}
\end{eqnarray}
for,
otherwise, there is nothing to prove.
Clearly,
\begin{eqnarray}
\text{\rm deg}_{H^{(0)}}(v)
\stackrel{\text{(\ref{initiallymarkededgeset})--(\ref{initiallymarkedgraph})}}{=}0
\stackrel{\text{(\ref{tediouscondition2})}}{<}\delta n^{1/(h-1)}-2.
\label{ofcoursetheoriginaldegreeislow}
\end{eqnarray}
By
inequalities~(\ref{justatrivialconditionfortheretobesomthingtoprove})--(\ref{ofcoursetheoriginaldegreeislow}),
there exists
$i\in[q(n)]$
satisfying
\begin{eqnarray}
\text{\rm deg}_{H^{(i-1)}}(v)< \delta n^{1/(h-1)}-2,
\label{beforethejump}\\
\text{\rm deg}_{H^{(i)}}(v)\ge \delta n^{1/(h-1)}-2.
\label{afterthejump}
\end{eqnarray}

Clearly,
\begin{eqnarray}
N_{G^{(i)}}(v)
=\left\{u\in[n]
\mid\left(u,v\right)\in E_G^{(i)}\right\}.
\label{easyfromthedefinitionofG}
\end{eqnarray}
As $H^{(i-1)}\neq H^{(i)}$ by
inequalities~(\ref{beforethejump})--(\ref{afterthejump}),
the $i$th iteration of the loop of {\sf Adv} runs
lines~5--9 but not 11--14.
By
inequality~(\ref{afterthejump}) and
line~8
of {\sf Adv},
\begin{eqnarray}
\left\{u\in[n]
\mid\left(u,v\right)\in E_G^{(i)}\right\}
=
\left\{u\in [n]
\mid \left(u,v\right)\in E_G^{(i-1)}\setminus\left(E_G^{(i-1)}
\setminus E_H^{(i)}\right)
\right\}.
\label{setofneighborsinGi}
\end{eqnarray}
Equations~(\ref{easyfromthedefinitionofG})--(\ref{setofneighborsinGi})
and Lemma~\ref{chainofgraphs} give
\begin{eqnarray}
N_{G^{(i)}}(v)
=
\left\{u\in [n]
\mid \left(u,v\right)\in E_H^{(i)}
\right\}.\label{setofneighbors}
\end{eqnarray}

By
inequality~(\ref{beforethejump}) and
Lemma~\ref{increaseinnumberofincidentmarkededges},
\begin{eqnarray}
\text{\rm deg}_{H^{(i)}}(v)< \delta n^{1/(h-1)}.
\nonumber
\end{eqnarray}
This and equation~(\ref{setofneighbors})
imply
$\text{\rm deg}_{G^{(i)}}(v)< \delta n^{1/(h-1)}$, which together with
Lemma~\ref{chainofgraphs} completes the proof.
\end{proof}

\begin{lemma}\label{averagedistanceishighinthemarkedgraph}
For all $v\in [n]$,
$$
\left|\left\{
u\in [n]\mid d_{H^{(q(n))}}\left(v,u\right)<h
\right\}\right|
\le2\delta^{h-1}n.
$$
\end{lemma}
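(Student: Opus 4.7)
The plan is to combine Lemma~\ref{boundonnumberofincidentedges} with a routine breadth-first counting argument and then invoke Fact~\ref{geometricseriesbound} to close out the geometric sum. Concretely, Lemma~\ref{boundonnumberofincidentedges} tells us that every vertex in $H^{(q(n))}$ has degree strictly less than $\Delta := \delta n^{1/(h-1)}$, so we will bound $\{u \mid d_{H^{(q(n))}}(v,u) < h\}$, i.e.\ the ball of radius $h-1$ around $v$, by a direct BFS count.

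More precisely, I would argue by induction on $k \in \{0,1,\ldots,h-1\}$ that the number of vertices at distance exactly $k$ from $v$ in $H^{(q(n))}$ is at most $\Delta^k$: the base case $k=0$ is the singleton $\{v\}$, and the inductive step uses that every vertex at distance $k$ has at most $\Delta$ neighbors, each of which is at distance at most $k+1$. (Any simpler uniform bound of $\Delta^k$ per level is enough, and this avoids overcounting issues with the sharper $\Delta(\Delta-1)^{k-1}$ bound.) Summing over $k = 0, 1, \ldots, h-1$ yields
\begin{eqnarray}
\left|\left\{u \in [n] \mid d_{H^{(q(n))}}(v,u) < h\right\}\right|
\;\le\; \sum_{k=0}^{h-1} \Delta^k. \nonumber
\end{eqnarray}

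To finish, I would invoke Fact~\ref{geometricseriesbound} with $r = \Delta = \delta n^{1/(h-1)}$ and $m = h-1$. The hypothesis $r \ge 2$ is supplied by inequality~(\ref{tediouscondition2}), which gives $\delta n^{1/(h-1)} > 3 \ge 2$. Hence
\begin{eqnarray}
\sum_{k=0}^{h-1} \left(\delta n^{1/(h-1)}\right)^k
\;\le\; 2 \left(\delta n^{1/(h-1)}\right)^{h-1}
\;=\; 2\,\delta^{h-1} n, \nonumber
\end{eqnarray}
which is exactly the claimed bound.

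There is no real obstacle here; the lemma is essentially a moral corollary of Lemma~\ref{boundonnumberofincidentedges}. The only mild subtleties are (a) to make sure the BFS count is stated so that it applies cleanly even though $\Delta$ is not necessarily an integer (the uniform per-level bound $\Delta^k$ handles this), and (b) to remember to verify $r \ge 2$ before invoking Fact~\ref{geometricseriesbound}, for which inequality~(\ref{tediouscondition2}) has been tailored.
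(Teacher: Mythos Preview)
Your proposal is correct and essentially identical to the paper's proof: both use Lemma~\ref{boundonnumberofincidentedges} to bound each BFS level by $(\delta n^{1/(h-1)})^k$, sum over $k=0,\ldots,h-1$, and close with Fact~\ref{geometricseriesbound} using inequality~(\ref{tediouscondition2}) to justify $r\ge 2$. The only cosmetic difference is that the paper counts vertices reachable by a path with exactly $k$ edges rather than vertices at distance exactly $k$, but this leads to the same bound.
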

\begin{proof}
By Lemma~\ref{boundonnumberofincidentedges},
$$
\left|
\left\{
u\in [n]\mid \text{$\exists$ $v$-$u$ path in $H^{(q(n))}$ with exactly $k$ edges}
\right\}
\right|
\le \left(\delta n^{1/(h-1)}\right)^k
$$
for all $k\in\mathbb{N}$.
Consequently,
{\small 
\begin{eqnarray}
\left|
\left\{
u\in [n]\mid \text{$\exists$ $v$-$u$ path in $H^{(q(n))}$ with at most $h-1$ edges}
\right\}
\right|
&\le& \sum_{k=0}^{h-1}\,\left(\delta n^{1/(h-1)}\right)^k
\nonumber\\
&\stackrel{\text{(\ref{tediouscondition2})~and~Fact~\ref{geometricseriesbound}}}{\le}&
2\delta^{h-1}n.
\nonumber
\end{eqnarray}
}
Finally, recall that $H^{(q(n))}$ is unweighted.
\end{proof}

Denote
the output of
$A^{\text{\sf Adv}}$
by $z$.
Furthermore,
\begin{eqnarray}
I\stackrel{\text{def.}}{=}\left\{
j\in \left[q(n)\right]
\mid z\in\left\{a_j,b_j\right\}
\right\}.
\label{indexset}
\end{eqnarray}
The following lemma analyzes the sum of
the distances,
as answered by line~17 of {\sf Adv},
from $z$
to other points.

\begin{lemma}\label{theclosenesscentralityoftheoutputofthealgorithm}
{\small 
\begin{eqnarray}
&&
\sum_{i\in I}\,
\min\left\{d_{H^{(i)}}\left(a_i,b_i\right),
h-\frac{1}{2}\cdot
\chi\left[\exists v\in
\left\{a_i,b_i\right\},
\,
\left(v\in S\right)\land
\left(\text{\rm deg}_{Q^{(i)}}(v)
\le \delta n^{1/(h-1)}\right)\right]\right\}\nonumber\\
&\ge& n\cdot \left(h-2h\delta^{h-1}-o(1)
-\delta\right).
\nonumber
\end{eqnarray}
}
\comment{ 
\begin{eqnarray}
&&
\sum_{j=1}^{n-1}\,
\min\left\{d_{G^{(q(n))}}\left(a_{f(j)},b_{f(j)}\right),
h-\frac{1}{2}\cdot
\chi\left[\exists v\in
\left\{a_{f(j)},b_{f(j)}\right\},
\,
\left(v\in S\right)\land
\left(\text{\rm deg}_{G^{(f(u))}}(v)
\le \delta n^{1/(h-1)}\right)\right]\right\}\nonumber\\
&\ge& h\cdot \left(1-2\delta^{h-1}\right).
\nonumber
\end{eqnarray}
}
\comment{ 
$$
\sum_{u\in [n]}\,
\min\left\{d_{G^{(q(n))}}\left(u,z\right),
h-\frac{1}{2}\cdot
\chi\left[\exists v\in
\left\{u,z\right\},
\,
\left(v\in S\right)\land
\left(\text{\rm deg}_{G^{(f(u))}}(v)
\le \delta n^{1/(h-1)}\right)\right]\right\}
\ge h\cdot \left(1-2\delta^{h-1}\right).
$$
}
\end{lemma}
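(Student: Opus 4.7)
The plan is to bound the sum by decomposing each summand into a ``main'' piece and a ``penalty'' piece. Since $A$, upon halting, has queried the distance from $z$ to every other point, $|I|=n-1$; denoting by $u_i$ the unique element of $\{a_i,b_i\}\setminus\{z\}$, the map $i\mapsto u_i$ is a bijection from $I$ onto $[n]\setminus\{z\}$. A short case analysis on $\chi[\cdots]$ yields the elementary inequality
\[
\min\left\{d_{H^{(i)}}\left(a_i,b_i\right),\,h-\tfrac{1}{2}\chi\right\}\ge \min\left\{d_{H^{(i)}}\left(a_i,b_i\right),\,h\right\}-\tfrac{1}{2}\chi,
\]
so it suffices to lower-bound $\sum_{i\in I}\min\{d_{H^{(i)}}(z,u_i),h\}$ and to upper-bound the total penalty $\tfrac{1}{2}\sum_{i\in I}\chi[\cdots]$.

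For the main term, Lemma~\ref{chainofgraphs} gives $E_H^{(i)}\subseteq E_H^{(q(n))}$, so shortest-path lengths can only decrease in passing from $H^{(i)}$ to $H^{(q(n))}$; hence $d_{H^{(i)}}(z,u_i)\ge d_{H^{(q(n))}}(z,u_i)$. Reindexing through the bijection $i\leftrightarrow u_i$,
\[
\sum_{i\in I}\min\left\{d_{H^{(i)}}\left(z,u_i\right),h\right\}\ge \sum_{u\in [n]\setminus\{z\}}\min\left\{d_{H^{(q(n))}}\left(z,u\right),h\right\}\ge h\cdot\left((n-1)-2\delta^{h-1}n\right),
\]
where the last inequality bounds the summand by $h$ on the set of $u$'s at distance $\ge h$ from $z$, by $0$ elsewhere, and applies Lemma~\ref{averagedistanceishighinthemarkedgraph} at $v=z$.

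For the penalty, note that $\chi[\cdots]=1$ forces either (a)~$z\in S$ with $\text{\rm deg}_{Q^{(i)}}(z)\le \delta n^{1/(h-1)}$, or (b)~$u_i\in S$ with $\text{\rm deg}_{Q^{(i)}}(u_i)\le \delta n^{1/(h-1)}$. Enumerating $I=\{i_1<\cdots<i_{n-1}\}$, since all queries incident to $z$ are indexed by $I$ and duplicates are forbidden, $\text{\rm deg}_{Q^{(i_k)}}(z)=k$; so case~(a) occurs for at most $\lfloor\delta n^{1/(h-1)}\rfloor$ indices. Case~(b) requires $u_i\in S$, which by the bijection happens for at most $|S|\le\delta n$ indices. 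Hence $\tfrac{1}{2}\sum_{i\in I}\chi[\cdots]\le\tfrac{1}{2}(\delta n^{1/(h-1)}+\delta n)\le\delta n$, where the final step uses $n^{1/(h-1)}\le n$ for $h\ge 2$.

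Combining, the sum in question is at least $h(n-1)-2h\delta^{h-1}n-\delta n=n(h-2h\delta^{h-1}-\delta-o(1))$, since $h$ is constant. The only delicate point is the counting in case~(a), which rests on the observation that under the no-duplicate convention of {\sf Adv}, $\text{\rm deg}_{Q^{(i_k)}}(z)$ is exactly $k$; everything else follows by direct assembly of Lemmas~\ref{chainofgraphs}~and~\ref{averagedistanceishighinthemarkedgraph}.
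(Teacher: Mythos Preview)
Your argument is correct and follows essentially the same route as the paper: split each summand as $\min\{d_{H^{(i)}},h\}-\tfrac12\chi$, pass from $H^{(i)}$ to $H^{(q(n))}$ via Lemma~\ref{chainofgraphs}, invoke Lemma~\ref{averagedistanceishighinthemarkedgraph} at $z$ for the main term, and bound the penalty by separating the contributions from $z$ and from the other endpoint (your $u_i$ is the paper's $c_i$). The only cosmetic differences are that the paper states $|I|\ge n-1$ (noting equality in a footnote) and records the penalty as $\tfrac12(\delta n^{1/(h-1)}+\lfloor\delta n\rfloor)$ before absorbing the first summand into the $o(1)$, whereas you bound the whole penalty by $\delta n$ directly.
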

\begin{proof}
By
Lemma~\ref{chainofgraphs},
{\small 
\begin{eqnarray}
&&
\sum_{i\in I}\,
\min\left\{d_{H^{(i)}}\left(a_i,b_i\right),
h-\frac{1}{2}\cdot
\chi\left[\exists v\in
\left\{a_i,b_i\right\},
\,
\left(v\in S\right)\land
\left(\text{\rm deg}_{Q^{(i)}}(v)
\le \delta n^{1/(h-1)}\right)\right]\right\}
\label{firstpiece}\\
&\ge&
\sum_{i\in I}\,
\min\left\{d_{H^{(q(n))}}\left(a_i,b_i\right),
h-\frac{1}{2}\cdot
\chi\left[\exists v\in
\left\{a_i,b_i\right\},
\,
\left(v\in S\right)\land
\left(\text{\rm deg}_{Q^{(i)}}(v)
\le \delta n^{1/(h-1)}\right)\right]\right\}
\,\,\,\,\,\,\,\,\,\,\nonumber\\
&\ge& \sum_{i\in I}\,
\min\left\{d_{H^{(q(n))}}\left(a_i,b_i\right),
h
\right\}\nonumber\\
&-&\sum_{i\in I}\,\frac{1}{2}\cdot
\chi\left[\exists v\in
\left\{a_i,b_i\right\},
\,
\left(v\in S\right)\land
\left(\text{\rm deg}_{Q^{(i)}}(v)
\le \delta n^{1/(h-1)}\right)\right].
\nonumber
\end{eqnarray}
} 


For all $i\in I$, there exists $c_i\in[n]$ with $\{z,c_i\}
=\{a_i,b_i\}$ by equation~(\ref{indexset}).
Therefore,
$$\sum_{i\in I}\,
\min\left\{
d_{H^{(q(n))}}\left(a_i,b_i\right), h\right\}
=\sum_{i\in I}\,
\min\left\{
d_{H^{(q(n))}}\left(z,c_i\right), h\right\}.
$$
As we forbid repeated queries,
$\{c_i\}_{i\in I}$ is a sequence of distinct points.
So by
Lemma~\ref{averagedistanceishighinthemarkedgraph},
$$\sum_{i\in I}\,
\min\left\{
d_{H^{(q(n))}}\left(z,c_i\right), h\right\}
\ge
h\cdot \left(|I| - 2\delta^{h-1}n\right).
$$
Recall that
$A^{\sf Adv}$
will have queried for the distances
between its output (which is $z$)
and all other points when halting.
So
$$|I|\ge n-1$$
by equation~(\ref{indexset}).\footnote{Because we forbid repeated
queries and queries for the distance from a point to itself,
we also have $|I|\le n-1$.}

Clearly,
\begin{eqnarray}
&&\sum_{i\in I}\,
\chi\left[\exists v\in
\left\{a_i,b_i\right\},
\,
\left(v\in S\right)\land
\left(\text{\rm deg}_{Q^{(i)}}(v)
\le \delta n^{1/(h-1)}\right)\right]
\nonumber\\
&=&
\sum_{i\in I}\,
\chi\left[\exists v\in
\left\{z,c_i\right\},
\,
\left(v\in S\right)\land
\left(\text{\rm deg}_{Q^{(i)}}(v)
\le \delta n^{1/(h-1)}\right)\right]
\nonumber\\
&\le&
\sum_{i\in I}\,
\chi\left[\left(z\in S\right)\land
\left(\text{\rm deg}_{Q^{(i)}}\left(z\right)
\le \delta n^{1/(h-1)}\right)\right]\nonumber\\
&+&\sum_{i\in I}\,
\chi\left[\left(c_i\in S\right)\land
\left(\text{\rm deg}_{Q^{(i)}}\left(c_i\right)
\le \delta n^{1/(h-1)}\right)\right].
\nonumber
\end{eqnarray}

By
line~16 of {\sf Adv} and
equation~(\ref{indexset}),
\begin{eqnarray}
\text{deg}_{Q^{(i)}}\left(z\right)
=\left|\left\{j\in I\mid j\le i\right\}\right|.
\nonumber
\end{eqnarray}
Therefore,
\begin{eqnarray}
\sum_{i\in I}\,
\chi\left[\left(z\in S\right)\land
\left(\text{\rm deg}_{Q^{(i)}}\left(z\right)
\le \delta n^{1/(h-1)}\right)\right]
&\le&
\sum_{i\in I}\,
\chi\left[\left|\left\{j\in I\mid j\le i\right\}\right|
\le \delta n^{1/(h-1)}\right]
\nonumber\\
&\le&\delta n^{1/(h-1)},
\nonumber
\end{eqnarray}
where the last inequality follows because
$|\{j\in I\mid j\le i\}|=k$
when $i$ is
the $k$th smallest element of $I$,
for all $k\in[|I|]$.
Recall the distinctness of the points in $\{c_i\}_{i\in I}$.
Therefore,
\begin{eqnarray}
\sum_{i\in I}\,
\chi\left[\left(c_i\in S\right)\land
\left(\text{\rm deg}_{Q^{(i)}}\left(c_i\right)
\le \delta n^{1/(h-1)}\right)\right]
\le
\sum_{i\in I}\,
\chi\left[c_i\in S\right]
\le|S|=\left\lfloor\delta n\right\rfloor.
\label{lastpiece}
\end{eqnarray}
Inequalities~(\ref{firstpiece})--(\ref{lastpiece})
complete the proof.
\end{proof}

\subsection{Planting a point with a small average distance to other points}
\label{agoodpointsubsection}

This subsection constructs a distance function
with respect to
which a certain point has an average distance of approximately $1/2$
to other points.

\begin{lemma}\label{boundonthenumberofmarkededges}
$$
\left|E_H^{(q(n))}\right|\le h\cdot q(n).
$$
\end{lemma}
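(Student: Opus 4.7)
The plan is to bound $|E_H^{(q(n))}|$ by a simple counting argument over the iterations of the loop of {\sf Adv}. First I would observe that by equation~(\ref{initiallymarkededgeset}), $E_H^{(0)} = \emptyset$, so it suffices to show that $|E_H^{(i)} \setminus E_H^{(i-1)}| \le h$ for every $i \in [q(n)]$, and then sum over $i$.

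To establish this per-iteration bound, I would examine the two branches of {\sf Adv}. If $d_{G^{(i-1)}}(a_i, b_i) > h$, then line~11 sets $E_H^{(i)} = E_H^{(i-1)}$, contributing zero new edges. Otherwise, lines~5--7 execute, and $E_H^{(i)}$ is $E_H^{(i-1)}$ together with the edges of a shortest $a_i$-$b_i$ path $P_i$ in $G^{(i-1)}$. Since $G^{(i-1)}$ is unweighted, the number of edges on $P_i$ equals $d_{G^{(i-1)}}(a_i, b_i) \le h$, giving at most $h$ new edges in $E_H^{(i)} \setminus E_H^{(i-1)}$.

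Combining these observations,
\[
\left|E_H^{(q(n))}\right| = \sum_{i=1}^{q(n)} \left|E_H^{(i)} \setminus E_H^{(i-1)}\right| \le h \cdot q(n),
\]
which is the desired bound. There is no real obstacle here; the only subtlety is recognizing that $P_i$ is a shortest path in an \emph{unweighted} graph, so its edge count matches its length, which is at most $h$ by the condition in line~4.
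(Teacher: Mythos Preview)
Your proof is correct and follows essentially the same approach as the paper: bound the number of new edges added in each iteration by $h$ (using that $P_i$ has at most $h$ edges when lines~5--7 run, and no edges are added otherwise), then telescope. The only cosmetic difference is that the paper phrases the per-iteration bound as $|E_H^{(i)}|\le|E_H^{(i-1)}|+h$ rather than $|E_H^{(i)}\setminus E_H^{(i-1)}|\le h$, but these are equivalent given the chain $E_H^{(i-1)}\subseteq E_H^{(i)}$.
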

\begin{proof}
Consider the $i$th iteration of the loop of {\sf Adv}, where
$i\in[q(n)]$.
\begin{itemize}
\item
Running
lines~4--5
results
in
$P_i$
having
at most $h$ edges.
Consequently,
\begin{eqnarray}
\left|E_H^{(i)}\right|\le \left|E_H^{(i-1)}\right|+h
\label{theincreaseofthenumberofmarkededges}
\end{eqnarray}
by line~6.
\item Running line~11
yields
$|E_H^{(i)}|=|E_H^{(i-1)}|$, implying
inequality~(\ref{theincreaseofthenumberofmarkededges})
as well.
\end{itemize}
Now,
$$
\left|
E_H^{(q(n))}
\right|
-\left|
E_H^{(0)}
\right|
=\sum_{i=1}^{q(n)}\,
\left(
\left|
E_H^{(i)}
\right|
-\left|
E_H^{(i-1)}
\right|
\right)
\stackrel{\text{(\ref{theincreaseofthenumberofmarkededges})}}{\le}
h\cdot q(n).
$$
Finally,
$|E_H^{(0)}|=0$ by
equation~(\ref{initiallymarkededgeset}).
\end{proof}

\begin{lemma}\label{fewverticeshavemanymarkedincidentedges}
$$
\left|
\left\{u\in [n]\mid \text{\rm deg}_{H^{(q(n))}}(u)\ge \delta n^{1/(h-1)}-2
\right\}
\right|
=
\frac{h}{\delta}\cdot
o(n).
$$
\footnote{We explicitly write down the
constants $h$
and $\delta$ on the right-hand side for clarity,
although they can be absorbed within $o(\cdot)$.}
\end{lemma}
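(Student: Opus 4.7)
The plan is to bound the degree sum of $H^{(q(n))}$ via the handshake lemma and the edge-count bound already established, then apply a simple averaging argument.

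First I would apply Fact~\ref{basicgraphtheoryfact} together with Lemma~\ref{boundonthenumberofmarkededges} to obtain
$$
\sum_{v\in [n]}\,\text{\rm deg}_{H^{(q(n))}}(v)\;=\;2\cdot\left|E_H^{(q(n))}\right|\;\le\;2h\cdot q(n).
$$
Let $T\stackrel{\text{def.}}{=}\{u\in [n]\mid \text{\rm deg}_{H^{(q(n))}}(u)\ge \delta n^{1/(h-1)}-2\}$. Since each vertex in $T$ contributes at least $\delta n^{1/(h-1)}-2$ to the degree sum, and since this quantity is positive by inequality~(\ref{tediouscondition2}), we get
$$
|T|\cdot\left(\delta n^{1/(h-1)}-2\right)\;\le\;2h\cdot q(n),
$$
so
$$
|T|\;\le\;\frac{2h\cdot q(n)}{\delta n^{1/(h-1)}-2}.
$$

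Now I would use the bound $q(n)=o(n^{1+1/(h-1)})$ to simplify the right-hand side. Writing $q(n)=\epsilon(n)\cdot n^{1+1/(h-1)}$ with $\epsilon(n)\to 0$, we have
$$
|T|\;\le\;\frac{2h\cdot \epsilon(n)\cdot n^{1+1/(h-1)}}{\delta n^{1/(h-1)}-2}\;=\;\frac{2h\cdot \epsilon(n)\cdot n}{\delta-2n^{-1/(h-1)}}.
$$
By inequality~(\ref{tediouscondition2}) we have $\delta n^{1/(h-1)}>3$, so in particular $\delta-2n^{-1/(h-1)}\ge \delta/3$ for all sufficiently large $n$. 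Hence
$$
|T|\;\le\;\frac{6h\cdot \epsilon(n)\cdot n}{\delta}\;=\;\frac{h}{\delta}\cdot o(n),
$$
which is the desired bound.

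There is no real obstacle here: the argument is a one-line averaging given the two inputs (handshake lemma plus Lemma~\ref{boundonthenumberofmarkededges}). The only subtlety is to justify that $\delta n^{1/(h-1)}-2$ is bounded below by a constant multiple of $\delta n^{1/(h-1)}$, which is exactly what inequality~(\ref{tediouscondition2}) provides.
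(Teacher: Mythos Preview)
Your proof is correct and follows essentially the same approach as the paper: both combine Fact~\ref{basicgraphtheoryfact} with Lemma~\ref{boundonthenumberofmarkededges} to bound the total degree, then apply an averaging (Markov-type) argument, using inequality~(\ref{tediouscondition2}) to control the denominator $\delta n^{1/(h-1)}-2$. Your version is slightly more explicit about the constants, but there is no substantive difference.
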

\begin{proof}
By
Fact~\ref{basicgraphtheoryfact},
the average degree in $H^{(q(n))}$
is
$$
\frac{1}{n}\cdot2\cdot\left|E_H^{(q(n))}\right|.
$$
So by
the averaging argument
(that any finite nonempty sequence of nonnegative numbers
with average $\bar{a}$ has at most an $\bar{a}/t$
fraction of numbers that are greater than or equal
to $t>0$),
$$
\frac{1}{n}\cdot
\left|
\left\{u\in [n]\mid \text{\rm deg}_{H^{(q(n))}}(u)\ge \delta n^{1/(h-1)}-2
\right\}
\right|
\le
\frac{1}{n}\cdot2\cdot\left|E_H^{(q(n))}\right|\cdot\frac{1}{\delta
n^{1/(h-1)}-2},
$$
where the rightmost denominator is positive
and is $\Theta(\delta n^{1/(h-1)})$
by equation~(\ref{tediouscondition2}).
This and Lemma~\ref{boundonthenumberofmarkededges}
complete the proof.
\end{proof}

By inequality~(\ref{tediouscondition2}),
$S\setminus \{z\}\neq\emptyset$.
Let
\begin{eqnarray}
\hat{\alpha}
\stackrel{\text{def.}}{=}
\mathop{\mathrm{argmin}}_{\alpha\in S\setminus\{z\}}\,
\text{deg}_{Q^{(q(n))}}(\alpha),
\label{theoptimalpoint}
\end{eqnarray}
breaking ties arbitrarily.

\begin{lemma}\label{theoptimalpointisinvolvedinafewqueries}
For all $i\in[q(n)]$,
\begin{eqnarray}
\comment{
\left|\left\{u\in [n]\mid \left(u,\hat{\alpha}\right)\in
\left\{\left(a_i,b_i\right)
\mid i\in
\left[q(n)\right]
\right\}\right\}
\right|
}
\text{\rm deg}_{Q^{(i)}}\left(\hat{\alpha}\right)
\le \delta n^{1/(h-1)}.
\nonumber
\end{eqnarray}
\end{lemma}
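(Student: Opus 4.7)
The plan is to reduce the claim, via monotonicity of the graphs $Q^{(i)}$, to a degree bound on $Q^{(q(n))}$, and then apply a straightforward averaging argument over $S \setminus \{z\}$ using the size bound $|E(Q^{(q(n))})| \le q(n)$.

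First I would observe that by line~16 of {\sf Adv}, the edge set of $Q^{(i)}$ is $\{(a_j,b_j) \mid j \in [i]\}$, so $E(Q^{(i)}) \subseteq E(Q^{(q(n))})$ for every $i \in [q(n)]$. In particular,
\[
\text{\rm deg}_{Q^{(i)}}\left(\hat{\alpha}\right) \le \text{\rm deg}_{Q^{(q(n))}}\left(\hat{\alpha}\right),
\]
so it suffices to bound the right-hand side by $\delta n^{1/(h-1)}$.

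Next I would use the definition of $\hat{\alpha}$ in equation~(\ref{theoptimalpoint}) as the minimizer of $\text{\rm deg}_{Q^{(q(n))}}(\alpha)$ over $\alpha \in S \setminus \{z\}$. Since the minimum is at most the average,
\[
\text{\rm deg}_{Q^{(q(n))}}\left(\hat{\alpha}\right) \le \frac{1}{|S \setminus \{z\}|} \sum_{\alpha \in S \setminus \{z\}} \text{\rm deg}_{Q^{(q(n))}}\left(\alpha\right) \le \frac{1}{|S|-1} \sum_{v \in [n]} \text{\rm deg}_{Q^{(q(n))}}(v).
\]
By Fact~\ref{basicgraphtheoryfact}, the sum on the right equals $2|E(Q^{(q(n))})|$, and since the edges of $Q^{(q(n))}$ are exactly the $q(n)$ distinct queries made by $A$ (we forbid repeated queries), $|E(Q^{(q(n))})| \le q(n)$. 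Hence
\[
\text{\rm deg}_{Q^{(q(n))}}\left(\hat{\alpha}\right) \le \frac{2 q(n)}{|S|-1},
\]
and inequality~(\ref{tediouscondition3}) then gives the desired bound $\delta n^{1/(h-1)}$.

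I do not foresee a real obstacle here: the lemma is essentially a pigeonhole/averaging step, set up precisely so that condition~(\ref{tediouscondition3}) absorbs the $2q(n)/(|S|-1)$ term. The only minor subtlety is remembering that $z$ might or might not lie in $S$, but removing it from $S$ only decreases the denominator from $|S|$ to at least $|S|-1$, which is exactly the form in which (\ref{tediouscondition3}) is stated; also, $S \setminus \{z\}$ is nonempty by inequality~(\ref{tediouscondition2}), so $\hat{\alpha}$ is well-defined.
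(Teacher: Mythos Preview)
Your proof is correct and follows essentially the same approach as the paper: reduce to $Q^{(q(n))}$ by monotonicity, apply the averaging argument over $S\setminus\{z\}$ using the definition of $\hat{\alpha}$, bound the total degree by $2q(n)$ via Fact~\ref{basicgraphtheoryfact}, and finish with inequality~(\ref{tediouscondition3}).
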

\begin{proof}
By line~16 of {\sf Adv},
\begin{eqnarray}
\text{deg}_{Q^{(i)}}\left(\hat{\alpha}\right)
\le\text{deg}_{Q^{(q(n))}}\left(\hat{\alpha}\right).
\label{trivialbecausethequerygraphgrows}
\end{eqnarray}
By
equation~(\ref{theoptimalpoint})
and the averaging argument,
\begin{eqnarray}
\text{deg}_{Q^{(q(n))}}(\hat{\alpha})
\le
\frac{1}{|S\setminus\{z\}|}\cdot
\sum_{\alpha\in S\setminus\{z\}}\,
\text{deg}_{Q^{(q(n))}}(\alpha).
\nonumber
\end{eqnarray}
Furthermore,
\begin{eqnarray}
\sum_{\alpha\in S\setminus\{z\}}\,
\text{deg}_{Q^{(q(n))}}(\alpha)
\le
\sum_{\alpha\in [n]}\,
\text{deg}_{Q^{(q(n))}}(\alpha)
=2q(n),
\,\,\,
\label{sumofdegreesinthequerygraph}
\end{eqnarray}
where the
equality follows from Fact~\ref{basicgraphtheoryfact},
line~16 of {\sf Adv} and the non-repeating of queries.
Finally,
\begin{eqnarray*}
\text{deg}_{Q^{(i)}}(\hat{\alpha})
\stackrel{\text{(\ref{trivialbecausethequerygraphgrows})--(\ref{sumofdegreesinthequerygraph})}}{\le}
\frac{2q(n)}{|S|-1}
\stackrel{\text{(\ref{tediouscondition3})}}{\le}
\delta n^{1/(h-1)}.
\end{eqnarray*}
\end{proof}

Inductively,
let
\begin{eqnarray}
V_0&\stackrel{\text{def.}}{=}&\left\{\hat{\alpha}\right\},\label{layer0}\\
V_1&\stackrel{\text{def.}}{=}&
N_{Q^{(q(n))}}\left(\hat{\alpha}\right)
\setminus V_0,
\label{layer1}\\
V_{j+1}&\stackrel{\text{def.}}{=}&
N_{H^{(q(n))}}\left(V_j\right)
\setminus \left(\bigcup_{i=0}^j\, V_i\right)\label{subsequentlayers}
\end{eqnarray}
for
all
$j\in[h-2]$.
Furthermore,
\begin{eqnarray}
V_h\stackrel{\text{def.}}{=}
[n]\setminus \left(\bigcup_{i=0}^{h-1}\, V_i\right).
\label{lastlayer}
\end{eqnarray}

The following lemma is not hard to see from
equations~(\ref{layer0})--(\ref{lastlayer}).

\begin{lemma}\label{disjointnessoflayers}
$(V_0, V_1,\ldots,V_h)$ is a partition of $[n]$, i.e.,
$\bigcup_{k=0}^h\, V_k=[n]$ and $V_i\cap V_j=\emptyset$
for all distinct $i$, $j\in\{0,1,\ldots,h\}$.
\end{lemma}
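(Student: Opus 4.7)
The plan is to verify the two defining properties of a partition—pairwise disjointness and covering—separately, using only the defining equations~(\ref{layer0})--(\ref{lastlayer}) and elementary set-theoretic manipulations.

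For pairwise disjointness, I would argue by induction on $j \in \{1, 2, \ldots, h\}$ that $V_j \cap V_i = \emptyset$ for every $i \in \{0, 1, \ldots, j-1\}$. The base and the inductive step are essentially the same: equations~(\ref{layer1}) and~(\ref{subsequentlayers}) both explicitly subtract $\bigcup_{i=0}^{j-1} V_i$ from the relevant neighborhood when defining $V_j$ for $j \in \{1, \ldots, h-1\}$, and equation~(\ref{lastlayer}) subtracts $\bigcup_{i=0}^{h-1} V_i$ when defining $V_h$. Hence $V_j$ is disjoint from each earlier $V_i$ by construction, which immediately yields $V_i \cap V_j = \emptyset$ for all distinct $i,j \in \{0, 1, \ldots, h\}$.

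For the covering property $\bigcup_{k=0}^{h} V_k = [n]$, I would establish both inclusions. The inclusion $\bigcup_{k=0}^{h} V_k \subseteq [n]$ follows because $V_0 = \{\hat{\alpha}\} \subseteq S \subseteq [n]$ by equation~(\ref{theoptimalpoint}), while $V_1 \subseteq N_{Q^{(q(n))}}(\hat{\alpha}) \subseteq [n]$ and $V_{j+1} \subseteq N_{H^{(q(n))}}(V_j) \subseteq [n]$ for $j \in [h-2]$ (since $Q^{(q(n))}$ and $H^{(q(n))}$ have vertex set $[n]$), and $V_h \subseteq [n]$ by equation~(\ref{lastlayer}). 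Conversely, equation~(\ref{lastlayer}) directly rearranges to $[n] = V_h \cup \bigcup_{i=0}^{h-1} V_i = \bigcup_{k=0}^{h} V_k$.

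There is no real obstacle here: the lemma is a routine bookkeeping verification, as the authors themselves acknowledge by calling it ``not hard to see.'' The only thing worth double-checking is that the index range $j \in [h-2]$ in equation~(\ref{subsequentlayers}) (i.e., $j \in \{1, \ldots, h-2\}$) together with $V_0$, $V_1$, and $V_h$ indeed produces all of $V_0, V_1, \ldots, V_h$ with no gap in the indices—which it does.
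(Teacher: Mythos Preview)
Your proposal is correct and is precisely the routine verification the paper leaves implicit with its remark that the lemma ``is not hard to see from equations~(\ref{layer0})--(\ref{lastlayer}).'' There is no alternative approach to compare against; you have simply written out the details the authors omitted.
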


Let
\begin{eqnarray}
B
&=&\left\{u\in [n]\mid \text{\rm deg}_{H^{(q(n))}}(u)\ge \delta n^{1/(h-1)}-2
\right\},\label{badset}\\
{\cal E}
&\stackrel{\text{def.}}{=}&
\left[
E_G^{(q(n))}\setminus
\left(\bigcup_{i,j\in\{0,1,\ldots,h\},\,|i-j|\ge 2}\,
V_i\times V_j\right)
\right]
\cup
\left(\left\{\hat{\alpha}\right\}\times
\left(V_h\setminus \left(B\cup S\right)
\right)\right).
\,\,\,\,\,\,\,\,\,\,\,\label{finalgraphedgeset}
\end{eqnarray}
By equation~(\ref{theoptimalpoint}),
$\hat{\alpha}\notin V_h\setminus(B\cup S)$,
which together with
equation~(\ref{completegraphedgeset}) and Lemma~\ref{chainofgraphs}
forbids any edge
in $\cal E$
from being
a self-loop.
For
all distinct $u$, $v\in [n]$,
\begin{eqnarray}
w\left(u,v\right)
\stackrel{\text{def.}}{=}
\left\{
\begin{array}{ll}
1/2,&
\text{if one of $u$ and $v$ is $\hat{\alpha}$ and the other is in $V_h\setminus (B\cup S)$,}\\
1, &\text{otherwise.}
\end{array}
\right.
\label{newedgeweightfunction}
\end{eqnarray}
Furthermore,
let
\begin{eqnarray}
{\cal G}&\stackrel{\text{def.}}{=}&\left([n],{\cal E},w\right)
\label{finalgraph}
\end{eqnarray}
be a weighted undirected graph.

\begin{lemma}\label{sizeofthenonlastlayers}
$$
\sum_{j=1}^{h-1}\,\left|V_j\right|
\le 2\delta^{h-1}n.
$$
\end{lemma}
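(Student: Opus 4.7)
The plan is to show by induction that $|V_j| \le (\delta n^{1/(h-1)})^j$ for every $j \in [h-1]$, and then sum the resulting geometric series via Fact~\ref{geometricseriesbound}.

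For the base case $j=1$, I would use equation~(\ref{layer1}) to write $|V_1| \le |N_{Q^{(q(n))}}(\hat{\alpha})| = \text{deg}_{Q^{(q(n))}}(\hat{\alpha})$, then invoke Lemma~\ref{theoptimalpointisinvolvedinafewqueries} (at $i=q(n)$) to bound this by $\delta n^{1/(h-1)}$. For the inductive step with $j \in \{2,3,\ldots,h-1\}$, equation~(\ref{subsequentlayers}) gives $V_j \subseteq N_{H^{(q(n))}}(V_{j-1})$, so
\begin{equation*}
|V_j| \le \sum_{u \in V_{j-1}} \text{deg}_{H^{(q(n))}}(u).
\end{equation*}
By Lemma~\ref{boundonnumberofincidentedges}, each such degree is strictly less than $\delta n^{1/(h-1)}$, yielding $|V_j| \le |V_{j-1}| \cdot \delta n^{1/(h-1)}$, and the induction closes.

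Finally, with $r \stackrel{\text{def.}}{=} \delta n^{1/(h-1)}$, inequality~(\ref{tediouscondition2}) gives $r > 3 \ge 2$, so Fact~\ref{geometricseriesbound} applies:
\begin{equation*}
\sum_{j=1}^{h-1} |V_j| \le \sum_{j=1}^{h-1} r^j \le \sum_{j=0}^{h-1} r^j \le 2r^{h-1} = 2\bigl(\delta n^{1/(h-1)}\bigr)^{h-1} = 2\delta^{h-1} n.
\end{equation*}

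There is no real obstacle here; the only mildly delicate point is that layer $V_1$ is defined through the query graph $Q^{(q(n))}$ whereas all later layers are grown through $H^{(q(n))}$, so the base case and inductive step use different degree bounds (Lemma~\ref{theoptimalpointisinvolvedinafewqueries} versus Lemma~\ref{boundonnumberofincidentedges}). Both happen to give the same quantitative bound $\delta n^{1/(h-1)}$, which is precisely what makes the geometric sum telescope cleanly into $2\delta^{h-1} n$.
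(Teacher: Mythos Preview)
Your proof is correct and follows essentially the same approach as the paper: both use Lemma~\ref{theoptimalpointisinvolvedinafewqueries} to bound $|V_1|$, Lemma~\ref{boundonnumberofincidentedges} with equation~(\ref{subsequentlayers}) to get the recursion $|V_{j+1}|\le |V_j|\cdot\delta n^{1/(h-1)}$, and then Fact~\ref{geometricseriesbound} together with inequality~(\ref{tediouscondition2}) to sum the geometric series. The only cosmetic difference is that the paper factors out $|V_1|$ and applies Fact~\ref{geometricseriesbound} to the $(h-1)$-term series $\sum_{k=0}^{h-2}|V_1|r^k\le 2|V_1|r^{h-2}$ before plugging in $|V_1|\le r$, whereas you first establish $|V_j|\le r^j$ by induction and then bound $\sum_{j=0}^{h-1}r^j\le 2r^{h-1}$ directly.
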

\begin{proof}
By Lemma~\ref{boundonnumberofincidentedges} and
equation~(\ref{subsequentlayers}),
$$\left|V_{j+1}\right|
\le \left|V_j\right|\cdot \delta n^{1/(h-1)}$$
for all $j\in[h-2]$.
Therefore,
$\sum_{j=1}^{h-1}\,|V_j|$
is bounded from above by the $(h-1)$-term geometric series
with the common ratio
of $\delta n^{1/(h-1)}$
and the initial value of $|V_1|$.
Consequently,
\begin{eqnarray}
\sum_{j=1}^{h-1}\,\left|V_j\right|
\stackrel{\text{(\ref{tediouscondition2})~and~Fact~\ref{geometricseriesbound}}}{\le}
2\cdot \left|V_1\right|\cdot \delta^{h-2} n^{(h-2)/(h-1)}.
\label{sumoflayersizes}
\end{eqnarray}

By Lemma~\ref{theoptimalpointisinvolvedinafewqueries},
$|N_{Q^{(q(n))}}(\hat{\alpha})|\le \delta n^{1/(h-1)}$.
So by equation~(\ref{layer1}), we have
$|V_1|\le \delta n^{1/(h-1)}$,
which together with
inequality~(\ref{sumoflayersizes})
completes the proof.
\end{proof}

\begin{lemma}\label{manysafepointstohavesmalldistances}
$$
\left|V_h\setminus\left(B\cup S\right)
\right|
\ge n\left(1-2\delta^{h-1}
-\frac{h}{\delta}\cdot o(1)-\delta\right).
$$
\end{lemma}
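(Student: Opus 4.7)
The plan is to combine the size bounds already established: the partition property from Lemma~\ref{disjointnessoflayers}, the control on the non-last layers from Lemma~\ref{sizeofthenonlastlayers}, the bound on $|B|$ from Lemma~\ref{fewverticeshavemanymarkedincidentedges}, and the trivial bound $|S|=\lfloor\delta n\rfloor\le\delta n$. Everything reduces to arithmetic.

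First I would write
\begin{eqnarray*}
\left|V_h\setminus(B\cup S)\right|
\ge \left|V_h\right|-\left|B\right|-\left|S\right|.
\end{eqnarray*}
To lower-bound $|V_h|$, I would use Lemma~\ref{disjointnessoflayers}, which gives $|V_h|=n-\sum_{j=0}^{h-1}|V_j|$. By equation~(\ref{layer0}) we have $|V_0|=1$, and by Lemma~\ref{sizeofthenonlastlayers} we have $\sum_{j=1}^{h-1}|V_j|\le 2\delta^{h-1}n$. Hence
\begin{eqnarray*}
\left|V_h\right|\ge n-1-2\delta^{h-1}n.
\end{eqnarray*}

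Next I would apply Lemma~\ref{fewverticeshavemanymarkedincidentedges} to the set $B$ defined in equation~(\ref{badset}), obtaining $|B|=(h/\delta)\cdot o(n)$. Combined with $|S|\le\delta n$ and the bound on $|V_h|$, this yields
\begin{eqnarray*}
\left|V_h\setminus(B\cup S)\right|
\ge n-1-2\delta^{h-1}n-\frac{h}{\delta}\cdot o(n)-\delta n
= n\left(1-2\delta^{h-1}-\frac{h}{\delta}\cdot o(1)-\delta-\frac{1}{n}\right),
\end{eqnarray*}
and the stray $1/n$ term is absorbed into the $o(1)$ factor (which is permissible since $h$ and $\delta$ are constants).

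There is no real obstacle here; this lemma is purely bookkeeping that stitches together the three quantitative ingredients (layer partition, size of $B$, size of $S$) proved above. The only thing to be careful about is matching the form of the claimed inequality, in particular keeping the constants $h$ and $\delta$ explicit on the right-hand side as in the statement rather than hiding them inside the $o(1)$.
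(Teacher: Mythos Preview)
Your proposal is correct and follows essentially the same approach as the paper: both bound $|V_h|$ via Lemmas~\ref{disjointnessoflayers}--\ref{sizeofthenonlastlayers} and equation~(\ref{layer0}), bound $|B|$ via Lemma~\ref{fewverticeshavemanymarkedincidentedges} and equation~(\ref{badset}), use $|S|=\lfloor\delta n\rfloor$, and then subtract. The paper's proof is just a terser version of yours, leaving the inclusion-exclusion step $|V_h\setminus(B\cup S)|\ge |V_h|-|B|-|S|$ and the absorption of the $-1$ into the $o(1)$ implicit.
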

\begin{proof}
By Lemma~\ref{fewverticeshavemanymarkedincidentedges}
and equation~(\ref{badset}),
$|B|=(h/\delta)\cdot o(n)$.
By construction, $|S|=\lfloor\delta n\rfloor$.
Finally,
\begin{eqnarray}
\left|V_h\right|
\stackrel{\text{Lemmas~\ref{disjointnessoflayers}--\ref{sizeofthenonlastlayers}}}{\ge}
n-2\delta^{h-1}n-\left|V_0\right|
\stackrel{\text{(\ref{layer0})}}{=}n-2\delta^{h-1}n-1.
\nonumber
\end{eqnarray}
\end{proof}

The following lemma says that $\hat{\alpha}$
has an average distance of
approximately $1/2$ to other points
w.r.t.\
the distance function
$\min\{d_{\cal G}(\cdot,\cdot),h\}$.

\begin{lemma}\label{closenesscentralityoftheoptimalpoint}
\begin{eqnarray}
\sum_{v\in [n]}\,
\min\left\{d_{\cal G}\left(\hat{\alpha},v\right),
h
\right\}
\le
n\cdot\left(\frac{1}{2}+2h\delta^{h-1}
+\frac{h^2}{\delta}\cdot o(1)+h\delta\right).
\nonumber
\end{eqnarray}
\end{lemma}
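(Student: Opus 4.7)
The plan is to partition $[n]$ according to the layer decomposition $(V_0, V_1, \ldots, V_h)$ of Lemma~\ref{disjointnessoflayers}, refine the last layer by removing $B \cup S$, and bound the contribution of each piece to $\sum_{v\in[n]} \min\{d_{\cal G}(\hat\alpha, v), h\}$. The dominant piece will be $V_h \setminus (B \cup S)$, whose vertices sit at distance exactly $1/2$ from $\hat\alpha$ and contribute the $n/2$ main term; the other pieces are all small and absorb into the error terms.

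First I would observe that by the second clause of equation~(\ref{finalgraphedgeset}), every $v \in V_h \setminus (B \cup S)$ satisfies $(\hat\alpha, v) \in \cal E$, and then by equation~(\ref{newedgeweightfunction}) that edge has weight $1/2$, so $d_{\cal G}(\hat\alpha, v) \leq 1/2$. Therefore
\begin{eqnarray*}
\sum_{v \in V_h \setminus (B \cup S)} \min\{d_{\cal G}(\hat\alpha, v), h\} \leq \frac{1}{2} \cdot |V_h \setminus (B \cup S)| \leq \frac{n}{2}.
\end{eqnarray*}
For the vertex $v = \hat\alpha \in V_0$, the contribution is $0$. For $v$ in the intermediate layers $V_1 \cup \cdots \cup V_{h-1}$ or in $V_h \cap (B \cup S)$, I simply bound $\min\{d_{\cal G}(\hat\alpha, v), h\} \leq h$ and count.

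The counting goes as follows. Lemma~\ref{sizeofthenonlastlayers} gives $\sum_{j=1}^{h-1} |V_j| \leq 2\delta^{h-1} n$, contributing at most $2h\delta^{h-1} n$. Lemma~\ref{fewverticeshavemanymarkedincidentedges} together with equation~(\ref{badset}) gives $|B| = (h/\delta) \cdot o(n)$, contributing at most $(h^2/\delta) \cdot o(n)$. Finally $|S| = \lfloor \delta n \rfloor \leq \delta n$ contributes at most $h\delta n$. Summing:
\begin{eqnarray*}
\sum_{v \in [n]} \min\{d_{\cal G}(\hat\alpha, v), h\}
\leq \frac{n}{2} + 2h\delta^{h-1} n + \frac{h^2}{\delta} \cdot o(n) + h\delta n,
\end{eqnarray*}
which is the claimed bound.

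There is no real obstacle here; the only point requiring care is making sure that the direct edges from $\hat\alpha$ to $V_h \setminus (B \cup S)$ are genuinely in $\cal E$ (this is immediate from the second union in equation~(\ref{finalgraphedgeset})) and that each such edge has the small weight $1/2$ (immediate from equation~(\ref{newedgeweightfunction}), since $\hat\alpha \notin V_h \setminus (B \cup S)$ as noted just after~(\ref{finalgraphedgeset})). The bookkeeping then reduces to combining Lemmas~\ref{sizeofthenonlastlayers} and~\ref{fewverticeshavemanymarkedincidentedges} with $|S| \leq \delta n$, essentially the same arithmetic that already appeared in the proof of Lemma~\ref{manysafepointstohavesmalldistances}.
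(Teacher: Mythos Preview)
Your proposal is correct and follows essentially the same approach as the paper: observe that $d_{\cal G}(\hat\alpha,v)\le 1/2$ for $v\in V_h\setminus(B\cup S)$, and bound the remaining vertices by $h$ times their count. The paper's proof is slightly more compressed in that it invokes Lemma~\ref{manysafepointstohavesmalldistances} directly to bound $|[n]\setminus(V_h\setminus(B\cup S))|$, whereas you unpack that bound into its constituents (Lemmas~\ref{sizeofthenonlastlayers} and~\ref{fewverticeshavemanymarkedincidentedges} and $|S|\le\delta n$), as you yourself note.
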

\begin{proof}
By equations~(\ref{finalgraphedgeset})--(\ref{finalgraph}),
$d_{\cal G}(\hat{\alpha},v)\le1/2$
for all $v\in V_h\setminus(B\cup S)$.
This and Lemma~\ref{manysafepointstohavesmalldistances}
complete the proof.
\end{proof}

\subsection{A metric consistent with {\sf Adv}'s answers}
\label{consistencywithametricsubsection}

This subsection constructs a metric $d\colon[n]^2\to[0,\infty)$
consistent with {\sf Adv}'s answers in line~17.
So Lemma~\ref{theclosenesscentralityoftheoutputofthealgorithm}
will require $z$, which is the output of
$A^{\sf Adv}$,
to have
an average distance (w.r.t.\ $d$) of
at least
approximately $h$ to other points.
Although $d(\cdot,\cdot)$ will not be exactly $\min\{d_{\cal G}(\cdot,\cdot),h\}$,
Lemma~\ref{closenesscentralityoftheoptimalpoint}
will
forbid
$\sum_{v\in[n]}\,d(\hat{\alpha},v)/n$ from exceeding $1/2$ by too much.
Details follow.

Recall that
$H^{(i)}$ and $G^{(i)}$
are
unweighted for all $i\in\{0,1,\ldots,q(n)\}$.
They
can be treated as
having the weight function $w$
while preserving
$d_{H^{(i)}}(\cdot,\cdot)$
and $d_{G^{(i)}}(\cdot,\cdot)$, as shown by the lemma below.

\begin{lemma}\label{consistencyofweights}
For all $i\in \{0,1,\ldots,q(n)\}$,
each
path $P$ in $H^{(i)}$ or $G^{(i)}$
has exactly $w(P)$ edges.
\end{lemma}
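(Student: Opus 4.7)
The plan is to reduce the statement to a single observation: no edge of any $H^{(i)}$ or $G^{(i)}$ is incident to $\hat{\alpha}$. Once this is established, equation~(\ref{newedgeweightfunction}) immediately gives $w(u,v) = 1$ for every edge $(u,v)$ in those graphs, since the only way $w$ takes the value $1/2$ is for an edge between $\hat{\alpha}$ and a vertex of $V_h \setminus (B \cup S)$. Summing over the edges of an arbitrary path $P$ then yields $w(P) = |\{\text{edges on } P\}|$, as required.

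To show that $\hat{\alpha}$ is isolated in each $H^{(i)}$ and $G^{(i)}$, I would first recall that $\hat{\alpha} \in S \setminus \{z\} \subseteq S$ by equation~(\ref{theoptimalpoint}). Equation~(\ref{completegraphedgeset}) then shows that $E_G^{(0)}$ consists solely of pairs of distinct vertices in $[n] \setminus S$, so $\hat{\alpha}$ is not an endpoint of any edge in $E_G^{(0)}$. Next, Lemma~\ref{chainofgraphs} gives the chain
\[
E_H^{(i)} \subseteq E_H^{(q(n))} \subseteq E_G^{(q(n))} \subseteq E_G^{(i)} \subseteq E_G^{(0)}
\]
for every $i \in \{0,1,\ldots,q(n)\}$, so no edge of $H^{(i)}$ or $G^{(i)}$ touches $\hat{\alpha}$ either.

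With that in hand, the final step is a one-line verification using equation~(\ref{newedgeweightfunction}): because no edge of $H^{(i)}$ or $G^{(i)}$ has $\hat{\alpha}$ as an endpoint, the first case of the definition of $w$ never applies, so every such edge has weight exactly $1$. For any path $P = (v_0, v_1, \ldots, v_k)$ in $H^{(i)}$ or $G^{(i)}$, this gives $w(P) = \sum_{j=0}^{k-1} w(v_j,v_{j+1}) = k$, which is precisely the number of edges of $P$.

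The proof is essentially routine bookkeeping, so there is no serious obstacle; the only subtlety is remembering to cite both Lemma~\ref{chainofgraphs} (to transport the isolation of $\hat{\alpha}$ from $G^{(0)}$ to all the later graphs) and equation~(\ref{theoptimalpoint}) (to place $\hat{\alpha}$ in $S$ in the first place).
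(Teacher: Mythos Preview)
Your proposal is correct and follows essentially the same route as the paper: use equation~(\ref{theoptimalpoint}) to place $\hat{\alpha}$ in $S$, observe via equation~(\ref{completegraphedgeset}) that no edge of $E_G^{(0)}$ is incident to $\hat{\alpha}$, propagate this to every $H^{(i)}$ and $G^{(i)}$ via Lemma~\ref{chainofgraphs}, and conclude from equation~(\ref{newedgeweightfunction}) that every edge in these graphs has weight~$1$. The paper phrases the middle step slightly differently (it notes that $w(u,v)=1$ for all distinct $u,v\in[n]\setminus S$ and that $E_G^{(0)}\subseteq([n]\setminus S)^2$), but the content is identical.
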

\begin{proof}
As $\hat{\alpha}\in S$ by equation~(\ref{theoptimalpoint}),
equation~(\ref{newedgeweightfunction}) implies
$w(u,v)=1$ for all distinct $u$, $v\in [n]\setminus S$.
This and equation~(\ref{completegraphedgeset})
imply that all edges in $E_G^{(0)}$
have weight $1$ w.r.t.\ $w$.
So by Lemma~\ref{chainofgraphs}, the edges
in $E_H^{(i)}\cup E_G^{(i)}$ have weight $1$
w.r.t.\ $w$.
Finally, recall that
$H^{(i)}=([n],E_H^{(i)})$ and $G^{(i)}=([n],E_G^{(i)})$.
\end{proof}


We now show that $H^{(q(n))}$ has an
edge in $V_i\times V_j$ only if $|i-j|\le1$.

\begin{lemma}\label{nocrossingmarkededges}
$$
E_H^{(q(n))}\cap
\left(
\bigcup_{i,j\in\{0,1,\ldots,h\},\,|i-j|\ge 2}\,
V_i\times V_j\right)=\emptyset.
$$
\end{lemma}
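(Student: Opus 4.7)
The plan is to argue by contradiction: suppose some $(u,v)\in E_H^{(q(n))}$ has $u\in V_i$ and $v\in V_j$ with $|i-j|\ge 2$, and by symmetry take $i<j$, so $j\ge i+2$. Only two scenarios arise, $i=0$ and $i\ge 1$, and I would handle them separately.

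For the case $i=0$, i.e., $u=\hat{\alpha}$, I would invoke a structural fact that has nothing to do with the BFS layering: no edge of $E_H^{(q(n))}$ is incident to any vertex of $S$. Indeed, Lemma~\ref{chainofgraphs} gives $E_H^{(q(n))}\subseteq E_G^{(0)}$, and equation~(\ref{completegraphedgeset}) built $E_G^{(0)}$ only out of pairs of vertices in $[n]\setminus S$. Since $\hat{\alpha}\in S$ by equation~(\ref{theoptimalpoint}), the putative edge $(u,v)$ cannot lie in $E_G^{(0)}$, a contradiction.

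For the case $1\le i<j$, I would lean on the BFS-style recurrence in (\ref{subsequentlayers}). From $j\le h$ and $j\ge i+2$ we get $i\le h-2$, so $i\in [h-2]$ and the recurrence applies: $V_{i+1}=N_{H^{(q(n))}}(V_i)\setminus\bigcup_{k=0}^{i}V_k$, which rearranges to $N_{H^{(q(n))}}(V_i)\subseteq\bigcup_{k=0}^{i+1}V_k$. Because $u\in V_i$ and $(u,v)\in E_H^{(q(n))}$, we get $v\in N_{H^{(q(n))}}(V_i)\subseteq\bigcup_{k=0}^{i+1}V_k$. Yet $v\in V_j$ with $j\ge i+2$, and Lemma~\ref{disjointnessoflayers} forces $V_j$ to be disjoint from $\bigcup_{k=0}^{i+1}V_k$, contradicting the above.

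The step I expect to be the main obstacle is the case $i=0$. The recurrence (\ref{subsequentlayers}) alone gives no information about edges incident to $V_0=\{\hat{\alpha}\}$, since $V_1$ is defined by $Q^{(q(n))}$-neighbors of $\hat{\alpha}$ rather than by $H^{(q(n))}$-neighbors; a priori one could fear an $H^{(q(n))}$-edge jumping from $\hat{\alpha}$ directly into some deep $V_j$. What rescues us is the almost overlooked design choice that $E_G^{(0)}$ excludes all edges touching $S$, which propagates to $E_H^{(q(n))}$ via Lemma~\ref{chainofgraphs}. Once that observation is isolated, the remaining case is routine BFS bookkeeping.
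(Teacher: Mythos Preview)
Your proposal is correct and follows essentially the same approach as the paper's proof: both argue by contradiction, split into the cases $k=0$ and $k\in[h-2]$, handle the former via $\hat{\alpha}\in S$ together with $E_H^{(q(n))}\subseteq E_G^{(0)}$ (so no $H$-edge touches $S$), and handle the latter via the BFS recurrence~(\ref{subsequentlayers}) combined with Lemma~\ref{disjointnessoflayers}. The only cosmetic difference is that the paper phrases the $k=0$ case in terms of $N_{G^{(0)}}(V_0)=\emptyset$ rather than directly in terms of edges, but the content is identical.
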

\begin{proof}
Suppose for contradiction that there exists
$e\in E_{H}^{(q(n))}$ with an endpoint in $V_k$ and the other in $V_\ell$,
where $k$, $\ell\in \{0,1,\ldots,h\}$ and $\ell\ge k+2$.
Then
$N_{H^{(q(n))}}(V_k)\cap V_\ell\neq\emptyset$, which
together with Lemma~\ref{disjointnessoflayers}
and $\ell\ge k+2$
implies
\begin{eqnarray}
N_{H^{(q(n))}}\left(V_k\right)\not\subseteq \bigcup_{j=0}^{k+1}\,V_j.
\label{setofneighborsdoesnotjumpacrosslayers}
\end{eqnarray}
%
As $\ell\ge k+2$ and $k$,
$\ell\in\{0,1,\ldots,h\}$,
we have
$0\le k\le h-2$.
\begin{enumerate}[{Case}~1:]
\item\label{crossingfromtheinitiallayer} $k=0$. By
equations~(\ref{theoptimalpoint})~and~(\ref{layer0}),
$V_0\subseteq S$.
So
$N_{G^{(0)}}(V_0)=\emptyset$
by
equations~(\ref{completegraphedgeset})--(\ref{initiallargegraph}).
Consequently, $N_{H^{(q(n))}}(V_0)=\emptyset$ by
Lemma~\ref{chainofgraphs},
contradicting
relation~(\ref{setofneighborsdoesnotjumpacrosslayers}).
\item\label{crossingfromamiddlelayer} $k\in[h-2]$.
Relation~(\ref{setofneighborsdoesnotjumpacrosslayers})
contradicts
equation~(\ref{subsequentlayers}) (with $j\leftarrow k$).
\end{enumerate}
A contradiction occurs in either case.
\end{proof}

\begin{lemma}\label{markededgesarepreservedinthefinalgraph}
$E_H^{(q(n))}\subseteq {\cal E}$.
\end{lemma}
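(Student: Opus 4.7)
The plan is to prove $E_H^{(q(n))} \subseteq \mathcal{E}$ by a direct chase through the definition of $\mathcal{E}$, combining the two inclusions already established for $E_H^{(q(n))}$. Recall from equation~(\ref{finalgraphedgeset}) that
$$\mathcal{E} = \left[E_G^{(q(n))}\setminus\left(\bigcup_{i,j\in\{0,1,\ldots,h\},\,|i-j|\ge 2} V_i\times V_j\right)\right] \cup \left(\{\hat{\alpha}\}\times(V_h\setminus(B\cup S))\right),$$
so it suffices to show that every edge in $E_H^{(q(n))}$ lies in the first of the two sets being unioned.

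First I would invoke Lemma~\ref{chainofgraphs}, which gives $E_H^{(q(n))}\subseteq E_G^{(q(n))}$. Then I would invoke Lemma~\ref{nocrossingmarkededges}, which says exactly that $E_H^{(q(n))}$ is disjoint from $\bigcup_{|i-j|\ge 2} V_i\times V_j$. Combining these two facts,
$$E_H^{(q(n))} \subseteq E_G^{(q(n))}\setminus\left(\bigcup_{i,j\in\{0,1,\ldots,h\},\,|i-j|\ge 2} V_i\times V_j\right) \subseteq \mathcal{E},$$
which is the desired conclusion.

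There is no real obstacle here; the lemma is essentially a packaging step that bundles Lemma~\ref{chainofgraphs} and Lemma~\ref{nocrossingmarkededges} into a single inclusion, presumably so that later subsections can reason about shortest paths in $\mathcal{G}$ by lifting shortest paths in $H^{(q(n))}$. The only subtlety worth double-checking is that $\mathcal{E}$ is being treated as a set of unordered pairs (as the section stipulates), so that $V_i\times V_j$ excluding an edge is symmetric in $i$ and $j$; this matches how Lemma~\ref{nocrossingmarkededges} is phrased, so no additional work is needed.
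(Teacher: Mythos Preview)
Your proof is correct and follows essentially the same approach as the paper: both combine Lemma~\ref{chainofgraphs} (giving $E_H^{(q(n))}\subseteq E_G^{(q(n))}$) with Lemma~\ref{nocrossingmarkededges} (disjointness from the cross-layer edges) and the definition of $\mathcal{E}$ in equation~(\ref{finalgraphedgeset}). The paper phrases the intermediate step as $E_G^{(q(n))}\cap E_H^{(q(n))}\subseteq\mathcal{E}$ before invoking Lemma~\ref{chainofgraphs}, but the logic is identical.
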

\begin{proof}
By
Lemma~\ref{nocrossingmarkededges} and
equation~(\ref{finalgraphedgeset}),
$E_G^{(q(n))}\cap E_H^{(q(n))}
\subseteq{\cal E}$.
This and Lemma~\ref{chainofgraphs}
complete the proof.
\end{proof}

\begin{lemma}\label{needtogothrougheachlayer}
Let $P$ be a path in $\cal G$ that visits no edges in
$\{\hat{\alpha}\}\times (V_h\setminus(B\cup S))$.
If the first and the last vertices of $P$ are in $V_h$ and $V_1$, respectively,
then
$w(P)\ge h-1$.
\end{lemma}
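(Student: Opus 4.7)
The plan is to use the layered structure of $(V_0, V_1, \ldots, V_h)$ together with the definition of $\mathcal{E}$ to argue that every edge on $P$ connects layers whose indices differ by at most $1$ and, moreover, has weight exactly $1$. Once both facts are established, the bound $w(P)\ge h-1$ is immediate, since $P$ must decrease the layer index from $h$ to $1$.

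First I would recall from equation~(\ref{finalgraphedgeset}) that $\mathcal{E}$ is the union of two sets: the ``restricted'' edges in $E_G^{(q(n))}\setminus\bigl(\bigcup_{|i-j|\ge 2} V_i\times V_j\bigr)$, and the ``special'' edges $\{\hat{\alpha}\}\times (V_h\setminus (B\cup S))$. The hypothesis on $P$ rules out the special edges, so every edge on $P$ lies in the restricted set. By Lemma~\ref{disjointnessoflayers}, each vertex lies in a unique $V_k$, hence every edge on $P$ has endpoints in layers $V_i, V_j$ with $|i-j|\le 1$.

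Next I would verify that each such edge has weight $1$ under the definition~(\ref{newedgeweightfunction}). The only way an edge has weight $1/2$ is if one endpoint is $\hat{\alpha}\in V_0$ and the other lies in $V_h\setminus(B\cup S)\subseteq V_h$; but then the layer indices differ by $h\ge 2$, contradicting the previous paragraph. Hence every edge of $P$ contributes weight exactly $1$, so $w(P)$ equals the number of edges on $P$.

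Finally, writing $P = v_0, v_1, \ldots, v_k$ with $v_0\in V_h$ and $v_k\in V_1$, and letting $\ell(v)\in\{0,1,\ldots,h\}$ denote the unique layer index of $v$, the facts above give $|\ell(v_{t+1})-\ell(v_t)|\le 1$ for each $t$. Hence
\[
h-1 \;=\; \ell(v_0)-\ell(v_k) \;\le\; \sum_{t=0}^{k-1}|\ell(v_{t+1})-\ell(v_t)| \;\le\; k \;=\; w(P),
\]
as required. I do not anticipate a real obstacle here; the only point needing care is ensuring the weight-$1/2$ edges are genuinely incompatible with adjacent-layer connections, which is exactly why the special edges must be excluded by hypothesis.
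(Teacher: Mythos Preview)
Your proof is correct and follows essentially the same approach as the paper's: both use the definition of $\mathcal{E}$ together with Lemma~\ref{disjointnessoflayers} to see that every edge on $P$ connects layers whose indices differ by at most $1$, verify that such edges have weight~$1$, and conclude that at least $h-1$ edges are needed. Your telescoping inequality $\ell(v_0)-\ell(v_k)\le\sum_t|\ell(v_{t+1})-\ell(v_t)|\le k$ is a clean way to finish; the paper phrases the same step as ``$P$ must visit at least one edge in $V_{i+1}\times V_i$ for each $i\in[h-1]$.''
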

\begin{proof}
By Lemma~\ref{disjointnessoflayers},
$\bigcup_{k=0}^h\, V_k=[n]$,
$V_{i+1}\cap V_i=\emptyset$ and
$(V_{i+1}\times V_i)\cap (V_{j+1}\times V_j)=
\emptyset$ for all distinct $i$, $j\in[h-1]$.
Because $P$ is a path in $\cal G$
visiting no edges in $\{\hat{\alpha}\}\times(V_h\setminus(B\cup S))$,
no edges on $P$ are in $V_i\times V_j$ for any $i$, $j\in \{0,1,\ldots,h\}$
with $|i-j|\ge 2$
by
equations~(\ref{finalgraphedgeset})~and~(\ref{finalgraph}).
This
forces
$P$, which is a
$V_h$-$V_1$ path,
to
visit at least one edge in
$V_{i+1}\times V_{i}$ for each $i\in[h-1]$
(for a total of at least $h-1$ edges).
As
$\hat{\alpha}\notin\bigcup_{i=1}^h\, V_i$
by
equations~(\ref{layer0})--(\ref{lastlayer}),
equation~(\ref{newedgeweightfunction}) gives
$w(u,v)=1$ for all
$(u,v)\in \bigcup_{i=1}^{h-1}\,V_{i+1}\times V_{i}$.
We have shown that $P$ has at least $h-1$ edges of weight (w.r.t.\ $w$)
$1$.
\end{proof}

We proceed to analyze shortest $a_i$-$b_i$ paths in ${\cal G}$, where
$i\in[q(n)]$.
Clearly,
such paths must be simple.

\begin{lemma}\label{theoptimalpointhasalongdistancetothepointswhosedistancetotheoptimalhasbeenqueriedfor}
Let $P$ be a shortest $a_i$-$b_i$ path in ${\cal G}$,
where $i\in[q(n)]$.
If $P$ visits exactly one edge in
$\{\hat{\alpha}\}\times (V_h\setminus(B\cup S))$
and $\hat{\alpha}\in \{a_i,b_i\}$,
then
$w(P)\ge h-1/2$.
\end{lemma}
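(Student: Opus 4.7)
The plan is to reduce to Lemma~\ref{needtogothrougheachlayer} after stripping off the unique weight-$1/2$ edge. First, by symmetry I would assume without loss of generality that $a_i=\hat{\alpha}$, so that $b_i \neq \hat{\alpha}$ (since {\sf Adv} forbids self-queries). Because $(a_i,b_i)$ is the $i$th query, line~16 of {\sf Adv} puts $(\hat{\alpha},b_i)$ into $Q^{(q(n))}$, so $b_i \in N_{Q^{(q(n))}}(\hat{\alpha})$, and equations~(\ref{layer0})--(\ref{layer1}) then force $b_i \in V_1$.

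Next I would exploit that $P$, being a shortest path in $\cal G$ with positive weights, is simple, so $\hat{\alpha}$ appears on $P$ exactly once — at the starting endpoint. The unique edge of $P$ lying in $\{\hat{\alpha}\}\times(V_h\setminus(B\cup S))$ is therefore incident to $\hat{\alpha}$ at position $0$; i.e., it is the first edge of $P$, and its other endpoint $v_1$ belongs to $V_h\setminus(B\cup S)$. Its weight is $1/2$ by equation~(\ref{newedgeweightfunction}).

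Let $P'$ be the subpath of $P$ from $v_1$ to $b_i$. Since $\hat{\alpha}$ does not appear on $P'$ (by simplicity and $b_i\neq\hat{\alpha}$), no edge of $P'$ is incident to $\hat{\alpha}$, so in particular $P'$ visits no edge in $\{\hat{\alpha}\}\times(V_h\setminus(B\cup S))$. Its first vertex is in $V_h$ and its last vertex $b_i$ is in $V_1$, so Lemma~\ref{needtogothrougheachlayer} applies and gives $w(P') \ge h-1$. Adding back the first edge, $w(P) = \tfrac{1}{2} + w(P') \ge h - \tfrac{1}{2}$.

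The only subtle step is the placement of the weight-$1/2$ edge: one must rule out that it sits somewhere in the middle of $P$, and this is where the simplicity of shortest paths (guaranteed because weights are positive, as emphasized in Sec.~\ref{definitionssection}) together with $\hat\alpha$ being an endpoint is essential. Everything else is bookkeeping on the partition $(V_0,\ldots,V_h)$ and invoking Lemma~\ref{needtogothrougheachlayer} as a black box.
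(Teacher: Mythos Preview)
Your proposal is correct and follows essentially the same route as the paper: assume $\hat{\alpha}=a_i$, use line~16 and equations~(\ref{layer0})--(\ref{layer1}) to place $b_i$ in $V_1$, peel off the first edge $(\hat{\alpha},v)$ with $v\in V_h\setminus(B\cup S)$ of weight $1/2$, and apply Lemma~\ref{needtogothrougheachlayer} to the remaining $v$-$b_i$ subpath. Your justification that the weight-$1/2$ edge must be the first one (because a simple path starting at $\hat{\alpha}$ has $\hat{\alpha}$ only at position~$0$) is exactly the content of the paper's footnote, just phrased directly rather than by contradiction.
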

\begin{proof}
Being shortest,
$P$ must be simple.
Assume $\hat{\alpha}=a_i$ for now.
Because
$P$ is a simple $\hat{\alpha}$-$b_i$ path
in ${\cal G}$
visiting exactly one edge
in $\{\hat{\alpha}\}\times (V_h\setminus(B\cup S))$,
it can be decomposed into an edge
$(\hat{\alpha},v)$, where $v\in V_h\setminus(B\cup S)$,
and
a $v$-$b_i$ path $\tilde{P}$ in ${\cal G}$
that visits no edges
in
$\{\hat{\alpha}\}\times (V_h\setminus(B\cup S))$.\footnote{If the first edge
on $P$ is not in $\{\hat{\alpha}\}\times
(V_h\setminus(B\cup S))$, then $P$'s later visit
of an edge in $\{\hat{\alpha}\}\times
(V_h\setminus(B\cup S))$ must make $P$ non-simple,
a contradiction.}
As $\hat{\alpha}=a_i$,
we have
$b_i\in N_{Q^{(q(n))}}(\hat{\alpha})$
by line~16 of {\sf Adv}.
So by
equations~(\ref{layer0})--(\ref{layer1}),
$b_i\in V_1\cup\{\hat{\alpha}\}$,
implying
$b_i\in V_1$ because querying for the distance from a point
to itself is forbidden and $\hat{\alpha}=a_i$.
In summary, $\tilde{P}$ is a
path in $\cal G$, from
$v\in V_h\setminus(B\cup S)$ to $b_i\in V_1$,
that visits
no edges in
$\{\hat{\alpha}\}\times(V_h\setminus(B\cup S))$.
So by Lemma~\ref{needtogothrougheachlayer}
(with $P\leftarrow \tilde{P}$),
\begin{eqnarray}
w\left(\tilde{P}\right)\ge h-1.
\label{thisequationnumberisaddedforbetterwriting}
\end{eqnarray}

As $v\in V_h$, we have $\hat{\alpha}\neq v$
by equations~(\ref{layer0})~and~(\ref{lastlayer}).
By the construction of $\tilde{P}$,
\begin{eqnarray}
w(P)=w\left(\hat{\alpha},v\right)+w\left(\tilde{P}\right)
\stackrel{\text{(\ref{newedgeweightfunction})}}{\ge}
\frac{1}{2}+w\left(\tilde{P}\right).
\label{thisequationnumberisaddedforbetterwriting2}
\end{eqnarray}
Inequalities~(\ref{thisequationnumberisaddedforbetterwriting})--(\ref{thisequationnumberisaddedforbetterwriting2})
show that $w(P)\ge h-1/2$.
The case of $\hat{\alpha}=b_i$ is symmetric: Reverse $P$
and exchange all the above
occurrences of ``$a_i$'' with ``$b_i$.''
\end{proof}

\begin{lemma}\label{theadjustmenttermforquerieddistancesinvolvingtheoptimalpoint}
For all $i\in[q(n)]$ with
$\hat{\alpha}\in\{a_i,b_i\}$,
$$
\chi\left[\exists v\in \left\{a_i,b_i\right\},\,
\left(v\in S\right)\land \left(\text{\rm deg}_{Q^{(i)}}(v)\le
\delta n^{1/(h-1)}
\right)
\right]
=1.
$$
\end{lemma}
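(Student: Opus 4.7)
The plan is to witness the existential by choosing $v:=\hat{\alpha}$, which lies in $\{a_i,b_i\}$ by hypothesis. The indicator then evaluates to $1$ as soon as we verify the two conjuncts $\hat{\alpha}\in S$ and $\text{\rm deg}_{Q^{(i)}}(\hat{\alpha})\le \delta n^{1/(h-1)}$.

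For the first conjunct, I would invoke the defining equation~(\ref{theoptimalpoint}) of $\hat{\alpha}$, which selects $\hat{\alpha}$ from $S\setminus\{z\}$; in particular $\hat{\alpha}\in S$. Note that equation~(\ref{tediouscondition2}) already ensured $S\setminus\{z\}$ is nonempty, so the argmin defining $\hat{\alpha}$ is meaningful.

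For the second conjunct, I would directly cite Lemma~\ref{theoptimalpointisinvolvedinafewqueries}, which asserts precisely that $\text{\rm deg}_{Q^{(i)}}(\hat{\alpha})\le\delta n^{1/(h-1)}$ for every $i\in[q(n)]$. Combining the two conjuncts, the predicate inside $\chi[\cdot]$ is satisfied by $v=\hat{\alpha}$, and the indicator equals $1$. There is no real obstacle here; the lemma is essentially a repackaging of the definition of $\hat{\alpha}$ together with the degree bound already established in Lemma~\ref{theoptimalpointisinvolvedinafewqueries}.
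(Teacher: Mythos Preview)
Your proposal is correct and matches the paper's own proof essentially verbatim: the paper also notes $\hat{\alpha}\in S$ from equation~(\ref{theoptimalpoint}) and then invokes Lemma~\ref{theoptimalpointisinvolvedinafewqueries} to finish.
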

\begin{proof}
By equation~(\ref{theoptimalpoint}), $\hat{\alpha}\in S$.
This and
Lemma~\ref{theoptimalpointisinvolvedinafewqueries}
complete the proof.
\end{proof}

\begin{lemma}\label{edgesarepreservedbetweengoodvertices}
For all distinct $u$, $v\in [n]\setminus(B\cup S)$,
we have
$(u,v)\in E_G^{(q(n))}$.
\end{lemma}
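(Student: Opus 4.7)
The plan is to combine two simple facts: every candidate edge between points of $[n]\setminus S$ is present in $E_G^{(0)}$, and line~8 of {\sf Adv} is the only mechanism by which edges leave $E_G^{(\cdot)}$. So I would fix distinct $u,v\in[n]\setminus(B\cup S)$ and trace what happens to the edge $(u,v)$ through the loop.

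First, since $u,v\notin S$ and $u\neq v$, equation~(\ref{completegraphedgeset}) yields $(u,v)\in E_G^{(0)}$. Next I would argue, by induction on $i\in\{0,1,\ldots,q(n)\}$, that $(u,v)\in E_G^{(i)}$. The only nontrivial step is the inductive step when the $i$th iteration executes lines~5--9. There, $(u,v)$ can be removed only if it lies in $E_G^{(i-1)}\setminus E_H^{(i)}$ and at least one of $u,v$ has degree at least $\delta n^{1/(h-1)}-2$ in $H^{(i)}$.

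So the core step is to rule out the degree condition. By Lemma~\ref{chainofgraphs}, $E_H^{(i)}\subseteq E_H^{(q(n))}$, hence $\text{deg}_{H^{(i)}}(u)\le \text{deg}_{H^{(q(n))}}(u)$ and likewise for $v$. But $u,v\notin B$ means, by equation~(\ref{badset}), that $\text{deg}_{H^{(q(n))}}(u),\text{deg}_{H^{(q(n))}}(v)<\delta n^{1/(h-1)}-2$. Thus neither endpoint meets the threshold in $H^{(i)}$, so line~8 cannot remove $(u,v)$, and the induction is complete. Setting $i=q(n)$ gives $(u,v)\in E_G^{(q(n))}$, as claimed.

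There is no real obstacle here; the lemma is essentially a bookkeeping check that the definition of $B$ in equation~(\ref{badset}) was chosen precisely to absorb the worst-case degree growth, and monotonicity of $E_H^{(\cdot)}$ (Lemma~\ref{chainofgraphs}) makes the final-degree bound valid at every intermediate step. The only thing to be careful about is not confusing the two clauses of line~8 — an edge may survive either because its endpoints have small $H^{(i)}$-degree or because it itself lies in $E_H^{(i)}$ — so stating the contrapositive of line~8 cleanly is all that is needed.
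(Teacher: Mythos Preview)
Your proposal is correct and follows essentially the same approach as the paper's proof: both start from $(u,v)\in E_G^{(0)}$ via equation~(\ref{completegraphedgeset}), use Lemma~\ref{chainofgraphs} to transfer the final-degree bound (from $u,v\notin B$ and equation~(\ref{badset})) to every intermediate $H^{(i)}$, and then run induction on $i$ using the form of line~8 to conclude that the edge is never removed.
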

\begin{proof}
As $u$, $v\in [n]\setminus B$,
equation~(\ref{badset}) implies
\begin{eqnarray}
\text{\rm deg}_{H^{(i)}}(u)&<&\delta n^{1/(h-1)}-2,
\label{goodverticeshavefewmarkedincidentedges1}\\
\text{\rm deg}_{H^{(i)}}(v)&<&\delta n^{1/(h-1)}-2
\label{goodverticeshavefewmarkedincidentedges2}
\end{eqnarray}
when $i=q(n)$.
So by
Lemma~\ref{chainofgraphs},
inequalities~(\ref{goodverticeshavefewmarkedincidentedges1})--(\ref{goodverticeshavefewmarkedincidentedges2})
hold for all $i\in[q(n)]$.

As $u$, $v\in [n]\setminus S$ and $u\neq v$,
we have
$(u,v)\in E_G^{(0)}$
by equation~(\ref{completegraphedgeset}).
By lines~8~and~13 of {\sf Adv},
{\small 
\begin{eqnarray}
E_G^{(i-1)}
\setminus
\left\{\left(x,y\right)\in[n]^2
\mid
\left(\text{deg}_{H^{(i)}}(x)\ge \delta n^{1/(h-1)}-2\right)
\lor
\left(\text{deg}_{H^{(i)}}(y)\ge \delta n^{1/(h-1)}-2\right)
\right\}
\subseteq E_G^{(i)}
\label{onlytheedgesincidenttobadverticescanbedeleted}
\end{eqnarray}
}
for all $i\in[q(n)]$.
By
inequalities~(\ref{goodverticeshavefewmarkedincidentedges1})--(\ref{goodverticeshavefewmarkedincidentedges2})
and relation~(\ref{onlytheedgesincidenttobadverticescanbedeleted}),
$(u,v)\in E_G^{(i)}$ if $(u,v)\in E_G^{(i-1)}$,
for all $i\in[q(n)]$.
The proof is complete by mathematical induction.
\end{proof}

\begin{lemma}\label{twohalfedgescanbereplacedbyafulledge}
Let $P$ be a shortest $a_i$-$b_i$ path in ${\cal G}$, where
$i\in[q(n)]$.
If $P$ visits exactly two edges in
$\{\hat{\alpha}\}\times(V_h\setminus(B\cup S))$,
then
$G^{(q(n))}$
has
an
$a_i$-$b_i$ path
with
exactly $w(P)$ edges.
\end{lemma}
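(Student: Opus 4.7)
The plan is to show that the two "discount" edges of $P$ can be replaced by a single $E_G^{(q(n))}$-edge, shortening the edge-count by one and making the resulting walk fit inside $G^{(q(n))}$, while still equalling $w(P)$ because that original weight already reflects the $1/2$-discount.

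First, since $P$ is shortest in $\cal G$ (which has positive weights by equation~(\ref{newedgeweightfunction})), it must be simple. By hypothesis $P$ contains exactly two edges of the form $\{\hat{\alpha}\}\times(V_h\setminus(B\cup S))$; both are incident to $\hat{\alpha}$, so by simplicity $\hat{\alpha}$ must appear exactly once on $P$, as an internal vertex (in particular $\hat{\alpha}\notin\{a_i,b_i\}$), and the two special edges form a subpath $u\text{-}\hat{\alpha}\text{-}v$ with distinct $u,v\in V_h\setminus(B\cup S)$.

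Next I would verify that every other edge on $P$ lies in $E_G^{(q(n))}$: by equation~(\ref{finalgraphedgeset}), ${\cal E}\setminus(\{\hat{\alpha}\}\times(V_h\setminus(B\cup S)))\subseteq E_G^{(q(n))}$, so this is immediate. Then I would invoke Lemma~\ref{edgesarepreservedbetweengoodvertices} with the pair $u,v\in[n]\setminus(B\cup S)$ (distinct) to conclude $(u,v)\in E_G^{(q(n))}$. Form $P'$ by splicing out the subpath $u\text{-}\hat{\alpha}\text{-}v$ and inserting the single edge $(u,v)$; since $P$ was simple and we only removed the vertex $\hat{\alpha}$ and the two edges incident to it, $P'$ is a simple $a_i$-$b_i$ path entirely inside $G^{(q(n))}$.

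Finally, for the edge-count: if $P$ has $k$ edges, then by equation~(\ref{newedgeweightfunction}) exactly two edges of $P$ contribute weight $1/2$ and the remaining $k-2$ edges contribute weight $1$ each, so $w(P)=k-1$. By construction $P'$ has $k-2+1=k-1$ edges, matching $w(P)$ exactly. The only subtlety I anticipate is ruling out $\hat{\alpha}\in\{a_i,b_i\}$ and ensuring $u\ne v$, both of which follow cleanly from simplicity of $P$ together with the "exactly two" hypothesis, so no real obstacle arises.
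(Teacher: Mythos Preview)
Your proposal is correct and follows essentially the same approach as the paper: both use simplicity to conclude the two special edges form a consecutive subpath $u\text{-}\hat{\alpha}\text{-}v$, replace it by the single edge $(u,v)\in E_G^{(q(n))}$ via Lemma~\ref{edgesarepreservedbetweengoodvertices}, and then match the edge-count of the resulting path to $w(P)$. The only cosmetic difference is that the paper invokes Lemma~\ref{consistencyofweights} to equate edge-count with weight and then computes $w(\tilde{P})=w(P)-\tfrac{1}{2}-\tfrac{1}{2}+1$, whereas you count edges directly; both arrive at the same conclusion.
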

\begin{proof}
Being shortest,
$P$ must be simple.
Therefore, the two edges of $P$ in
$\{\hat{\alpha}\}\times(V_h\setminus(B\cup S))$,
denoted $(u,\hat{\alpha})$ and $(\hat{\alpha},v)$,
are
consecutive on $P$.
Clearly,
$u\neq v$.
Replace the
subpath
$(u,\hat{\alpha},v)$
of $P$ by
the edge
$(u,v)$ to yield an $a_i$-$b_i$ path $\tilde{P}$.
Except for the two edges of $P$ in $\{\tilde{\alpha}\}\times (V_h\setminus(B\cup S))$
(which are $(u,\hat{\alpha})$ and $(\hat{\alpha},v)$),
all edges of $P$ are in $E_G^{(q(n))}$
by equation~(\ref{finalgraphedgeset}) and $P$'s being
a path in ${\cal G}=([n],{\cal E},w)$.
As $u$, $v\in V_h\setminus(B\cup S)$ and $u\neq v$,
$(u,v)\in E_G^{(q(n))}$
by Lemma~\ref{edgesarepreservedbetweengoodvertices}.
In summary,
all the edges of $\tilde{P}$
(including $(u,v)$ and the edges of $P$ not in
$\{\hat{\alpha}\}\times(V_h\setminus(B\cup S))$)
are in $E_G^{(q(n))}$.
Consequently,
$\tilde{P}$ is
an $a_i$-$b_i$
path in $G^{(q(n))}=([n],E_G^{(q(n))})$.
%
So
we are left
only to prove that $\tilde{P}$ has exactly $w(P)$ edges, which,
by Lemma~\ref{consistencyofweights} (with $P\leftarrow \tilde{P}$
and $i\leftarrow q(n)$),
is equivalent to proving
$w(\tilde{P})=w(P)$.

Note that $\hat{\alpha}\notin V_h\setminus(B\cup S)$ by
equation~(\ref{theoptimalpoint}).
By the construction of $\tilde{P}$ and
recalling that
$u$, $v\in V_h\setminus(B\cup S)$
and $u\neq v$,
$$
w\left(\tilde{P}\right)=w(P)-w\left(u,\hat{\alpha}\right)
-w\left(\hat{\alpha},v\right)+w\left(u,v\right)
\stackrel{\text{(\ref{newedgeweightfunction})}}{=}w(P)-\frac{1}{2}-\frac{1}{2}
+1=w(P).
$$
\end{proof}

\begin{lemma}\label{nonexistenttypeofpaths}
Every simple
path in $\cal G$ visiting exactly one edge
in $\{\hat{\alpha}\}\times(V_h\setminus(B\cup S))$
either starts or ends at $\hat{\alpha}$.
\end{lemma}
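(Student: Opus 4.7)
The plan is to prove the lemma by showing that the only edges of $\cal G$ incident to $\hat{\alpha}$ are precisely those in $\{\hat{\alpha}\}\times(V_h\setminus(B\cup S))$; this will immediately force $\hat{\alpha}$ to be an endpoint of any simple path that visits exactly one such edge.

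First, I will establish the structural fact that every edge of $\cal G$ touching $\hat{\alpha}$ lies in $\{\hat{\alpha}\}\times(V_h\setminus(B\cup S))$. By equation~(\ref{theoptimalpoint}), $\hat{\alpha}\in S$, so equation~(\ref{completegraphedgeset}) guarantees that $\hat{\alpha}$ is incident to no edge of $E_G^{(0)}$. Lemma~\ref{chainofgraphs} then yields $E_G^{(q(n))}\subseteq E_G^{(0)}$, so $\hat{\alpha}$ is incident to no edge of $E_G^{(q(n))}$ either. Looking at equation~(\ref{finalgraphedgeset}), the first clause of $\cal E$ therefore contributes no edge at $\hat{\alpha}$, and only the second clause $\{\hat{\alpha}\}\times(V_h\setminus(B\cup S))$ can do so. This proves the claim.

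With this fact in hand, let $P$ be a simple path in $\cal G$ visiting exactly one edge in $\{\hat{\alpha}\}\times(V_h\setminus(B\cup S))$, and suppose for contradiction that $\hat{\alpha}$ is neither the first nor the last vertex of $P$. If $\hat{\alpha}$ does not appear on $P$ at all, then no edge of $P$ is incident to $\hat{\alpha}$, contradicting the assumption that $P$ visits some edge in $\{\hat{\alpha}\}\times(V_h\setminus(B\cup S))$. Otherwise $\hat{\alpha}$ is an internal vertex of $P$; since $P$ is simple, exactly two edges of $P$ meet $\hat{\alpha}$, and by the structural fact above both of them must lie in $\{\hat{\alpha}\}\times(V_h\setminus(B\cup S))$. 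This contradicts ``exactly one.'' In either case we reach a contradiction, so $\hat{\alpha}$ must be an endpoint of $P$, i.e.\ $P$ starts or ends at $\hat{\alpha}$.

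The proof is essentially a bookkeeping argument and I do not anticipate a hard step; the only subtle point is being careful that $\hat{\alpha}\in S$ rules out the first clause of $\cal E$ at $\hat{\alpha}$ (via equation~(\ref{completegraphedgeset}) and Lemma~\ref{chainofgraphs}), after which simplicity of $P$ does all the work.
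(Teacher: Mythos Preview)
Your proof is correct and follows essentially the same approach as the paper: both establish that every edge of $\cal G$ incident to $\hat{\alpha}$ lies in $\{\hat{\alpha}\}\times(V_h\setminus(B\cup S))$ (using $\hat{\alpha}\in S$, equation~(\ref{completegraphedgeset}), Lemma~\ref{chainofgraphs}, and equation~(\ref{finalgraphedgeset})), after which the conclusion is immediate. You merely spell out the final ``easy to see'' step with an explicit case analysis on whether $\hat{\alpha}$ appears on $P$, which the paper omits.
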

\begin{proof}
By equation~(\ref{theoptimalpoint}), $\hat{\alpha}\in S$.
So by equation~(\ref{completegraphedgeset}) and Lemma~\ref{chainofgraphs},
$\hat{\alpha}$
is incident to no edges in $E_G^{(q(n))}$.
Consequently,
the set of all
edges of $\cal G$ incident to $\hat{\alpha}$
is
$\{\hat{\alpha}\}\times(V_h\setminus(B\cup S))$
by equation~(\ref{finalgraphedgeset}).
The lemma is now easy to see.
\end{proof}

\begin{lemma}\label{shortestpathsinfinalgraphdonotneededgesincidenttotheoptimalpoint}
For all $i\in[q(n)]$,
\begin{eqnarray}
&&\min\left\{d_{H^{(i)}}\left(a_i,b_i\right),
h-\frac{1}{2}
\cdot \chi\left[\exists v\in\left\{a_i,b_i\right\},\,
\left(v\in S\right)
\land\left(\text{\rm deg}_{Q^{(i)}}(v)
\le\delta n^{1/(h-1)}\right)
\right]
\right\}\nonumber\\
&\le&
\min\left\{d_{\cal G}\left(a_i,b_i\right),
h-\frac{1}{2}
\cdot \chi\left[\exists v\in \left\{a_i,b_i\right\},\,
\left(v\in S\right)\land\left(\text{\rm deg}_{Q^{(i)}}(v)
\leq\delta n^{1/(h-1)}
\right)
\right]
\right\}.\,\,\,\,\,\,\,\,\,\,\,\,\,\label{theconsistencyequation}
\end{eqnarray}
\end{lemma}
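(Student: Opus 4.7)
The plan is to reduce~(\ref{theconsistencyequation}) to the stronger one-sided bound
\[
\min\{d_{H^{(i)}}(a_i,b_i),\,h-\chi_i/2\}\le d_{\cal G}(a_i,b_i),
\]
where $\chi_i$ abbreviates the Iverson bracket common to both sides; combined with the trivial bound $\min\{\cdot,h-\chi_i/2\}\le h-\chi_i/2$, this yields the lemma. I fix a shortest $a_i$-$b_i$ path $P$ in $\cal G$. Because $w$ is positive, $P$ is simple, so $P$ meets $\hat{\alpha}$ at most once; since the only edges of $\cal G$ incident to $\hat{\alpha}$ lie in the half-weight set $\{\hat{\alpha}\}\times(V_h\setminus(B\cup S))$ (as observed in the proof of Lemma~\ref{nonexistenttypeofpaths}), $P$ uses at most two such edges. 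I case-split on whether this number is $0$, $1$, or $2$.

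In the $0$-edge case every edge of $P$ belongs to $E_G^{(q(n))}$ by~(\ref{finalgraphedgeset}) and has weight $1$ under $w$, so by Lemma~\ref{consistencyofweights}, $w(P)$ equals the number of edges on $P$, which is at least $d_{G^{(q(n))}}(a_i,b_i)$. In the $2$-edge case Lemma~\ref{twohalfedgescanbereplacedbyafulledge} furnishes an $a_i$-$b_i$ path in $G^{(q(n))}$ of exactly $w(P)$ edges, giving the same inequality $d_{G^{(q(n))}}(a_i,b_i)\le w(P)=d_{\cal G}(a_i,b_i)$. In either case I further split on $d_{G^{(i-1)}}(a_i,b_i)$: if this is at most $h$, Lemma~\ref{consistentanswers} yields $d_{H^{(i)}}(a_i,b_i)=d_{G^{(q(n))}}(a_i,b_i)\le d_{\cal G}(a_i,b_i)$, so the left-hand side of the target inequality is bounded by $d_{\cal G}(a_i,b_i)$; otherwise Lemma~\ref{ifthesmallestdistancerunsoutthenothersdoso} forces $d_{G^{(q(n))}}(a_i,b_i)>h$, hence $d_{\cal G}(a_i,b_i)>h$, and the left-hand side is bounded by $h-\chi_i/2<d_{\cal G}(a_i,b_i)$.

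The remaining $1$-edge case is handled by the two helper lemmas already established. Because $P$ is simple and uses exactly one edge incident to $\hat{\alpha}$, Lemma~\ref{nonexistenttypeofpaths} forces $\hat{\alpha}\in\{a_i,b_i\}$. Then Lemma~\ref{theoptimalpointhasalongdistancetothepointswhosedistancetotheoptimalhasbeenqueriedfor} gives $d_{\cal G}(a_i,b_i)=w(P)\ge h-1/2$, and Lemma~\ref{theadjustmenttermforquerieddistancesinvolvingtheoptimalpoint} forces $\chi_i=1$; hence the right-hand side of~(\ref{theconsistencyequation}) equals $h-1/2$, while the left-hand side is trivially at most $h-1/2$.

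The main difficulty is organizational rather than conceptual: essentially all substantive work (the path surgery that replaces two half-weight edges by one unit edge, and the lower bound on the weight of a path using exactly one half-weight edge) is already isolated in Lemmas~\ref{theoptimalpointhasalongdistancetothepointswhosedistancetotheoptimalhasbeenqueriedfor} and~\ref{twohalfedgescanbereplacedbyafulledge}, so the proof reduces to routing the distances $d_{H^{(i)}}$, $d_{G^{(q(n))}}$, and $d_{\cal G}$ correctly through the three cases above.
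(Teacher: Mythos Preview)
Your proposal is correct and follows essentially the same approach as the paper's proof: both pick a shortest $a_i$-$b_i$ path in $\cal G$, case-split on the number of half-weight edges it uses, and invoke Lemmas~\ref{consistentanswers}, \ref{ifthesmallestdistancerunsoutthenothersdoso}, \ref{theoptimalpointhasalongdistancetothepointswhosedistancetotheoptimalhasbeenqueriedfor}, \ref{theadjustmenttermforquerieddistancesinvolvingtheoptimalpoint}, \ref{twohalfedgescanbereplacedbyafulledge}, and \ref{nonexistenttypeofpaths} in the same places. The only cosmetic differences are that you fold the paper's five cases into three (merging its Cases~1/4 and~2/3, and replacing its Case~5 by the upfront observation that a simple path has at most two edges incident to $\hat{\alpha}$), and that you do not explicitly dispose of the trivial case where $\cal G$ has no $a_i$-$b_i$ path; you may wish to add a one-line remark for the latter.
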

\begin{proof}
Assume the existence of an $a_i$-$b_i$
path in ${\cal G}$ for, otherwise, $d_{\cal G}(a_i,b_i)=\infty$
and inequality~(\ref{theconsistencyequation}) trivially holds.
Pick any shortest $a_i$-$b_i$ path $P$ in ${\cal G}=([n],{\cal E},w)$.
Clearly,
\begin{eqnarray}
w(P)=d_{\cal G}\left(a_i,b_i\right).\label{optimalpathweight}
\end{eqnarray}
Being shortest,
$P$ must be simple.

We establish inequality~(\ref{theconsistencyequation})
in the following exhaustive cases:
\begin{enumerate}[{Case}~1:]
\item\label{caseofnotvisitingtheoptimalpoint}
$P$ visits no edges in $\{\hat{\alpha}\}\times(V_h\setminus(B\cup S))$.
By equation~(\ref{finalgraphedgeset}),
all edges of $P$ are in $E_G^{(q(n))}$, i.e., $P$
is a path in $G^{(q(n))}$.
So by Lemma~\ref{consistencyofweights} (with $i\leftarrow
q(n)$),
$w(P)$ equals the length of $P$ in the unweighted graph $G^{(q(n))}$.
Therefore,
\begin{eqnarray}
d_{G^{(q(n))}}\left(a_i,b_i\right)
\le w(P).\label{distanceinunweightedgraphmatchesthatinthefinalgraph}
\end{eqnarray}
If $d_{G^{(i-1)}}(a_i,b_i)\le h$, then
$$
d_{H^{(i)}}\left(a_i,b_i\right)=d_{G^{(q(n))}}\left(a_i,b_i\right)
$$
by Lemma~\ref{consistentanswers}.
Otherwise, $d_{G^{(q(n))}}(a_i,b_i)>h$
by Lemma~\ref{ifthesmallestdistancerunsoutthenothersdoso}.
In either case,
equations~(\ref{optimalpathweight})--(\ref{distanceinunweightedgraphmatchesthatinthefinalgraph})
imply inequality~(\ref{theconsistencyequation}).
\item $P$ visits exactly one edge in $\{\hat{\alpha}\}\times(V_h\setminus(B\cup S))$
and $\hat{\alpha}\in\{a_i,b_i\}$.
By
Lemma~\ref{theoptimalpointhasalongdistancetothepointswhosedistancetotheoptimalhasbeenqueriedfor}
and equation~(\ref{optimalpathweight}),
$d_{\cal G}(a_i,b_i)\ge h-1/2$.
This and Lemma~\ref{theadjustmenttermforquerieddistancesinvolvingtheoptimalpoint}
force
the right-hand side of inequality~(\ref{theconsistencyequation})
to
equal
$h-1/2$.
By Lemma~\ref{theadjustmenttermforquerieddistancesinvolvingtheoptimalpoint},
the left-hand side of inequality~(\ref{theconsistencyequation}) is
less than or equal to $h-1/2$.
We have verified inequality~(\ref{theconsistencyequation}).
\item $P$ visits
exactly one edge in $\{\hat{\alpha}\}\times(V_h\setminus(B\cup S))$
and $\hat{\alpha}\notin\{a_i,b_i\}$.
A contradiction to Lemma~\ref{nonexistenttypeofpaths} occurs.
\item
$P$ visits
exactly two edges in $\{\hat{\alpha}\}\times(V_h\setminus(B\cup S))$.
Lemma~\ref{twohalfedgescanbereplacedbyafulledge}
and
that $G^{(q(n))}$ is unweighted
imply
inequality~(\ref{distanceinunweightedgraphmatchesthatinthefinalgraph}).
Proceeding as in Case~\ref{caseofnotvisitingtheoptimalpoint},
equations~(\ref{optimalpathweight})--(\ref{distanceinunweightedgraphmatchesthatinthefinalgraph})
and
Lemmas~\ref{consistentanswers}--\ref{ifthesmallestdistancerunsoutthenothersdoso}
imply inequality~(\ref{theconsistencyequation})
no matter $d_{G^{(i-1)}}(a_i,b_i)\le h$ or otherwise.
\item
$P$ visits
at least three edges in $\{\hat{\alpha}\}\times(V_h\setminus(B\cup S))$.
Clearly, $P$ is non-simple, a
contradiction.
\end{enumerate}
\end{proof}

Define $d\colon [n]^2\to[0,\infty)$ by
\begin{eqnarray}
&&d\left(a_i,b_i\right)
=d\left(b_i,a_i\right)\nonumber\\
&\stackrel{\text{def.}}{=}&
\min\left\{d_{\cal G}\left(a_i,b_i\right),
h-\frac{1}{2}
\cdot \chi\left[\exists v\in \left\{a_i,b_i\right\},\,
\left(v\in S\right)\land\left(\text{deg}_{Q^{(i)}}(v)
\leq\delta n^{1/(h-1)}
\right)
\right]
\right\},
\,\,\,\,\,\,\,\,\,\,
\label{thefinalmetriconqueriedpairs}\\
&&d\left(u,v\right)\nonumber\\
&\stackrel{\text{def.}}{=}&
\min\left\{d_{\cal G}\left(u,v\right),h\right\}
\label{thefinalmetriconnonqueriedpairs}
\end{eqnarray}
for all $i\in[q(n)]$ and
$(u,v)\in [n]^2\setminus\{(a_j,b_j)\mid j\in[q(n)]\}$.
Because all pairs in $[n]^2$ are unordered in this section,
$(b_i,a_i)\notin [n]^2\setminus\{(a_j,b_j)\mid j\in[q(n)]\}$
for all $i\in[q(n)]$.
Consequently,
equation~(\ref{thefinalmetriconnonqueriedpairs})
does not redefine $d(b_i,a_i)$.
Because ${\cal G}$ is undirected, the right-hand side
of
equation~(\ref{thefinalmetriconnonqueriedpairs})
remains intact with $u$ and $v$ interchanged.
As $A$ does not repeat queries,
equation~(\ref{thefinalmetriconqueriedpairs})
defines $d(a_i,b_i)$ and $d(b_i,a_i)$ only once for each
$i\in[q(n)]$
(note
that
forbidding repeated queries
implies
the nonexistence of distinct $i$, $j\in [q(n)]$
satisfying
(1)~$a_i=a_j$ and $b_i=b_j$ or
(2)~$a_i=b_j$ and $b_i=a_j$).
It is now clear that $d(\cdot,\cdot)$
is a well-defined function on
$[n]^2$, a set of {\em unordered} pairs.\footnote{Even if we
considered
each pair in $[n]^2$ to be ordered,
our
arguments
would
still
have shown
that
$d(\cdot,\cdot)$ is well-defined and symmetric.}
So we have the following lemma.



\begin{lemma}\label{finalmetricissymmetric}
For all $x$, $y\in [n]$,
$d(x,y)=d(y,x)$.
\end{lemma}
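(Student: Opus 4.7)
The plan is to perform a case analysis on whether the unordered pair $\{x, y\}$ arises as a queried pair. Since all pairs in $[n]^2$ are treated as unordered throughout this section and $A$ does not repeat queries, every pair $\{x, y\}$ falls into exactly one of two disjoint categories: either $\{x, y\} = \{a_i, b_i\}$ for a unique $i \in [q(n)]$, or $\{x, y\}$ is not among the queried pairs. The case $x = y$ will be dispatched trivially up front.

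For the queried case, I will read off equation~(\ref{thefinalmetriconqueriedpairs}) directly: the definition assigns $d(a_i, b_i)$ and $d(b_i, a_i)$ simultaneously to one and the same expression. That expression is manifestly symmetric in $a_i$ and $b_i$, since both the set $\{a_i, b_i\}$ and the existentially quantified condition over $v \in \{a_i, b_i\}$ inside the $\chi[\cdot]$ are insensitive to swapping the two endpoints. So $d(x,y) = d(y,x)$ holds by construction.

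For the non-queried case, equation~(\ref{thefinalmetriconnonqueriedpairs}) yields $d(u,v) = \min\{d_{\cal G}(u,v),\, h\}$. I will invoke the undirectedness of ${\cal G} = ([n], {\cal E}, w)$ (equation~(\ref{finalgraph})) together with the symmetry of the weight function $w$ in its two arguments (equation~(\ref{newedgeweightfunction})), which together give $d_{\cal G}(x,y) = d_{\cal G}(y,x)$. Taking the minimum with $h$ preserves this, so $d(x,y) = d(y,x)$ follows.

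There is no substantive obstacle to this argument. The one conceivable source of difficulty, namely that $d$ might not even be well-defined as a function on unordered pairs, has already been dispatched in the paragraph immediately preceding the lemma statement, where the non-repetition of queries is used to rule out any conflicting assignment between equations~(\ref{thefinalmetriconqueriedpairs}) and~(\ref{thefinalmetriconnonqueriedpairs}). Given well-definedness, symmetry is essentially definitional, and the proof will be very short.
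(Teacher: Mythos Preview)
Your proposal is correct and mirrors the paper's own reasoning: the paper argues in the paragraph preceding the lemma that $d$ is well-defined on \emph{unordered} pairs (using non-repetition of queries, the explicit symmetry of equation~(\ref{thefinalmetriconqueriedpairs}), and the undirectedness of ${\cal G}$ for equation~(\ref{thefinalmetriconnonqueriedpairs})), and then states the lemma with no further proof. Your case split simply makes the same argument explicit.
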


\comment{ 
\begin{proof}
Because the pairs in $[n]^2$
are unordered in this section,
$d(x,y)=d(y,x)$
as long as $d(\cdot,\cdot)$
is well-defined, which we have verified.
As
$\mathop{\mathrm{Im}}(w)\subseteq(0,\infty)$
by equation~(\ref{newedgeweightfunction}),
we have
$\mathop{\mathrm{Im}}(d_{\cal G})\subseteq[0,\infty]$,
which together with
equations~(\ref{thefinalmetriconqueriedpairs})--(\ref{thefinalmetriconnonqueriedpairs})
and
$h\in\mathbb{Z}^+\setminus\{1\}$
proves
$d(x,y)\ge0$.
\end{proof}
}

\begin{lemma}\label{finalmetricsatisfiesidentityofindiscernibles}
For all distinct $x$, $y\in [n]$,
$d(x,x)=0$
and $d(x,y)\ge 1/2$.
\end{lemma}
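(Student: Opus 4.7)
The plan is to verify both claims by direct case analysis on the definition of $d$ in equations~(\ref{thefinalmetriconqueriedpairs})--(\ref{thefinalmetriconnonqueriedpairs}), together with the observation that every edge of $\cal G$ has weight at least $1/2$.

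For the identity $d(x,x) = 0$: I would first note that, by the convention established earlier, $A$ never queries the distance from a point to itself, so $(x,x) \notin \{(a_j,b_j) \mid j \in [q(n)]\}$. Hence equation~(\ref{thefinalmetriconnonqueriedpairs}) applies and gives $d(x,x) = \min\{d_{\cal G}(x,x), h\}$. The trivial one-vertex sequence $(x)$ qualifies as an $x$-$x$ path with weight $0$ under the paper's definition (the case $k = 0$), so $d_{\cal G}(x,x) = 0$ and thus $d(x,x) = 0$.

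For $d(x,y) \geq 1/2$ with $x \neq y$: I would extract from equation~(\ref{newedgeweightfunction}) the key fact that $w(u,v) \in \{1/2, 1\}$ for every edge of $\cal G$, so every edge has weight at least $1/2$. Since $x \neq y$, any $x$-$y$ path in $\cal G$ contains at least one edge and therefore has weight at least $1/2$; if no $x$-$y$ path exists, then $d_{\cal G}(x,y) = \infty$. Either way, $d_{\cal G}(x,y) \geq 1/2$. I would then split into two subcases based on whether $\{x,y\} = \{a_i,b_i\}$ for some $i \in [q(n)]$. In the queried case, equation~(\ref{thefinalmetriconqueriedpairs}) defines $d(x,y)$ as the minimum of $d_{\cal G}(x,y) \geq 1/2$ and $h - (1/2)\chi[\cdots] \geq h - 1/2 \geq 3/2$ (using $h \geq 2$). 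In the non-queried case, equation~(\ref{thefinalmetriconnonqueriedpairs}) gives $d(x,y) = \min\{d_{\cal G}(x,y), h\} \geq \min\{1/2, 2\} = 1/2$. Both subcases yield $d(x,y) \geq 1/2$.

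There is no real obstacle here; the statement is essentially an immediate consequence of the definitions. The only points requiring care are (i) confirming that $(x,x)$ indeed falls under equation~(\ref{thefinalmetriconnonqueriedpairs}) rather than being undefined, which follows from the no-self-query convention, and (ii) recognising that the minimum edge weight in $\cal G$ is $1/2$, so no $x$-$y$ walk for distinct endpoints can be shorter than $1/2$. Consequently this lemma should occupy only a few lines.
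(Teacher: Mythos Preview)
Your proposal is correct and follows essentially the same approach as the paper: both argue that $\mathop{\mathrm{Im}}(w)\subseteq[1/2,\infty)$ forces $d_{\cal G}(x,y)\ge 1/2$ for distinct $x,y$, then use $h\ge 2$ together with equations~(\ref{thefinalmetriconqueriedpairs})--(\ref{thefinalmetriconnonqueriedpairs}) to conclude $d(x,y)\ge 1/2$, and both handle $d(x,x)=0$ via the no-self-query convention and equation~(\ref{thefinalmetriconnonqueriedpairs}). Your explicit case split (queried vs.\ non-queried) is simply a more verbose rendering of the paper's appeal to both defining equations at once.
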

\begin{proof}
Recall that ${\cal G}=([n],{\cal E},w)$.
As $\mathop{\mathrm{Im}}(w)\subseteq[1/2,\infty)$
by equation~(\ref{newedgeweightfunction}),
we have
$d_{\cal G}(x,y)$, $d_{\cal G}(y,x)\ge 1/2$.
So by
equations~(\ref{thefinalmetriconqueriedpairs})--(\ref{thefinalmetriconnonqueriedpairs})
and
$h\in\mathbb{Z}^+\setminus\{1\}$,
$d(x,y)\ge1/2$.
Because we forbid queries for the distance from a point to itself,
$d(x,x)$ is not defined by equations~(\ref{thefinalmetriconqueriedpairs}).
By
equation~(\ref{thefinalmetriconnonqueriedpairs}),
$d(x,x)=0$.
\end{proof}

\begin{lemma}\label{lemmasayingthatwefinallyhaveametricspace}
$([n],d)$ is a metric space.
\end{lemma}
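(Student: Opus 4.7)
The plan is to verify the four metric axioms for $d$. Symmetry $d(x,y)=d(y,x)$ is immediate from Lemma~\ref{finalmetricissymmetric}, and both the identity of indiscernibles ($d(x,x)=0$) and positivity on distinct points ($d(x,y)\ge 1/2>0$) come from Lemma~\ref{finalmetricsatisfiesidentityofindiscernibles}. Thus the only axiom remaining is the triangle inequality $d(x,y)+d(y,z)\ge d(x,z)$ for all $x,y,z\in[n]$, which is where the work lies.

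I would first dispose of the degenerate cases: if any two of $x,y,z$ coincide, the inequality follows trivially from Lemma~\ref{finalmetricsatisfiesidentityofindiscernibles} together with the already-established symmetry. So assume $x,y,z$ are pairwise distinct. The key structural observation is that, by equations~(\ref{thefinalmetriconqueriedpairs})--(\ref{thefinalmetriconnonqueriedpairs}), every value $d(u,v)$ has the form $\min\{d_{\cal G}(u,v),c(u,v)\}$ where $c(u,v)\in\{h-1/2,\,h\}$; in particular, $c(u,v)\ge h-1/2$ and $d(u,v)\le h$. I would now split into two cases based on whether the $d_{\cal G}$-distances are below the cap threshold. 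Case~1: $d_{\cal G}(x,y)<h-1/2$ and $d_{\cal G}(y,z)<h-1/2$. Then the caps are inactive, giving $d(x,y)=d_{\cal G}(x,y)$ and $d(y,z)=d_{\cal G}(y,z)$; since shortest-path distances in the weighted graph ${\cal G}$ satisfy the triangle inequality, $d(x,z)\le d_{\cal G}(x,z)\le d_{\cal G}(x,y)+d_{\cal G}(y,z)=d(x,y)+d(y,z)$. Case~2: at least one of $d_{\cal G}(x,y),d_{\cal G}(y,z)$ is $\ge h-1/2$; without loss of generality say $d_{\cal G}(x,y)\ge h-1/2$, so $d(x,y)\ge\min\{h-1/2,c(x,y)\}=h-1/2$. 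Combining with $d(y,z)\ge 1/2$ from Lemma~\ref{finalmetricsatisfiesidentityofindiscernibles} yields $d(x,y)+d(y,z)\ge h\ge d(x,z)$.

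The main obstacle is that the cap appearing in the definition of $d$ is \emph{not} constant: it depends on the pair through the query-graph degree and membership in $S$. The standard trick of observing that $\min\{\rho,c\}$ is a metric when $\rho$ is and $c$ is a constant does not directly apply. The saving observation is quantitative: both possible cap values $h-1/2$ and $h$ are at least $h-1/2$, and the baseline lower bound $1/2$ on any distance between distinct points then guarantees that as soon as one leg of the path is ``cap-sized,'' the sum already dominates $h\ge d(x,z)$. This decouples the two regimes cleanly and avoids any interaction with the specifics of which pairs receive which cap.
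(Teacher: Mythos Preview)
Your proof is correct and follows essentially the same route as the paper: reduce to the triangle inequality via Lemmas~\ref{finalmetricissymmetric}--\ref{finalmetricsatisfiesidentityofindiscernibles}, observe that $d(u,v)=\min\{d_{\cal G}(u,v),c(u,v)\}$ with $c(u,v)\in\{h-1/2,h\}$, and then split according to whether $d_{\cal G}(x,y)$ and $d_{\cal G}(y,z)$ are both below $h-1/2$ (use the graph triangle inequality) or not (use $d(x,y)\ge h-1/2$, $d(y,z)\ge 1/2$, and $d(x,z)\le h$). The paper's only cosmetic difference is that it lists the degenerate and ``cap-sized'' cases as four enumerated cases rather than folding them together.
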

\begin{proof}
By
Lemmas~\ref{finalmetricissymmetric}--\ref{finalmetricsatisfiesidentityofindiscernibles},
we only need to
show that
\begin{eqnarray}
d\left(x,y\right)+d\left(y,z\right)\ge d\left(x,z\right)
\label{triangleinequalityforthefinalmetric}
\end{eqnarray}
for all $x$, $y$, $z\in [n]$.
It is well-known that a positively-weighted undirected graph
induces a distance function obeying the triangle
inequality; hence
\begin{eqnarray}
d_{\cal G}\left(x,y\right)
+d_{\cal G}\left(y,z\right)\ge d_{\cal G}\left(x,z\right).
\label{theknowninequalityongraph}
\end{eqnarray}

Because $\cal G$ is undirected, $d_{\cal G}(\cdot,\cdot)$ is symmetric.
So by
equations~(\ref{thefinalmetriconqueriedpairs})--(\ref{thefinalmetriconnonqueriedpairs}),
\begin{eqnarray}
d\left(x,y\right)
\in \left\{
\min\left\{d_{\cal G}\left(x,y\right),h\right\},
\min\left\{d_{\cal G}\left(x,y\right),h-\frac{1}{2}\right\}
\right\}
\label{asimplerepresentationofthefinalmetric}
\end{eqnarray}
for all $x$, $y\in [n]$.
Now
verify
inequality~(\ref{triangleinequalityforthefinalmetric})
in the following exhaustive (but not mutually exclusive) cases:
\begin{enumerate}[{Case}~1:]
\item $x=y$, $y=z$ or $x=z$.
Lemma~\ref{finalmetricsatisfiesidentityofindiscernibles}
implies inequality~(\ref{triangleinequalityforthefinalmetric}).
\item\label{casethatthefirstdistanceistruncated}
$d_{\cal G}(x,y)\ge h-1/2$ and $y\neq z$.
By relation~(\ref{asimplerepresentationofthefinalmetric}),
$d(x,y)\ge h-1/2$.
As $y\neq z$,
$d(y,z)\ge 1/2$ by
Lemma~\ref{finalmetricsatisfiesidentityofindiscernibles}.
By relation~(\ref{asimplerepresentationofthefinalmetric}),
$d(x,z)\le h$.
Summarizing the above
proves
inequality~(\ref{triangleinequalityforthefinalmetric}).
\item $d_{\cal G}(y,z)\ge h-1/2$
and $x\neq y$.
Replace ``$(x,y)$,'' ``$(y,z)$'' and ``$y\neq z$''
in the analysis of
Case~\ref{casethatthefirstdistanceistruncated}
by ``$(y,z)$,'' ``$(x,y)$'' and ``$x\neq y$,'' respectively.
\item $d_{\cal G}(x,y)<h-1/2$ and
$d_{\cal G}(y,z)<h-1/2$.
By relation~(\ref{asimplerepresentationofthefinalmetric}),
$d(x,y)=d_{\cal G}(x,y)$ and
$d(y,z)=d_{\cal G}(y,z)$.
So
inequalities~(\ref{triangleinequalityforthefinalmetric})--(\ref{theknowninequalityongraph})
share a common left-hand side.
To deduce
inequality~(\ref{triangleinequalityforthefinalmetric})
from
inequality~(\ref{theknowninequalityongraph}),
therefore,
it
suffices
to show
that
$d_{\cal G}(x,z)\ge d(x,z)$,
which follows from
relation~(\ref{asimplerepresentationofthefinalmetric}).
\end{enumerate}
\end{proof}

\begin{lemma}\label{markededgescannotmakeshorterpathsthaninthefinalgraph}
For all $i\in [q(n)]$,
$$d_{H^{(i)}}\left(a_i,b_i\right)\ge d_{\cal G}\left(a_i,b_i\right).$$
\end{lemma}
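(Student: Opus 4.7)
The plan is to show that every $a_i$-$b_i$ path in $H^{(i)}$ survives as a path in $\cal G$ whose $w$-weight matches its length in the unweighted $H^{(i)}$. If $d_{H^{(i)}}(a_i,b_i)=\infty$, the inequality is trivial, so assume $H^{(i)}$ contains an $a_i$-$b_i$ path and pick a shortest one $P$ in the unweighted $H^{(i)}$; let $\ell$ denote its number of edges, so $\ell=d_{H^{(i)}}(a_i,b_i)$.

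First I would argue that $P$ is also a path in $\cal G$. By Lemma~\ref{chainofgraphs}, $E_H^{(i)}\subseteq E_H^{(q(n))}$, and by Lemma~\ref{markededgesarepreservedinthefinalgraph}, $E_H^{(q(n))}\subseteq {\cal E}$. Hence every edge of $P$ lies in $\cal E$, so $P$ is an $a_i$-$b_i$ path in ${\cal G}=([n],{\cal E},w)$.

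Next I would compute $w(P)$. By Lemma~\ref{consistencyofweights} applied to $H^{(i)}$, treating it as having the weight function $w$ preserves the distance function; more precisely, the proof of that lemma shows that every edge of $E_H^{(i)}\cup E_G^{(i)}$ has $w$-weight exactly $1$. Since every edge on $P$ lies in $E_H^{(i)}$, we have $w(P)=\ell$.

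Putting these together, $d_{\cal G}(a_i,b_i)\le w(P)=\ell=d_{H^{(i)}}(a_i,b_i)$, which is the desired inequality. The main obstacle—really the only subtle point—is making sure the weight $w$ assigned to edges of $H^{(i)}$ in Lemma~\ref{consistencyofweights} coincides with the weight $w$ used in $\cal G$, so that the same path $P$ gets the same weight in both graphs; but since $w$ is a single fixed function on $[n]^2$ from equation~(\ref{newedgeweightfunction}) and all edges of $E_H^{(i)}$ have $w$-weight $1$ by the argument in Lemma~\ref{consistencyofweights}, this is immediate.
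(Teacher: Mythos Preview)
Your proof is correct and follows essentially the same route as the paper's: take a shortest $a_i$-$b_i$ path $P$ in the unweighted $H^{(i)}$, use Lemma~\ref{chainofgraphs} and Lemma~\ref{markededgesarepreservedinthefinalgraph} to embed it in $\cal G$, and use Lemma~\ref{consistencyofweights} to conclude that $w(P)$ equals its edge count. The only cosmetic difference is that the paper computes $w(P)$ before showing $P$ lies in $\cal G$, whereas you do these in the opposite order.
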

\begin{proof}
Assume the existence of an
$a_i$-$b_i$ path in $H^{(i)}$ for, otherwise,
$d_{H^{(i)}}(a_i,b_i)=\infty$
and there is nothing to prove.
Take a shortest $a_i$-$b_i$ path $P$ in the unweighted
graph $H^{(i)}=([n],E_H^{(i)})$.
So
$d_{H^{(i)}}(a_i,b_i)$ is the number of $P$'s edges.
By
Lemma~\ref{consistencyofweights},
$P$'s
number of
edges
equals
$w(P)$.
By Lemma~\ref{chainofgraphs}, $P$'s edges are in $E_H^{(q(n))}$.
So by
Lemma~\ref{markededgesarepreservedinthefinalgraph},
$P$ is a path in ${\cal G}=([n],{\cal E},w)$,
implying
$d_{\cal G}(a_i,b_i)\le w(P)$.
Summarizing the above proves the lemma.
\end{proof}

The following lemma says that line~17 of {\sf Adv}
answers queries consistently with
$d(\cdot,\cdot)$.

\begin{lemma}\label{finalconsistencylemma}
For all $i\in[q(n)]$,
\begin{eqnarray}
&&\min\left\{d_{H^{(i)}}\left(a_i,b_i\right),
h-\frac{1}{2}
\cdot \chi\left[\exists v\in\left\{a_i,b_i\right\},\,
\left(v\in S\right)
\land\left(\text{\rm deg}_{Q^{(i)}}(v)
\le\delta n^{1/(h-1)}\right)
\right]
\right\}\nonumber\\
&=&d\left(a_i,b_i\right).\label{finalconsistencyequation}
\end{eqnarray}
\end{lemma}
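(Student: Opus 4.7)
The plan is to prove the equality in (\ref{finalconsistencyequation}) by comparing its left-hand side to the definition of $d(a_i,b_i)$ in equation (\ref{thefinalmetriconqueriedpairs}). Writing $T_i \stackrel{\text{def.}}{=} h - (1/2)\cdot\chi[\exists v\in\{a_i,b_i\},\,(v\in S)\land(\text{deg}_{Q^{(i)}}(v)\le \delta n^{1/(h-1)})]$ for the truncation value, equation (\ref{thefinalmetriconqueriedpairs}) gives
\[
d(a_i,b_i)=\min\left\{d_{\cal G}(a_i,b_i),\,T_i\right\},
\]
while the left-hand side of (\ref{finalconsistencyequation}) is $\min\{d_{H^{(i)}}(a_i,b_i),\,T_i\}$. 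Thus the lemma reduces to the identity $\min\{d_{H^{(i)}}(a_i,b_i),T_i\}=\min\{d_{\cal G}(a_i,b_i),T_i\}$, which I will establish by two opposite inequalities.

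For the direction ``$\le$'', the upper bound is precisely the content of Lemma~\ref{shortestpathsinfinalgraphdonotneededgesincidenttotheoptimalpoint} applied to $i$, which states exactly that $\min\{d_{H^{(i)}}(a_i,b_i),T_i\}\le \min\{d_{\cal G}(a_i,b_i),T_i\}$. So I would simply cite that lemma.

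For the direction ``$\ge$'', I would invoke Lemma~\ref{markededgescannotmakeshorterpathsthaninthefinalgraph}, which asserts $d_{H^{(i)}}(a_i,b_i)\ge d_{\cal G}(a_i,b_i)$. Since taking the minimum against a fixed value $T_i$ is monotone, this immediately gives $\min\{d_{H^{(i)}}(a_i,b_i),T_i\}\ge \min\{d_{\cal G}(a_i,b_i),T_i\}=d(a_i,b_i)$, concluding the proof.

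The only potential subtlety is making sure equation (\ref{thefinalmetriconqueriedpairs}) really applies to the pair $(a_i,b_i)$ so that the definition unfolding above is legitimate; this is immediate because $(a_i,b_i)$ is by construction one of the queried pairs, and the note following (\ref{thefinalmetriconnonqueriedpairs}) already verifies that $d$ is well-defined on $[n]^2$ (unordered) without redefinition. There is no hard step here: the lemma is a bookkeeping consequence of the two preceding lemmas, which together pin down $d_{H^{(i)}}(a_i,b_i)$ and $d_{\cal G}(a_i,b_i)$ to agree after truncation by $T_i$.
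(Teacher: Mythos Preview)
Your proof is correct and follows essentially the same approach as the paper: use equation~(\ref{thefinalmetriconqueriedpairs}) to unfold $d(a_i,b_i)$, then obtain the ``$\le$'' direction from Lemma~\ref{shortestpathsinfinalgraphdonotneededgesincidenttotheoptimalpoint} and the ``$\ge$'' direction from Lemma~\ref{markededgescannotmakeshorterpathsthaninthefinalgraph}. Your write-up is slightly more explicit (introducing $T_i$ and spelling out the monotonicity of $\min$), but the argument is the same.
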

\begin{proof}
Lemma~\ref{shortestpathsinfinalgraphdonotneededgesincidenttotheoptimalpoint}
and equation~(\ref{thefinalmetriconqueriedpairs})
prove the ``$\le$'' part of
equation~(\ref{finalconsistencyequation}).
On the other hand,
Lemma~\ref{markededgescannotmakeshorterpathsthaninthefinalgraph}
and
equation~(\ref{thefinalmetriconqueriedpairs})
imply
the ``$\ge$'' part of
equation~(\ref{finalconsistencyequation}).
\end{proof}

\subsection{Putting things together}
\label{puttingthingstogethersubsection}

We now arrive at our main result.

\begin{theorem}\label{ourlowerbound}
{\sc Metric $1$-median}
has no deterministic $o(n^{1+1/(h-1)})$-query
$(2h-\epsilon)$-approximation
algorithms for any constants $h\in\mathbb{Z}^+\setminus\{1\}$
and $\epsilon>0$.
\end{theorem}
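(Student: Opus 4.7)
The plan is to assemble the main theorem from the machinery already built: Lemma~\ref{finalconsistencylemma} tells us $d(a_i,b_i)$ is exactly what {\sf Adv} answered on query $i$, and Lemma~\ref{lemmasayingthatwefinallyhaveametricspace} confirms $d$ is a bona fide metric. Therefore the execution $A^d$ is indistinguishable from $A^{\sf Adv}$; in particular $A^d$ must output the same point $z$. If $A$ were a $(2h-\epsilon)$-approximation algorithm, then $z$ must be a $(2h-\epsilon)$-approximate $1$-median of the metric space $([n],d)$, so in particular $\sum_{v\in[n]}d(z,v)\le(2h-\epsilon)\cdot\sum_{v\in[n]}d(\hat{\alpha},v)$. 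The goal is to derive a contradiction by lower bounding the left-hand side and upper bounding the right-hand side.

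For the lower bound on $\sum_{v\in[n]} d(z,v)$, I would use $I$ from equation~(\ref{indexset}): since $A^{\sf Adv}$ queries the distance from $z$ to every other point before halting, the indices in $I$ enumerate exactly the queries of the form $(z,c_i)$, with $\{c_i\}_{i\in I}=[n]\setminus\{z\}$. Applying Lemma~\ref{finalconsistencylemma} termwise turns the bound of Lemma~\ref{theclosenesscentralityoftheoutputofthealgorithm} into
\[
\sum_{v\in[n]}d(z,v)=\sum_{i\in I}d(a_i,b_i)\ge n\cdot\bigl(h-2h\delta^{h-1}-o(1)-\delta\bigr),
\]
where the missing term $d(z,z)=0$ does not affect the sum.

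For the upper bound on $\sum_{v\in[n]} d(\hat{\alpha},v)$, note that equations~(\ref{thefinalmetriconqueriedpairs})--(\ref{thefinalmetriconnonqueriedpairs}) immediately give $d(\hat{\alpha},v)\le\min\{d_{\cal G}(\hat{\alpha},v),h\}$ for every $v\in[n]$. Summing and invoking Lemma~\ref{closenesscentralityoftheoptimalpoint} yields
\[
\sum_{v\in[n]}d(\hat{\alpha},v)\le n\cdot\Bigl(\tfrac{1}{2}+2h\delta^{h-1}+\tfrac{h^{2}}{\delta}\cdot o(1)+h\delta\Bigr).
\]

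Combining the two bounds, the approximation ratio witnessed by $z$ is at least
\[
\frac{h-2h\delta^{h-1}-o(1)-\delta}{\tfrac{1}{2}+2h\delta^{h-1}+\tfrac{h^{2}}{\delta}\cdot o(1)+h\delta},
\]
which tends (as $n\to\infty$) to $(h-2h\delta^{h-1}-\delta)/(\tfrac{1}{2}+2h\delta^{h-1}+h\delta)$, and then to $2h$ as $\delta\to0$. So for any fixed $\epsilon>0$, choosing $\delta$ small enough and then $n$ large enough makes this ratio strictly exceed $2h-\epsilon$, contradicting the assumption that $A$ is a $(2h-\epsilon)$-approximation algorithm.

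The conceptually delicate point, already handled by the preceding lemmas, is the interplay between the $o(n)$-order error terms and the two constants $\delta$ and $\epsilon$: the quantification must be ``fix $\delta=\delta(\epsilon)$ small, then take $n$ large'' so that the $o(1)/\delta$ and $o(1)\cdot h^{2}/\delta$ slack introduced by Lemma~\ref{fewverticeshavemanymarkedincidentedges} is dominated by the constant gap $\epsilon/2$. The only remaining bookkeeping is to verify the hypotheses~(\ref{tediouscondition1})--(\ref{tediouscondition3}) on $n$, which hold for all sufficiently large $n$ since $q(n)=o(n^{1+1/(h-1)})$ and $\delta$ is a fixed constant; this lets the whole argument run, completing the contradiction.
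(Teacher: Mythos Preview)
Your proposal is correct and follows essentially the same route as the paper: invoke Lemma~\ref{finalconsistencylemma} and Lemma~\ref{lemmasayingthatwefinallyhaveametricspace} to identify $A^d$ with $A^{\sf Adv}$, bound $\sum_v d(z,v)$ from below via Lemmas~\ref{theclosenesscentralityoftheoutputofthealgorithm}~and~\ref{finalconsistencylemma}, bound $\sum_v d(\hat{\alpha},v)$ from above via $d(\cdot,\cdot)\le\min\{d_{\cal G}(\cdot,\cdot),h\}$ and Lemma~\ref{closenesscentralityoftheoptimalpoint}, and then choose $\delta$ small and $n$ large. The paper's proof differs only cosmetically (it cites relation~(\ref{asimplerepresentationofthefinalmetric}) rather than equations~(\ref{thefinalmetriconqueriedpairs})--(\ref{thefinalmetriconnonqueriedpairs}) for the pointwise bound on $d$, and writes $\sum_v d(z,v)\ge\sum_{i\in I}d(a_i,b_i)$ as an inequality rather than your equality), but the logic is identical.
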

\begin{proof}
By
Lemma~\ref{finalconsistencylemma}
and line~17 of {\sf Adv},
{\sf Adv}
answers $A$'s queries consistently with
$d(\cdot,\cdot)$.
This
implies
that
$A^{\text{\sf Adv}}$
and $A^d$ have the same
output.\footnote{See, e.g.,~{\cite[Lemma~8]{Cha12}}.}
That is, $A^d$ outputs $z$.
By
Lemma~\ref{lemmasayingthatwefinallyhaveametricspace},
$([n],d)$ is a metric space.

By relation~(\ref{asimplerepresentationofthefinalmetric}),
$d(x,y)\le \min\{d_{\cal G}(x,y),h\}$ for all $x$, $y\in [n]$.
Therefore,
\begin{eqnarray}
\sum_{v\in [n]}\, d\left(\hat{\alpha},v\right)
\le
n\cdot\left(\frac{1}{2}+2h\delta^{h-1}
+\frac{h^2}{\delta}\cdot o(1)+h\delta
\right)
\label{qualityoftheoptimalpoint}
\end{eqnarray}
by
Lemma~\ref{closenesscentralityoftheoptimalpoint}.

Recall
that
$A$
does not repeat
queries.
So by
equation~(\ref{indexset}) and
Lemmas~\ref{finalmetricissymmetric}--\ref{finalmetricsatisfiesidentityofindiscernibles},
\begin{eqnarray}
\sum_{v\in [n]}\, d\left(z,v\right)
\ge\sum_{i\in I}\, d\left(a_i,b_i\right).
\nonumber
\end{eqnarray}
\footnote{In fact, this is an equality because
$A^{\sf Adv}$
will have queried
for the distances between its output and all other points when
halting.}
By
Lemmas~\ref{theclosenesscentralityoftheoutputofthealgorithm}~and~\ref{finalconsistencylemma},
\begin{eqnarray}
\sum_{i\in I}\, d\left(a_i,b_i\right)
\ge
n\cdot\left(h-2h\delta^{h-1}-o(1)-\delta\right).
\label{qualityoftheoutputofthealgorithmsecondpart}
\end{eqnarray}
By
inequalities~(\ref{qualityoftheoptimalpoint})--(\ref{qualityoftheoutputofthealgorithmsecondpart}),
\begin{eqnarray}
\frac{\sum_{v\in [n]}\, d\left(z,v\right)}{\sum_{v\in [n]}\,d\left(\hat{\alpha},v\right)}
\ge
\frac{h-2h\delta^{h-1}-o(1)-\delta}{1/2+2h\delta^{h-1}
+(h^2/\delta)\cdot o(1)+h\delta}.
\label{theapproximationratio}
\end{eqnarray}

Note that
all the derivations so far have been valid for all constants
$h\in\mathbb{Z}^+\setminus\{1\}$ and $\delta\in(0,1)$.
Take
$\delta=\delta(h,\epsilon)>0$ to be sufficiently small
and
$n$ to be sufficiently large
so that the right-hand side of
inequality~(\ref{theapproximationratio})
is greater than $2h-\epsilon$.\footnote{Alternatively, we may take
$$\delta=\delta(n)=\left(\frac{\max\{q(n),n\}}{n^{1+1/(h-1)}}\right)^{1/3}$$
from the beginning of this section.
Then, as
$q(n)=o(n^{1+1/(h-1)})$, the right-hand side of
inequality~(\ref{theapproximationratio}) is $2h-o(1)$, and
inequalities~(\ref{tediouscondition1})--(\ref{tediouscondition3}) remain
true for all sufficiently large $n$.}
Then
inequality~(\ref{theapproximationratio})
forbids
$z$, which is the common
output of $A^{\sf Adv}$ and $A^d$,
from being
a
$(2h-\epsilon)$-approximate $1$-median of $([n],d)$.
Note that $A$
can be any
deterministic
$o(n^{1+1/(h-1)})$-query algorithm from the beginning
of this section.
\end{proof}

Next, we use
Theorem~\ref{ourlowerbound} and
Fact~\ref{existingupperbound}
to
determine
the minimum value of $c\ge1$ such that {\sc metric $1$-median}
has a deterministic $O(n^{1+\epsilon})$-query
(resp., $O(n^{1+\epsilon})$-time) $c$-approximation
algorithm,
for each
constant
$\epsilon\in(0,1)$.


\begin{theorem}\label{maintheorem}
For
each constant
$\epsilon\in(0,1)$,
{\small 
\begin{eqnarray*}
&&
\min\left\{c\ge1
\mid
\text{{\sc metric $1$-median} has a deterministic
$O(n^{1+\epsilon})$-query
$c$-approx.\
alg.}
\right\}\\
&=&
\min\left\{c\ge1
\mid
\text{{\sc metric $1$-median} has a deterministic
$O(n^{1+\epsilon})$-time
$c$-approx.\
alg.}
\right\}\\
&=&2\left\lceil\frac{1}{\epsilon}\right\rceil.
\end{eqnarray*}
}
\comment{ 
{\footnotesize
$$
\inf\left\{c\ge1
\mid
\text{{\sc metric $1$-median} has a deterministic
$O(n^{1+\epsilon})$-query $c$-approximation
algorithm}
\right\}=2\left\lceil\frac{1}{\epsilon}\right\rceil.
$$
}
}
\end{theorem}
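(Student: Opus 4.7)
The plan is to derive both equalities from the matching ingredients already in hand: Fact~\ref{existingupperbound} supplies the upper bound and Theorem~\ref{ourlowerbound} supplies the lower bound. Set $h := \lceil 1/\epsilon\rceil$. Since $\epsilon\in(0,1)$, we have $1/\epsilon>1$, so $h\in\mathbb{Z}^+\setminus\{1\}$, and this is the only value of $h$ one needs to plug into both statements.

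For the upper bound, I will feed $h$ into Fact~\ref{existingupperbound}. Because $1/h\le\epsilon$, we get $n^{1+1/h}\le n^{1+\epsilon}$, so the $O(n^{1+1/h})$-time $(2h)$-approximation algorithm promised by the fact is automatically an $O(n^{1+\epsilon})$-time, hence $O(n^{1+\epsilon})$-query, $(2h)$-approximation algorithm. This places both of the minima in the theorem at most $2h=2\lceil 1/\epsilon\rceil$.

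For the lower bound, I will apply Theorem~\ref{ourlowerbound} with the same $h$. The arithmetic check is that $h-1<1/\epsilon$ strictly: if $1/\epsilon\notin\mathbb{Z}$, this is immediate from $h=\lceil 1/\epsilon\rceil$, and if $1/\epsilon\in\mathbb{Z}$, then $h=1/\epsilon$ and still $h-1<1/\epsilon$. Hence $\epsilon<1/(h-1)$, so $n^{1+\epsilon}=o(n^{1+1/(h-1)})$, and every deterministic $O(n^{1+\epsilon})$-query algorithm is in particular $o(n^{1+1/(h-1)})$-query. Theorem~\ref{ourlowerbound} then excludes any approximation ratio of the form $2h-\epsilon'$ with $\epsilon'>0$; equivalently, no $c<2h$ is attainable by a deterministic $O(n^{1+\epsilon})$-query algorithm. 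Since query complexity is a lower bound on time complexity, the same exclusion holds for deterministic $O(n^{1+\epsilon})$-time algorithms. Both minima are therefore at least $2h$, and combining with the upper bound gives the claimed value $2\lceil 1/\epsilon\rceil$.

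There is no substantial obstacle; the argument is entirely bookkeeping to align the exponent $1/(h-1)$ of Theorem~\ref{ourlowerbound} with the allowed $1+\epsilon$. The only place requiring a moment of care is the boundary case $1/\epsilon\in\mathbb{Z}$, where one must verify that $\lceil 1/\epsilon\rceil-1<1/\epsilon$ remains strict so that the little-$o$ containment $n^{1+\epsilon}=o(n^{1+1/(h-1)})$ actually holds; this is handled as above.
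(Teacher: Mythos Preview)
Your proposal is correct and follows essentially the same route as the paper's proof: both set $h=\lceil 1/\epsilon\rceil$, invoke Fact~\ref{existingupperbound} for the upper bound via $n^{1+1/h}=O(n^{1+\epsilon})$, invoke Theorem~\ref{ourlowerbound} for the lower bound via $n^{1+\epsilon}=o(n^{1+1/(h-1)})$, and transfer between query and time complexity in the obvious direction. Your explicit verification of the strict inequality $h-1<1/\epsilon$ (including the integer case) is a helpful elaboration the paper leaves implicit.
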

\begin{proof}
Take $h=\lceil1/\epsilon\rceil$; hence
$h\in\mathbb{Z}^+\setminus\{1\}$.
It is easy to verify that
$n^{1+\epsilon}=o(n^{1+1/(h-1)})$.
So by
Theorem~\ref{ourlowerbound},
{\sc metric $1$-median}
does not have a deterministic
$O(n^{1+\epsilon})$-query
$(2\lceil1/\epsilon\rceil-\epsilon')$-approximation algorithm
for any constant
$\epsilon'>0$.

Clearly,
$n^{1+1/h}=O(n^{1+\epsilon})$.
So by Fact~\ref{existingupperbound},
{\sc metric $1$-median}
has a deterministic
$O(n^{1+\epsilon})$-time
$(2\lceil1/\epsilon\rceil)$-approximation algorithm.

The above analyses remain valid with ``query'' and ``time''
exchanged because
every $O(n^{1+\epsilon})$-time algorithm
makes $O(n^{1+\epsilon})$ queries.
Consequently,
deterministic $O(n^{1+\epsilon})$-query (resp., $O(n^{1+\epsilon})$-time)
algorithms can be $(2\lceil1/\epsilon\rceil)$-approximate
but not $(2\lceil 1/\epsilon\rceil-\epsilon')$-approximate
for any constant $\epsilon'>0$.
%
\end{proof}

The brute-force
exact
algorithm
for {\sc metric $1$-median}
is well-known to
run in $O(n^2)$ time.
Therefore, there is no need to
extend
Theorem~\ref{maintheorem}
to the case of $\epsilon\ge1$.
On the other hand,
the following corollary deals with
the case of $\epsilon=0$.

\begin{corollary}
{\sc Metric $1$-median} does not
have a deterministic $O(n^{1+o(1)})$-query
(resp., $O(n^{1+o(1)})$-time)
$O(1)$-approximation algorithm.
\end{corollary}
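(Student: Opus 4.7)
The plan is to derive the corollary from Theorem~\ref{ourlowerbound} by a straightforward proof by contradiction, exploiting the fact that $n^{1+o(1)}$ is asymptotically smaller than $n^{1+1/(h-1)}$ for every fixed $h \in \mathbb{Z}^+ \setminus \{1\}$.

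First, I would assume for contradiction that there exists a deterministic $O(n^{1+o(1)})$-query $c$-approximation algorithm $A$ for \textsc{metric $1$-median}, where $c \geq 1$ is some constant. Next, I would choose a constant $h \in \mathbb{Z}^+ \setminus \{1\}$ large enough that $2h > c$; concretely, any $h \geq \max\{2, \lceil c/2 \rceil + 1\}$ works. Setting $\epsilon = 2h - c > 0$ (a positive constant), the hypothetical algorithm $A$ is, in particular, a deterministic $(2h - \epsilon)$-approximation algorithm.

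The next step is the asymptotic comparison. Since $h$ is fixed, $1/(h-1)$ is a positive constant, so $n^{o(1)} = o(n^{1/(h-1)})$, and therefore $n^{1+o(1)} = o(n^{1+1/(h-1)})$. Hence $A$'s query complexity is $o(n^{1+1/(h-1)})$, contradicting Theorem~\ref{ourlowerbound} with the chosen constants $h$ and $\epsilon$. The time-complexity version follows immediately because an $O(n^{1+o(1)})$-time algorithm issues at most $O(n^{1+o(1)})$ queries, reducing to the query case. There is no real obstacle here; the only mild subtlety is making sure that the ``$o(1)$'' in $n^{1+o(1)}$ is handled uniformly across all sufficiently large $n$ so that the asymptotic $n^{1+o(1)} = o(n^{1+1/(h-1)})$ is valid, which is immediate since $1/(h-1)$ is a fixed positive constant independent of $n$.
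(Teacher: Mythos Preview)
Your proposal is correct and follows essentially the same approach as the paper, which simply says ``Take $h\to\infty$ in Theorem~\ref{ourlowerbound}.'' Your version just spells out the details: given a hypothetical $c$-approximation algorithm, choose $h$ large enough that $2h>c$, and observe that $n^{1+o(1)}=o(n^{1+1/(h-1)})$ for fixed $h$; this is exactly what ``$h\to\infty$'' unpacks to.
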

\begin{proof}
Take
$h\to\infty$
in
Theorem~\ref{ourlowerbound}.
\end{proof}

\section*{Acknowledgments}

The author is supported in part by the Ministry of Science and Technology of
Taiwan under grant 103-2221-E-155-026-MY2.

\appendix
\section{Optimizing the hidden factors in Theorem~\ref{ourlowerbound}}

This appendix
discusses
how
the
bound of $o(n^{1+1/(h-1)})$
in Theorem~\ref{ourlowerbound}
hides
factors
dependent on $h$.
For all $i\in[q(n)]$,
\begin{eqnarray}
B_{i-1}
\stackrel{\text{def.}}{=}
\left\{v\in[n]\mid \text{deg}_{H^{(i-1)}}(v)
\ge \delta n^{1/(h-1)}-2
\right\}.
\label{badpointsinamiddlelevel}
\end{eqnarray}

\begin{lemma}\label{goodverticesdonothavetheiredgesremoved}
For all $i\in[q(n)]$ and distinct $u$, $v\in [n]\setminus(B_{i-1}\cup S)$,
we have
$(u,v)\in E_G^{(i-1)}$.
\end{lemma}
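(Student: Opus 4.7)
The plan is to adapt the argument of Lemma~\ref{edgesarepreservedbetweengoodvertices}, which handles essentially the same claim at the terminal stage $q(n)$, to work at an arbitrary earlier stage $i-1$. The key point is that $B_{i-1}$ is defined in terms of $H^{(i-1)}$, while the removal test at iteration $j$ is made against $H^{(j)}$; but by Lemma~\ref{chainofgraphs} the sequence $E_H^{(0)}\subseteq E_H^{(1)}\subseteq\cdots\subseteq E_H^{(i-1)}$ is monotone, so any bound on $\text{deg}_{H^{(i-1)}}$ automatically propagates backward to every $\text{deg}_{H^{(j)}}$ with $j\le i-1$.

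First I would observe that since $u,v\in [n]\setminus S$ are distinct, equation~(\ref{completegraphedgeset}) gives $(u,v)\in E_G^{(0)}$, handling the base case (and in particular the statement when $i=1$, since inequality~(\ref{tediouscondition2}) forces $B_0=\emptyset$). Next, combining $u,v\notin B_{i-1}$ with definition~(\ref{badpointsinamiddlelevel}) and the monotonicity just mentioned yields
\[
\text{deg}_{H^{(j)}}(u),\ \text{deg}_{H^{(j)}}(v)\ <\ \delta n^{1/(h-1)}-2
\]
for every $j\in\{0,1,\ldots,i-1\}$. Finally I would induct on $j$ to show $(u,v)\in E_G^{(j)}$ for each such $j$: if the $j$-th iteration of {\sf Adv} runs lines~11--14 then $E_G^{(j)}=E_G^{(j-1)}$ and the inductive hypothesis transfers directly; if instead it runs lines~5--9, then by line~8 only edges with at least one endpoint of $H^{(j)}$-degree $\ge\delta n^{1/(h-1)}-2$ are removed, and the displayed bound on $u$ and $v$ guarantees $(u,v)$ survives.

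The argument is essentially a direct transcription of the proof of Lemma~\ref{edgesarepreservedbetweengoodvertices} with ``$q(n)$'' replaced by ``$i-1$,'' so I do not expect any genuine obstacle; the only subtlety is confirming that the ``bad set'' at time $i-1$ dominates all earlier bad sets at times $j\le i-1$, and this is immediate from Lemma~\ref{chainofgraphs}.
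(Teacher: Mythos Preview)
Your proposal is correct and follows essentially the same approach as the paper's own proof: establish the base case $(u,v)\in E_G^{(0)}$ from equation~(\ref{completegraphedgeset}), use Lemma~\ref{chainofgraphs} together with definition~(\ref{badpointsinamiddlelevel}) to propagate the degree bound on $u$ and $v$ back to every $H^{(j)}$ with $j\le i-1$, and then induct on $j$ using lines~8 and~13 of {\sf Adv}. The paper's proof is slightly terser (it does not separately single out the $i=1$ case or the observation that $B_0=\emptyset$), but the logical structure is identical.
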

\begin{proof}
As $u$, $v\in[n]\setminus B_{i-1}$,
\begin{eqnarray*}
\text{deg}_{H^{(j)}}(u)&<&\delta n^{1/(h-1)}-2,\\
\text{deg}_{H^{(j)}}(v)&<&\delta n^{1/(h-1)}-2
\end{eqnarray*}
for all $j\in\{0,1,\ldots,i-1\}$
by equation~(\ref{badpointsinamiddlelevel})
and Lemma~\ref{chainofgraphs}.
So by lines~8~and~13 of {\sf Adv},
$(u,v)\in E_G^{(j)}$ if $(u,v)\in E_G^{(j-1)}$,
for all $j\in[i-1]$.
By equation~(\ref{completegraphedgeset}),
$(u,v)\in E_G^{(0)}$.
The proof is complete by mathematical induction.
\end{proof}

\begin{lemma}\label{shortestpathiscomposedmainlyofbadpoints}
For each $i\in[q(n)]$ such that the $i$th iteration
of the loop of {\sf Adv} runs lines~5--9,
$P_i$ in line~5 does not have two
non-consecutive
vertices in $[n]\setminus(B_{i-1}\cup S)$.
\end{lemma}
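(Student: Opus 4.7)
The plan is to prove the lemma by contradiction: assume $P_i$ has two non-consecutive vertices $u,v \in [n]\setminus(B_{i-1}\cup S)$, and then use Lemma~\ref{goodverticesdonothavetheiredgesremoved} to shortcut $P_i$, contradicting its being a shortest $a_i$-$b_i$ path in $G^{(i-1)}$.

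More concretely, I would first fix an index $i\in[q(n)]$ such that the $i$th iteration runs lines~5--9 (so $P_i$ is well-defined as a shortest $a_i$-$b_i$ path in $G^{(i-1)}$). Suppose for contradiction that there exist two non-consecutive vertices $u,v$ of $P_i$ with $u,v\in[n]\setminus(B_{i-1}\cup S)$. In particular $u\neq v$. Since $u$ and $v$ are non-consecutive on $P_i$, the subpath of $P_i$ from $u$ to $v$ uses at least two edges.

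Next, I would invoke Lemma~\ref{goodverticesdonothavetheiredgesremoved} on the distinct vertices $u,v\in[n]\setminus(B_{i-1}\cup S)$ to conclude that $(u,v)\in E_G^{(i-1)}$. Then I would replace the subpath of $P_i$ from $u$ to $v$ by the single edge $(u,v)$, obtaining an $a_i$-$b_i$ walk $P_i'$ in $G^{(i-1)}$ whose number of edges is strictly less than that of $P_i$ (because we removed at least two edges and added only one). Extracting a simple $a_i$-$b_i$ path from $P_i'$ yields an $a_i$-$b_i$ path in $G^{(i-1)}$ with strictly fewer edges than $P_i$, contradicting the choice of $P_i$ as a shortest $a_i$-$b_i$ path in the unweighted graph $G^{(i-1)}$.

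I expect no serious obstacle here: the only mild care needed is to confirm that ``non-consecutive'' really forces the $u$-$v$ subpath of $P_i$ to contain at least two edges (it does, by definition, since consecutive would mean joined by a single edge), and to note that shortcutting a walk and then simplifying never increases the number of edges, so the contradiction with shortest-path minimality is immediate.
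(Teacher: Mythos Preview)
Your proposal is correct and follows essentially the same approach as the paper: both argue that if two non-consecutive vertices of $P_i$ lay in $[n]\setminus(B_{i-1}\cup S)$, then Lemma~\ref{goodverticesdonothavetheiredgesremoved} would supply an edge between them in $G^{(i-1)}$, contradicting the shortest-path property of $P_i$. The paper simply states the standard fact that non-consecutive vertices on a shortest path cannot be adjacent, whereas you spell out the shortcutting argument explicitly.
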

\begin{proof}
By
line~5 of {\sf Adv},
two
non-consecutive
vertices on $P_i$
are not connected
by an edge in $E_G^{(i-1)}$.
This and
Lemma~\ref{goodverticesdonothavetheiredgesremoved}
complete the proof.
\end{proof}

\begin{lemma}\label{forabadvertextheincidentedgesareremoved}
For all $i\in[q(n)]$ and $v\in B_{i-1}$,
$$N_{G^{(i-1)}}(v)\subseteq N_{H^{(i-1)}}(v).$$
\end{lemma}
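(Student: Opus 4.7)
The plan is to argue that once a vertex $v$ reaches the high-degree threshold in $H$, any edge incident to $v$ that was still alive in $G$ at that moment must already have been placed into $H$, for otherwise line~8 of \textsf{Adv} would have deleted it from $G$ forever. I would fix $i \in [q(n)]$ and $v \in B_{i-1}$, then pick an arbitrary $u \in N_{G^{(i-1)}}(v)$ and show $(u,v) \in E_H^{(i-1)}$.

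First, let $j$ be the smallest index in $\{0,1,\ldots,i-1\}$ for which $\text{deg}_{H^{(j)}}(v) \ge \delta n^{1/(h-1)} - 2$. By equations~(\ref{initiallymarkededgeset})--(\ref{initiallymarkedgraph}) the graph $H^{(0)}$ is edgeless, so together with inequality~(\ref{tediouscondition2}) we get $j \ge 1$; and $v \in B_{i-1}$ guarantees $j \le i-1$. The jump $\text{deg}_{H^{(j-1)}}(v) < \delta n^{1/(h-1)} - 2 \le \text{deg}_{H^{(j)}}(v)$ forces $H^{(j-1)} \neq H^{(j)}$, so the $j$th iteration of \textsf{Adv}'s loop runs lines~5--9 rather than lines~11--14, and in particular line~8 is executed.

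Now I would use Lemma~\ref{chainofgraphs} (monotonicity of the $E_G$'s) to observe that $(u,v) \in E_G^{(i-1)} \subseteq E_G^{(j)} \subseteq E_G^{(j-1)}$. The removal rule at line~8 deletes every edge $(x,y) \in E_G^{(j-1)} \setminus E_H^{(j)}$ whose endpoint $x$ or $y$ has $H^{(j)}$-degree at least $\delta n^{1/(h-1)} - 2$. Because $\text{deg}_{H^{(j)}}(v) \ge \delta n^{1/(h-1)} - 2$, if $(u,v) \notin E_H^{(j)}$ then $(u,v)$ would be removed at line~8 and hence fail to lie in $E_G^{(j)}$, contradicting $(u,v) \in E_G^{(j)}$. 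Therefore $(u,v) \in E_H^{(j)}$, and another application of Lemma~\ref{chainofgraphs} (monotonicity of the $E_H$'s, together with $j \le i-1$) yields $(u,v) \in E_H^{(i-1)}$, i.e., $u \in N_{H^{(i-1)}}(v)$, completing the proof.

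The only delicate point is locating the correct ``trigger'' iteration $j$; once that index is pinned down, everything else is a direct reading of the deletion rule on line~8 together with the two monotonicity statements already packaged in Lemma~\ref{chainofgraphs}. I do not foresee a serious obstacle beyond carefully handling the edge case $j = 1$, which is why the initial emptiness of $E_H^{(0)}$ and inequality~(\ref{tediouscondition2}) are invoked explicitly to rule out $j = 0$.
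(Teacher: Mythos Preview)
Your argument is correct and follows essentially the same approach as the paper: locate the first iteration $j$ at which $v$'s $H$-degree crosses the threshold, observe that line~8 then removes every edge at $v$ not already in $E_H^{(j)}$, and finish with the monotonicity statements of Lemma~\ref{chainofgraphs}. The only cosmetic difference is that you argue element-wise by fixing $u\in N_{G^{(i-1)}}(v)$, whereas the paper phrases the same step as the set identity $N_{G^{(j)}}(v)=N_{H^{(j)}}(v)$.
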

\begin{proof}
By equation~(\ref{badpointsinamiddlelevel}),
$$
\text{deg}_{H^{(i-1)}}(v)\ge\delta n^{1/(h-1)}-2.
$$
Clearly,
$$\text{\rm deg}_{H^{(0)}}(v)
\stackrel{\text{(\ref{initiallymarkededgeset})}}{=}0
\stackrel{\text{(\ref{tediouscondition2})}}{<}\delta n^{1/(h-1)}-2.$$
So there exists $j\in[i-1]$ satisfying
\begin{eqnarray}
\text{deg}_{H^{(j-1)}}(v)&<&\delta n^{1/(h-1)}-2,
\label{thenumberofmarkedincidentedgesbeforethejump}\\
\text{deg}_{H^{(j)}}(v)&\ge&\delta n^{1/(h-1)}-2.\label{thejustjumpednumberofmarkededges}
\end{eqnarray}

Clearly,
\begin{eqnarray}
N_{G^{(j)}}(v)
=\left\{u\in[n]\mid (u,v)\in E_G^{(j)}
\right\}.
\label{straightforwarddefinitionofneighborset}
\end{eqnarray}
As $H^{(j-1)}\neq H^{(j)}$ by
inequalities~(\ref{thenumberofmarkedincidentedgesbeforethejump})--(\ref{thejustjumpednumberofmarkededges}),
the $j$th iteration of the loop of {\sf Adv}
runs lines~5--9 but not 11--14.
By inequality~(\ref{thejustjumpednumberofmarkededges})
and line~8 of {\sf Adv},
\begin{eqnarray}
\left\{u\in[n]\mid (u,v)\in E_G^{(j)}
\right\}
=\left\{u\in[n]\mid (u,v)\in E_G^{(j-1)}\setminus
\left(E_G^{(j-1)}\setminus E_H^{(j)}\right)
\right\}.
\label{thesetofedgesthatremaininGj}
\end{eqnarray}
Equations~(\ref{straightforwarddefinitionofneighborset})--(\ref{thesetofedgesthatremaininGj})
and Lemma~\ref{chainofgraphs}
give
$$N_{G^{(j)}}(v)=N_{H^{(j)}}(v).$$
This and Lemma~\ref{chainofgraphs} complete the proof.
\end{proof}

\begin{lemma}\label{markingatmostoneedgeeachround}
For all $i\in[q(n)]$,
\begin{eqnarray}
\left|E_H^{(i)}\right|
\le
\left|E_H^{(i-1)}\right|+1.
\nonumber
\end{eqnarray}
\end{lemma}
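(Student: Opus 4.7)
The plan is to reduce immediately to the case where the $i$th iteration of the loop of {\sf Adv} runs lines~5--9. If instead it runs lines~11--14, then $E_H^{(i)}=E_H^{(i-1)}$ by line~11, and the desired bound is trivial. So I would assume lines~5--9 are executed and aim to show that at most one edge of $P_i$ fails to lie in $E_H^{(i-1)}$; combined with line~6, this gives $|E_H^{(i)}|\le|E_H^{(i-1)}|+1$.

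The first observation is that every vertex of $P_i$ lies in $[n]\setminus S$. Indeed, every edge of $E_G^{(0)}$ avoids $S$ by equation~(\ref{completegraphedgeset}), and $E_G^{(i-1)}\subseteq E_G^{(0)}$ by Lemma~\ref{chainofgraphs}. Hence every vertex on $P_i$ lies in either $B_{i-1}$ or in $[n]\setminus(B_{i-1}\cup S)$. I would then classify the edges of $P_i$ according to this dichotomy. If an edge $(u,v)$ of $P_i$ has some endpoint $v\in B_{i-1}$, then Lemma~\ref{forabadvertextheincidentedgesareremoved} gives $u\in N_{G^{(i-1)}}(v)\subseteq N_{H^{(i-1)}}(v)$, so $(u,v)\in E_H^{(i-1)}$ already and the edge contributes nothing new to $E_H^{(i)}$.

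It remains to bound the number of edges of $P_i$ whose two endpoints both lie in $[n]\setminus(B_{i-1}\cup S)$. By Lemma~\ref{shortestpathiscomposedmainlyofbadpoints}, any two vertices of $P_i$ in $[n]\setminus(B_{i-1}\cup S)$ must be consecutive on $P_i$. Since $P_i$ is simple, no three distinct vertices of $P_i$ can be pairwise consecutive (any two nonadjacent vertices of a simple path are at path-distance at least two and hence non-consecutive). Therefore at most two vertices of $P_i$ lie in $[n]\setminus(B_{i-1}\cup S)$, and, if there are exactly two, they share a single edge of $P_i$. In particular, at most one edge of $P_i$ has both endpoints in $[n]\setminus(B_{i-1}\cup S)$. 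Combined with the previous paragraph, this yields the claim.

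The main obstacle, if there is one, is purely bookkeeping: one must be careful that $P_i$'s vertices truly avoid $S$ (which is what lets the dichotomy $\{B_{i-1},\,[n]\setminus(B_{i-1}\cup S)\}$ partition the vertex set of $P_i$), and that the ``pairwise consecutive on a simple path'' condition from Lemma~\ref{shortestpathiscomposedmainlyofbadpoints} genuinely restricts us to at most two good vertices. All of the substantive structural content has already been absorbed into Lemmas~\ref{forabadvertextheincidentedgesareremoved} and~\ref{shortestpathiscomposedmainlyofbadpoints}, so the proof proper is essentially a case split plus a counting argument.
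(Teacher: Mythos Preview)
Your proposal is correct and follows essentially the same route as the paper: reduce to the case where lines~5--9 run, use Lemma~\ref{forabadvertextheincidentedgesareremoved} to show that every edge of $P_i$ touching $B_{i-1}$ already lies in $E_H^{(i-1)}$, and use Lemma~\ref{shortestpathiscomposedmainlyofbadpoints} to conclude that at most one edge of $P_i$ has both endpoints in $[n]\setminus(B_{i-1}\cup S)$. The only cosmetic difference is that you dispose of $S$ up front at the vertex level (noting all of $P_i$'s vertices avoid $S$), whereas the paper folds this into a case analysis on edges; the content is identical.
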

\begin{proof}
Clearly,
we may assume that the $i$th iteration of the loop of {\sf Adv}
runs lines~5--9 but not 11--14.
By line~6,
we only need
to show that
\begin{eqnarray}
\left|\left\{e\mid\left(\text{$e$ is an edge on $P_i$}\right)
\land \left(e\notin E_H^{(i-1)}\right)\right\}\right|\le1.
\label{thetransformedgoalforprovingthatonlyoneedgeisnewlymarked}
\end{eqnarray}
By Lemma~\ref{shortestpathiscomposedmainlyofbadpoints},
$P_i$ in line~5 has at most one edge in $([n]\setminus(B_{i-1}\cup S))^2$.
So, to prove
inequality~(\ref{thetransformedgoalforprovingthatonlyoneedgeisnewlymarked}),
it suffices to show that
each edge $(u,v)$
on $P_i$ with
$(u,v)\notin ([n]\setminus(B_{i-1}\cup S))^2$
satisfies $(u,v)\in E_H^{(i-1)}$,
as done below:
\begin{enumerate}[{Case}~1:]
\item $\{u,v\}\cap S\neq\emptyset$.
By equation~(\ref{completegraphedgeset})
and Lemma~\ref{chainofgraphs},
$(u,v)\notin E_G^{(i-1)}$.
Consequently, $P_i$ has an edge not in $E_G^{(i-1)}$,
contradicting line~5.
\item $\{u,v\}\cap B_{i-1}\neq\emptyset$.
By symmetry, assume $v\in B_{i-1}$.
So by Lemma~\ref{forabadvertextheincidentedgesareremoved},
$N_{G^{(i-1)}}(v)\subseteq N_{H^{(i-1)}}(v)$.
Because $P_i$ is a path in $G^{(i-1)}$ by line~5 and
$(u,v)$ is on $P_i$,
$u
\in N_{G^{(i-1)}}(v)$.
In summary,
$u\in N_{H^{(i-1)}}(v)$.
I.e.,
$(u,v)\in E_H^{(i-1)}$.
\end{enumerate}
\end{proof}

The following improvement over
Lemma~\ref{boundonthenumberofmarkededges}
is immediate from
equation~(\ref{initiallymarkededgeset})
and Lemma~\ref{markingatmostoneedgeeachround}.

\begin{lemma}\label{strongboundonthenumberofmarkededges}
$$
\left|E_H^{(q(n))}\right|
\le q(n).$$
\end{lemma}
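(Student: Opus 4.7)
The plan is to use a straightforward telescoping argument that combines the initial condition with the per-iteration bound already established. The target inequality $|E_H^{(q(n))}| \le q(n)$ is essentially a direct consequence of two ingredients already available in the paper.

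First, I would record the base case: equation~(\ref{initiallymarkededgeset}) defines $E_H^{(0)} = \emptyset$, so $|E_H^{(0)}| = 0$. Next, I would invoke Lemma~\ref{markingatmostoneedgeeachround}, which guarantees $|E_H^{(i)}| \le |E_H^{(i-1)}| + 1$ for every $i \in [q(n)]$. This is the key improvement over Lemma~\ref{boundonthenumberofmarkededges}: whereas the earlier bound only used that $P_i$ has at most $h$ edges (and charged the full path length to the iteration), Lemma~\ref{markingatmostoneedgeeachround} exploits the fact that $P_i$ can contain at most one edge whose endpoints both lie in $[n]\setminus(B_{i-1}\cup S)$ and that every other edge on $P_i$ was already marked.

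Then I would telescope:
\begin{eqnarray*}
|E_H^{(q(n))}| - |E_H^{(0)}| = \sum_{i=1}^{q(n)} \bigl(|E_H^{(i)}| - |E_H^{(i-1)}|\bigr) \le \sum_{i=1}^{q(n)} 1 = q(n),
\end{eqnarray*}
and combine with $|E_H^{(0)}| = 0$ to conclude $|E_H^{(q(n))}| \le q(n)$.

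There is no real obstacle here, since all the technical content has already been absorbed into Lemma~\ref{markingatmostoneedgeeachround}. The proof amounts to a single telescoping step, mirroring the structure of the proof of Lemma~\ref{boundonthenumberofmarkededges} but with the per-iteration increment improved from $h$ to $1$. Accordingly, the write-up should be short, explicitly citing equation~(\ref{initiallymarkededgeset}) for the base case and Lemma~\ref{markingatmostoneedgeeachround} for the increment bound.
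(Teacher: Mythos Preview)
Your proposal is correct and matches the paper's approach exactly: the paper states that the lemma is immediate from equation~(\ref{initiallymarkededgeset}) and Lemma~\ref{markingatmostoneedgeeachround}, which is precisely the telescoping argument you describe.
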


Assuming
$100\le h=o(n^{1/(h-1)})$,
the following modifications
to this paper show that
the bound of $o(n^{1+1/(h-1)})$
in Theorem~\ref{ourlowerbound}
depends
on $h$
as $o(n^{1+1/(h-1)}/h)$:
\begin{enumerate}[(1)]
\item\label{basicsettingfortheparameterswithlargeh} Take
\begin{eqnarray*}
q(n)&=&o\left(\frac{n^{1+1/(h-1)}}{h}\right),\\
\delta
&=&
h\cdot\frac{\max\{q(n),n\}}
{n^{1+1/(h-1)}},\\
\lambda&=&\delta^{h/8},\\
S&=&[\lfloor\lambda n
\rfloor].
\end{eqnarray*}
\item Replace ``$\delta$''
by ``$\sqrt{\delta}$''
in
inequality~(\ref{tediouscondition2}).
\item Replace ``$\delta$''
by $1/\delta^{h/4}$
in
inequality~(\ref{tediouscondition3}).
\item Replace
the two occurrences of
``$\delta$''
by ``$\sqrt{\delta}$''
in line~8 of {\sf Adv}.
\item Replace
``$\delta$''
by ``$1/\delta^{h/4}$''
in line~17 of {\sf Adv}.
\item Replace all
occurrences of ``$\delta$''
by ``$\sqrt{\delta}$''
in Lemma~\ref{boundonnumberofincidentedges}
and its proof.
\item Replace all
occurrences of ``$\delta$''
by ``$\sqrt{\delta}$''
in Lemma~\ref{averagedistanceishighinthemarkedgraph}
and its proof.
\item Replace
``$\delta n^{1/(h-1)}$''
and
``$h-2h\delta^{h-1}-o(1)-\delta$''
by ``$n^{1/(h-1)}/\delta^{h/4}$''
and
``$h
-2h{\sqrt{\delta}}^{h-1}
-o(1)
-\lambda/2
-1/(2\delta^{h/4}n^{1-1/(h-1)})$,''
respectively,
in
the statement of
Lemma~\ref{theclosenesscentralityoftheoutputofthealgorithm}.
\item Replace all
occurrences of
``$\delta n^{1/(h-1)}$,''
``$2\delta^{h-1} n$''
and ``$\lfloor\delta n\rfloor$''
by ``$n^{1/(h-1)}/\delta^{h/4}$,''
``$2{\sqrt{\delta}}^{h-1} n$''
and ``$\lfloor\lambda n\rfloor$,''
respectively,
in the proof of
Lemma~\ref{theclosenesscentralityoftheoutputofthealgorithm}.
\item Replace all
occurrences of
``$\delta n^{1/(h-1)}$,''
``$(h/\delta)\cdot o(n)$''
and
``Lemma~\ref{boundonthenumberofmarkededges}''
by ``$\sqrt{\delta}\,
n^{1/(h-1)}$,''
``$(1/\sqrt{\delta})\cdot O(q(n)/n^{1/(h-1)})$''
and
``Lemma~\ref{strongboundonthenumberofmarkededges},''
respectively,
in
Lemma~\ref{fewverticeshavemanymarkedincidentedges}
and its proof.
\item That
$\hat{\alpha}$ is well-defined in
equation~(\ref{theoptimalpoint})
follows from $|S|\ge2$, which
holds
for all sufficiently large $n$
by
item~(\ref{basicsettingfortheparameterswithlargeh})
and $h\ge100$.
\item Replace all occurrences
of ``$\delta$''
by ``$1/\delta^{h/4}$''
in
Lemma~\ref{theoptimalpointisinvolvedinafewqueries}
and its proof.
\item Replace
``$\delta$''
by ``$\sqrt{\delta}$''
in
equation~(\ref{badset}).
\item Replace ``$\delta^{h-1}$''
by ``$\delta^{h/4-1}$''
in the statement of
Lemma~\ref{sizeofthenonlastlayers}.
\item Replace
all occurrences of ``$\delta$''
by ``$\sqrt{\delta}$''
and ``$1/\delta^{h/4}$,''
respectively,
in the first and the second
paragraphs of the proof of
Lemma~\ref{sizeofthenonlastlayers}.
\item Replace
``$1-2\delta^{h-1}
-(h/\delta)\cdot o(1)-\delta$''
by
``$1-2\delta^{h/4-1}
-(1/\sqrt{\delta})\cdot O(q(n)/n^{1+1/(h-1)})
-\lambda
$''
in the statement of
Lemma~\ref{manysafepointstohavesmalldistances}.
\item Replace all occurrences
of ``$(h/\delta)\cdot o(n)$,''
``$\lfloor\delta n\rfloor$''
and ``$\delta^{h-1}$''
by ``$(1/\sqrt{\delta})\cdot O(q(n)/n^{1/(h-1)})$,''
``$\lfloor\lambda n\rfloor$''
and ``$\delta^{h/4-1}$,''
respectively,
in the proof of
Lemma~\ref{manysafepointstohavesmalldistances}.
\item Replace ``$\delta^{h-1}$,''
``$(h^2/\delta)\cdot o(1)$'' and ``$h\delta$''
by ``$\delta^{h/4-1}$,''
``$(h/\sqrt{\delta})\cdot O(q(n)/n^{1+1/(h-1)})$''
and ``$h\lambda$,'' respectively,
in the statement of
Lemma~\ref{closenesscentralityoftheoptimalpoint}.
\item Replace ``$\delta$''
by ``$1/\delta^{h/4}$''
in the statement of
Lemma~\ref{theadjustmenttermforquerieddistancesinvolvingtheoptimalpoint}.
\item Replace all occurrences of ``$\delta$''
by ``$\sqrt{\delta}$''
in the proof of
Lemma~\ref{edgesarepreservedbetweengoodvertices}.
\item Replace the two
occurrences of ``$\delta$''
by ``$1/\delta^{h/4}$''
in the statement of
Lemma~\ref{shortestpathsinfinalgraphdonotneededgesincidenttotheoptimalpoint}.
\item Replace ``$\delta$''
by ``$1/\delta^{h/4}$''
in
equation~(\ref{thefinalmetriconqueriedpairs}).
\item Replace ``$\delta$''
by ``$1/\delta^{h/4}$''
in the statement of
Lemma~\ref{finalconsistencylemma}.
\item Replace ``$\delta^{h-1}$,''
``$(h^2/\delta)\cdot o(1)$'' and ``$h\delta$''
by ``$\delta^{h/4-1}$,''
``$(h/\sqrt{\delta})\cdot O(q(n)/n^{1+1/(h-1)})$''
and ``$h\lambda$,'' respectively,
in
inequality~(\ref{qualityoftheoptimalpoint}).
\item Replace
``$h-2h\delta^{h-1}-o(1)-\delta$''
by
``$h-2h{\sqrt{\delta}}^{h-1}-o(1)-\lambda/2-1/(2\delta^{h/4}n^{1-1/(h-1)})$''
in the right-hand side of
inequality~(\ref{qualityoftheoutputofthealgorithmsecondpart}).
\item Replace
the numerator and the denominator
on the right-hand side of
inequality~(\ref{theapproximationratio})
by
``$h-2h{\sqrt{\delta}}^{h-1}
-o(1)-\lambda/2-1/(2\delta^{h/4}n^{1-1/(h-1)})$''
and
``$1/2+2h\delta^{h/4-1}
+(h/\sqrt{\delta})\cdot O(q(n)/n^{1+1/(h-1)})
+h\lambda$,''
respectively.
\item
Verify
that the
right-hand side of
inequality~(\ref{theapproximationratio})
is $2h-o(1)$.
To see this,
use
item~(\ref{basicsettingfortheparameterswithlargeh})
and $100\le h=o(n^{1/(h-1)})$
to
verify that
$\delta=o(1)$,
$\max_{x\ge 1}\, x\cdot \delta^{x/8}=O(\delta)=o(1)$ (which requires
elementary calculus and reveals that
$h\sqrt{\delta}^{h-1}=o(1)$, $h\delta^{h/4-1}=o(1)$
and $h\lambda=h\delta^{h/8}=o(1)$),
$\lambda=o(1)$,
$\delta^{h/4}\ge 1/n^{h/(4(h-1))}$,
$\delta^{h/4}\cdot n^{1-1/(h-1)}=n^{\Omega(1)}$,
$\sqrt{\delta}\ge \sqrt{h\cdot q(n)/n^{1+1/(h-1)}}$ and
$\sqrt{h\cdot q(n)/n^{1+1/(h-1)}}=o(1)$.
\item Replace all
occurrences of ``$\delta$''
by ``$\sqrt{\delta}$''
in
equation~(\ref{badpointsinamiddlelevel})
as well as
in the proofs
of
Lemmas~\ref{goodverticesdonothavetheiredgesremoved}~and~\ref{forabadvertextheincidentedgesareremoved}.
\end{enumerate}

\bibliographystyle{plain}
\bibliography{tradeoffmedian}

\noindent

\end{document}